\def\@ACM@checkaffil{
    \if@ACM@instpresent\else
    \ClassWarningNoLine{\@classname}{No institution present for an affiliation}%
    \fi
    \if@ACM@citypresent\else
    \ClassWarningNoLine{\@classname}{No city present for an affiliation}%
    \fi
    \if@ACM@countrypresent\else
        \ClassWarningNoLine{\@classname}{No country present for an affiliation}%
    \fi
}
\newcommand{\hide}[1]{}
\definecolor{mygreen}{rgb}{0,0.4,0}
\definecolor{myblue}{rgb}{0,0,0.7}
\definecolor{myred}{rgb}{0.7,0,0}
\definecolor{mygray}{rgb}{0.5,0.5,0.5}
\definecolor{mymauve}{rgb}{0.58,0,0.82}
\newcommand{\cb}{\color{myblue}}
\newcommand*{\mlstinline}[1]{\mbox{\lstinline[mathescape]|#1|}}
\newcommand*{\mlstinlin}[1]{\mbox{\lstinline|#1|}}
\newcommand{\RR}{\mathbb{R}}
\newcommand{\NN}{\mathbb{N}}
\newcommand{\dd}{\mathrm{d}}
\newcommand{\ProbKer}{\mathbf{ProbKer}}
\newcommand{\MeasKer}{\mathbf{MeasKer}}
\newcommand{\Qbs}{\mathbf{Qbs}}
  \providecommand\BibTeX{{%
    \normalfont B\kern-0.5em{\scshape i\kern-0.25em b}\kern-0.8em\TeX}}}
\begin{document}

\title{Affine~Monads~and~Lazy~Structures for~Bayesian~Programming}

\author{Swaraj Dash}
\author{Younesse Kaddar}
\author{Hugo Paquet}
\author{Sam Staton}
\affiliation{ Univ.~Oxford%
\hide{\country{UK}}
}

\renewcommand{\shortauthors}{Swaraj Dash, Younesse Kaddar, Hugo Paquet, and Sam Staton}

\begin{abstract}
  We show that streams and lazy data structures are a natural idiom for programming with infinite-dimensional Bayesian methods such as Poisson processes, Gaussian processes, jump processes, Dirichlet processes, and Beta processes. The crucial semantic idea, inspired by developments in synthetic probability theory, is to work with two separate monads: an affine monad of probability, which supports laziness, and a commutative, non-affine monad of measures, which does not. (Affine means that $T(1)\cong 1$.) We show that the separation is important from a decidability perspective, and that the recent model of quasi-Borel spaces supports these two monads.

  To perform Bayesian inference with these examples, we introduce new inference methods that are specially adapted to laziness; they are proven correct by reference to the Metropolis-Hastings-Green method. Our theoretical development is implemented as a Haskell library, LazyPPL. 
\end{abstract}

\hide{\begin{CCSXML}
  <ccs2012>
     <concept>
         <concept_id>10003752.10010124</concept_id>
         <concept_desc>Theory of computation~Semantics and reasoning</concept_desc>
         <concept_significance>500</concept_significance>
         </concept>
     <concept>
         <concept_id>10002950.10003648</concept_id>
         <concept_desc>Mathematics of computing~Probability and statistics</concept_desc>
         <concept_significance>500</concept_significance>
         </concept>
   </ccs2012>
\end{CCSXML}

\ccsdesc[500]{Theory of computation~Semantics and reasoning}
\ccsdesc[500]{Mathematics of computing~Probability and statistics}

\keywords{probabilistic programming, quasi-Borel spaces, synthetic measure theory, Bayesian inference, nonparametric statistics, categorical semantics, commutative monads, laziness, Haskell.}
}
\maketitle

\section{Introduction}
\label{sec:intro}

Bayesian statistical models often naturally involve infinite-dimensional spaces, and in this paper we show that these can be dealt with programmatically using lazy structures.
To show this, we provide a monadic metalanguage for probabilistic programming that admits streams and other lazy data structures (\S\ref{sec:interface}).
The general key point is that an `affine' monad can support lazy programming, but a non-affine one cannot (see \S\ref{sec:intro:affine} and Theorems~\ref{thm:iid} and~\ref{prop:undecidable}). For probabilistic programming, we thus consider two monads: an affine monad of probability, and a non-affine monad of measures. 
We demonstrate the expressive compositional power of this through a wealth of examples (\S\ref{sec:intro:poisson}, \S\ref{sec:easyexamples}, \S\ref{sec:randfun}; \cite{nonanongit}). We show that these are feasible by giving new Metropolis-Hastings inference algorithms (\S\ref{sec:intro:mhg}, \S\ref{sec:msmh}, \S\ref{sec:single-site}) that work in a lazy setting. Our development is motivated by new compositional methods in categorical measure theory, such as quasi-Borel spaces: in Section~\ref{sec:monads} we give a new formulation of this together with an implementation as a Haskell library, LazyPPL~\cite{nonanongit}.

\subsection{Monte Carlo methods, Bayesian models, and unnormalized measures}
\label{sec:intro:mc}
It is often said that Monte Carlo methods are the reason for the explosion in practical Bayesian statistics over the past 30 years~(e.g.~\cite[\S1.1]{geyer}, \cite[\S1.4]{gv-bnp}). One account of Monte Carlo methods is that they are methods for sampling from a probability distribution that is specified as an \emph{unnormalized} measure, that is, a measure that is only specified up to an unknown normalizing constant (e.g.~\cite{geyer,tierney-mcposterior}). This matches the three primitive aspects of Bayesian statistics, which are:
\begin{itemize}
\item prior --- a probability measure;
\item likelihood --- often expressed by a density, or weight, contributing to the unnormalized aspect of the measure;
\item posterior --- a probability measure that is proportional to the product of the likelihood and the prior, which is what the Monte Carlo method provides samples from.
\end{itemize}
Our aim here is to explore the role of laziness in building and composing these measures. Our motivation comes from two directions: practical and theoretical.

On the practical side, probabilistic programming languages for Bayesian modelling (such as Bugs~\cite{bugs}, Church~\cite{goodman:church}, Stan~\cite{stan} and others) can often be regarded as programming languages describing unnormalized measures, that are endowed with efficient Monte Carlo samplers. Many focus on finite dimensional models, but some allow unbounded  dimension, notably Church, which is a key starting point for our work. (See~\S\ref{sec:relatedwork} for a fuller discussion of prior work.)

On the theoretical side, researchers have recently proposed categorical or synthetic accounts of probability theory~\cite{cho-jacobs,fritz} and measure theory~\cite{kock,scibior,qbs}, with the aim of developing compositional structures based on commutative and affine monads and monoidal categories. There are various aims in that work, some axiomatic, and some seeking to sidestep cumbersome issues with measure theory, such as the absence of function spaces and of a strong monad of measures (see \S\ref{sec:monads}, where we treat this).
In this way, probabilistic programming can be viewed as a \emph{practical measure theory}, with compositionality built in, and where types are spaces, and programs are suitably good measures on the spaces.
By exploring fully expressive probabilistic programming languages, we are exploring the abstract and higher-order spaces of synthetic measure theory.

\subsection{Practical illustration: the Poisson process}
\label{sec:intro:poisson}
To illustrate further on the practical side, we briefly consider a `non-parametric' model now: the one-dimensional homogenous Poisson point process. This is a random countable collection of points on the positive real line, such that within any finite interval $[a,b]$ the expected number of points is proportional to $(b-a)$, and the number of points in disjoint regions is independent.
Some draws from a Poisson point process are shown in Figure~\ref{fig:affine} (a). A Poisson point process is easy to define using laziness, and we flesh out this definition in Section~\ref{sec:poisson}.

Of course, the pictures in Figure~\ref{fig:affine} (a) each show a finite number of points, but this is because we have constrained the viewport to a finite window.
In practice we may want to use the point process as part of a larger model, and in Section~\ref{sec:easyexamples:reg} we use it as part of a regression problem. Then it is unclear where to truncate it to an arbitrary viewport in advance, and, as we demonstrate, this can break the compositionality. 
This is often the case in statistical models, as in other areas of programming: if we just focus on running whole programs, we lose perspective of the conceptual and practical building blocks. We illustrate this further with other examples of non-parametric processes including Dirichlet process clustering (\S\ref{sec:dp}) and Gaussian process regression (\S\ref{sec:gp}).

\subsection{Theoretical aspects: affine monoidal structure and synthetic spaces}
\label{sec:intro:affine}
To illustrate the theoretical side, we recall that in the categorical foundations of measure theory, a morphism $X\to I$ into the monoidal unit describes a parameterized measure on the one-point space. If we focus on normalized probability measures, there should be exactly one measure on the one point space. So in the normalized setting,~$I$ should be a terminal object, in other words, we are working with an affine monoidal category (e.g.~\cite{cho-jacobs,fritz,jacobs-probabilities,coecke-terminality,shiebler,fgp-definetti}). This is shown diagrammatically in Figure~\ref{fig:affine} (b), and
\begin{figure}
  \framebox{\includegraphics[scale=0.64]{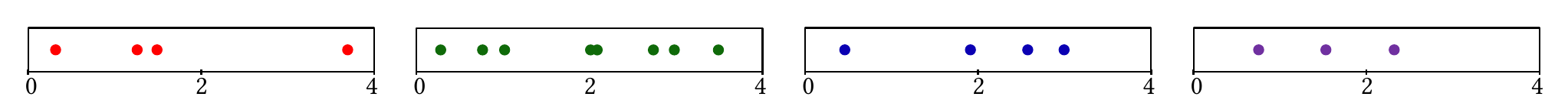}}\\[5mm]
  \framebox{\includegraphics[scale=0.5]{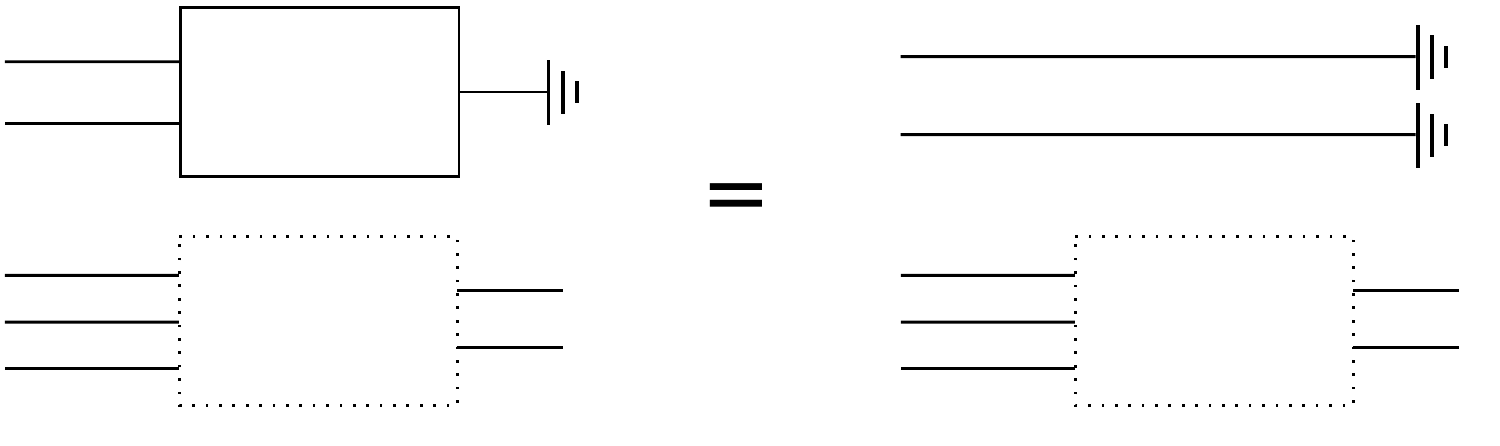}}\qquad\qquad
  \framebox{\includegraphics[scale=0.5]{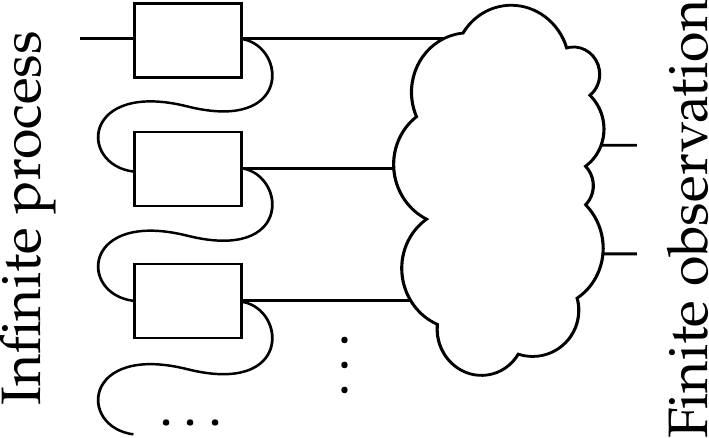}}
  \caption{\label{fig:affine}
    (a)~Four samples from a 1D Poisson point process with rate $1$, with viewport restricted to $[0,4]$.
    (b)~The law for affine monoidal categories in string diagram form. (c) Visualizing dataflow in a lazy infinite process.}
  \end{figure}
  we can regard the diagram as a dataflow diagram.
To see where laziness comes in, we regard the Poisson point process again, now as a dataflow diagram (Fig.~\ref{fig:affine} (c)).
The infinite process is on the left, and the cloud represents the plotting routine, or whatever happens next; intuitively, those morphisms on the left that are not used in what follows need never be inspected, and in that case the process can be truncated via Fig.~\ref{fig:affine} (b). Thus, affine monoidal categories are related to laziness.

As is well known, we can program with monoidal categories using monads (e.g.~\S\ref{sec:monadsmonoidal}). The key conceptual contribution of this paper is the observation that we should program with two separate monads: an affine monad \lstinline|Prob| of probability measures, allowing laziness, and a non-affine monad \lstinline|Meas| of unnormalized measures, allowing the Bayesian Monte Carlo methods (\S\ref{sec:intro:mc}). This is the basis of our metalanguage in Section~\ref{sec:interface}. Theorem~\ref{thm:iid} demonstrates the way that lazy structures work with probability, in terms of infinite streams of samples. Proposition~\ref{prop:undecidable} shows that, once we have this functionality, we have to separate \lstinline|Prob| and \lstinline|Meas| to avoid deciding the halting problem. 

When we regard types in probabilistic programming languages as spaces of synthetic measure theory, we see spaces from non-parametrics, such as function spaces and infinite lists, behaving intuitively and straightforwardly, even though they can be subtle from the traditional measure-theoretic approach. We give a semantic model of our metalanguage in quasi-Borel spaces ( Section~\ref{sec:monads}), but a novelty here is to fix the basic probability space  to a space of lazy rose trees~(\S\ref{sec:msmh:rose}). This lends itself to an implementation in the form of our Haskell library LazyPPL~\cite{nonanongit}. 

\subsection{New Metropolis-Hastings-based inference algorithms}
\label{sec:intro:mhg}
To experiment with these examples involving laziness, we introduce new inference algorithms that build on earlier inference methods for probabilistic programming (e.g.~\cite{lightweight-mh}). These take as an argument a program describing an unnormalized measure, and produce a stream of samples as output.

Traditional uses of Monte Carlo algorithms often assume a finite-dimensional state space; sometimes they can adapt to changing dimensions (e.g.~\cite{rjmcmc}). But in a purely lazy setting, the distributions are implicitly infinite-dimensional, and indeed our basic probability space is the infinite dimensional space of rose trees. To resolve this, we provide new instantiations of the Metropolis-Hastings-Green algorithm, that do apply in this setting, and which we have also implemented in Haskell~\cite{nonanongit}.
\begin{itemize}
\item Our main algorithm (\S\ref{sec:msmh}) is purely lazy. It operates lazily over the entire infinite-dimensional state space, mutating different parts at random.
\item Our other algorithms work by mixing kernels (\S\ref{sec:single-site}). In particular, we are able to implement roughly the algorithm of~\cite{lightweight-mh}, adapted to this lazy setting, by using Haskell internals (\lstinline|ghc-heap|) to identify which dimensions are actually being used in a given run of the program.
\end{itemize}
We can show that these algorithms are correct (via Theorems~\ref{thm:mutatetrees-green}, \ref{th:mhg}, \ref{thm:state-dependent-mixing}).
Generally speaking, general purpose algorithms such as these will not work as efficiently as hand-crafted inference methods for specific scenarios. Nonetheless, they are useful for prototyping the numerous examples we consider in this paper to illustrate laziness and monads in probabilistic programming (\S\ref{sec:easyexamples}, \S\ref{sec:randfun}).

\paragraph{Acknowledgements.} We have benefited from many helpful discussions, including with Victor Blanchi, Reuben Cohn-Gordon, Cameron Freer, Ohad Kammar, Dan Roy, Adam \'Scibior, Matthijs V\'ak\'ar, Frank Wood, Hongseok Yang, Mathieu Huot and other colleagues in Oxford. Thanks also to anonymous reviewers. We also benefited from presenting aspects of this work at various venues, including ACT, AIPLANS, FSCD, LAFI, PROBPROG and an early version in taught courses in OPLSS 2019 and Oxford (BSPP 2020). 
Research supported by AFOSR award number FA9550-21-1-0038; the ERC BLAST grant; and a Royal Society University Research Fellowship.


\section{A monadic metalanguage for probability and measure}

\label{sec:interface}
The idea of Monte Carlo based inference is that we define an unnormalized measure,
by weighting different random choices, and then Monte Carlo inference provides samples from
the normalized form of this measure. 
 \begin{figure}\newcommand{\tj}[3]{#1\vdash \mlstinline{#2 :: #3}}\newcommand{\tc}[2]{\mlstinline{#1 :: #2}}
\[\boxed{\begin{aligned}   &
                     \text{\emph{Types: } }\mlstinline{a}, \mlstinline{b}\ \mathrel{{:}{:}=} \mlstinline{RealNum}~|~\mlstinline{()}~|~\mlstinline{(a,b)}~|~\mlstinline{a -> b}~|~\mlstinline{Prob a}~|~\mlstinline{Meas a}~|~\dots
\\
   &\text{\emph{Terms:} }\mlstinline{t}, \mlstinline{u}\ \mathrel{{:}{:}=}
\hide{\mlstinline{0}~|~
\mlstinline{1}~|~
\mlstinline{t*u}~|~
\mlstinline{()}~|~
\mlstinline{(t,u)}~|~}
\mlstinline{x}~|~
\mlstinline{\\x -> t}~|~
\mlstinline{t u}~|~
\mlstinline{do \{x <- t ; u\}}
~|~
\mlstinline{return t}~|~
\hide{\mlstinline{sample}~|~
\mlstinline{score}~|~}
\dots
\\
\ &\text{\emph{Typing judgement} }(\tj \Gamma t a)\text{\emph:}\\
&\ \begin{prooftree}\phantom{\Gamma\mlstinline{t}}-\phantom{\Gamma\mlstinline{t}}\justifies
    \tj {\Gamma,\mlstinline{x::a},\Gamma'} x a\end{prooftree}\qquad
\begin{prooftree}    \tj {\Gamma,\mlstinline{x::a}} t b\justifies
    \tj \Gamma {\\x -> t} {a -> b}\end{prooftree}\qquad
\begin{prooftree}    \tj \Gamma t {a -> b}\quad
    \tj \Gamma u {a}\justifies
    \tj \Gamma {t u} {b}
\end{prooftree}\ \\[3pt]
&\ \begin{prooftree}    \tj \Gamma t {a}\justifies
    \tj \Gamma {return t} {m a}\end{prooftree}\qquad
  \begin{prooftree}    \tj \Gamma t {m a}\quad
    \tj {\Gamma,\mlstinline{x::a}} u {m b}\justifies
    \tj \Gamma {do \{x <- t ; u\}} {m b}
\using \mlstinline{m}\in\{\mlstinline{Prob},\mlstinline{Meas}\}\end{prooftree}\\[3pt]
&\emph{Typed constants:}\\
 &\ 
    \tc {()} {()}\quad
    \tc {(,)} {a -> b -> (a,b)}\quad
    \tc {fst} {(a,b) -> a}\quad
    \tc {snd} {(a,b) -> b}\\&\ 
    \tc {0,1} {RealNum}\quad
    \tc * {RealNum -> RealNum -> RealNum}\quad\dots
\\&\ 
    \tc {sample} {Prob a -> Meas a}\quad
    \tc {score} {RealNum -> Meas ()}\quad
  \end{aligned}}
\]
\vspace{-5mm}
\caption{\label{fig:metalang}Summary of the types and terms of the monadic metalanguage.}
 \vspace{-5mm}
 \end{figure}
In this section, we encapsulate this in programming terms by using two monads, describing normalized and unnormalized measures. To make this formal, we set up an instance of Moggi's monadic metalanguage (\cite{moggi:computation_and_monads}) outlined in Figure~\ref{fig:metalang}. We discuss the syntax now, with a first example in Section~\ref{sec:easyexamples:reglin}. We then provide a basic equational theory (\S\ref{sec:monad-eq}) and use it to show that  
the separation between these monads is crucial for exploring lazy data structures (\S\ref{sec:afflazy}, Theorem~\ref{thm:iid} and Proposition~\ref{prop:undecidable}). Throughout the section, we consider simple extensions of the metalanguage that support lazy structures.
%

In the metalanguage, there is a distinguished type \lstinline|RealNum|, thought of as the real numbers, and there are two monads:
 \begin{itemize}
\item A probability monad \lstinline|Prob| (in green font) so that
  \lstinline|(Prob a)| intuitively contains probability measures on \lstinline|a|.
  We typically have stock probability measures, such as \lstinline|uniform :: Prob RealNum|
  and \lstinline|normal :: RealNum -> RealNum -> Prob RealNum|,
  but we do not need to postulate these at this point. 
  \item A measures monad \lstinline|Meas|  (in red font), so that
    \lstinline|(Meas a)| intuitively contains unnormalized measures on \lstinline|a|.
 \end{itemize}
  There are two key operations:
  \begin{itemize}
  \item \lstinline|sample :: Prob a -> Meas a|, which allows us to regard a probability measure as an unnormalized measure;
  \item \lstinline|score :: RealNum -> Meas ()|, which provides a measure with given weight on a single point; this is an unnormalized measure unless the weight is $1$.
  \end{itemize}
  The \lstinline|score| operation is often used with a probability density, to incorporate the likelihood in a Bayesian scenario.
  One can use all kinds of distributions for observations, using their densities. 
  For example, to incorporate a Bayesian observation of data point~\lstinline|x| from a normal distribution with mean \lstinline[mathescape]|$\cb \mu$| and
  standard deviation \lstinline[mathescape]|$\cb \sigma$|, we write
  \lstinline[mathescape]|score (normalPdf $\cb \mu$ $\;\cb \sigma$ x)|, where $({\mathtt{normalPdf}\; \cb \mu\;\sigma\;\mathtt{x}})=e^{-({\cb\mathtt x}-{\cb \mu})^2/(2{
      \cb\sigma})^2}/({\cb \sigma}\sqrt {2\pi})$.
We use a Haskell notation for the metalanguage; we have implemented the metalanguage in Haskell (see \S\ref{sec:monads:impl}, \S\ref{sec:msmh}, \cite{nonanongit}) and so all the examples can be run.
\subsection{First example: Bayesian linear regression}
  \label{sec:easyexamples:reglin}
 
We illustrate the metalanguage with a simple 1-dimensional Bayesian regression model. (See \S\ref{sec:easyexamples} for further illustrations.) The problem of regression is that we have some data points observed, and we want to know which function generated those points.
Bayesian regression does not produce one single `line of best fit', but rather a probability distribution over the functions that might have
generated the points.  We start with a fairly uninformative prior distribution over linear functions, incorporate the likelihood of the observations, and produce a posterior by Monte Carlo simulation.

In statistical notation, we might define the prior by writing 
\[a \sim \mathcal N\mathit{ormal}(0,3)\qquad b\sim \mathcal N\mathit{ormal}(0,3)\qquad f(x) = ax +b
\]
Here, the slope \lstinline|a| and intercept \lstinline|b| are both drawn from normal distributions. 

In the metalanguage (extended with mild syntactic sugar) we can write
\hide{\begin{lstlisting}
 linear :: Prob (RealNum -> RealNum)
 linear = do a <- normal 0 3 
             b <- normal 0 3 
             let f x = a*x + b
             return f
\end{lstlisting}}
\begin{lstlisting}[columns=flexible]
 linear :: Prob (RealNum -> RealNum)
 linear = do { a <- normal 0 3 ; b <- normal 0 3 ; let f x = a*x + b ; return f }
\end{lstlisting}
We do not want to assume that the data points are exactly colinear, and so we do not want observe the likelihood of the data points being exactly \lstinline|f x|. Rather, we use a likelihood for the points being normally distributed around \lstinline|f x|, for some small standard deviation $\sigma$.
The inference problem might be written in statistical notation as:
\[f\sim \mathcal{L}inear \qquad
  y_i\sim \mathcal{N}\mathit{ormal}(f(x_i),\sigma)
  \qquad
  \text{What is $P(f|(y_i=d_i)_i)$?}
\]
where $d_i$ are the observed values at $x_i$. 
To this end, we define a general purpose function \lstinline|regress|, which takes a standard deviation $\sigma$, a prior over the function space \lstinline|prior|, and a list of \lstinline|(x,y)| observations \lstinline|dataset|. For convenience, we add a type of lists, and routines for operating over lists. 
\begin{lstlisting}[columns=flexible]
regress :: RealNum -> Prob (a -> RealNum) -> [(a,RealNum)] -> Meas (a -> RealNum)
regress (*@$
\cb \sigma$@*) prior dataset =
  do {f <- sample prior; forM_ dataset (\(x,d) -> score (normalPdf (f x) (*@$\cb \sigma$@*) d));  return f}
\end{lstlisting}
 \begin{figure}
\includegraphics[scale=.7]{./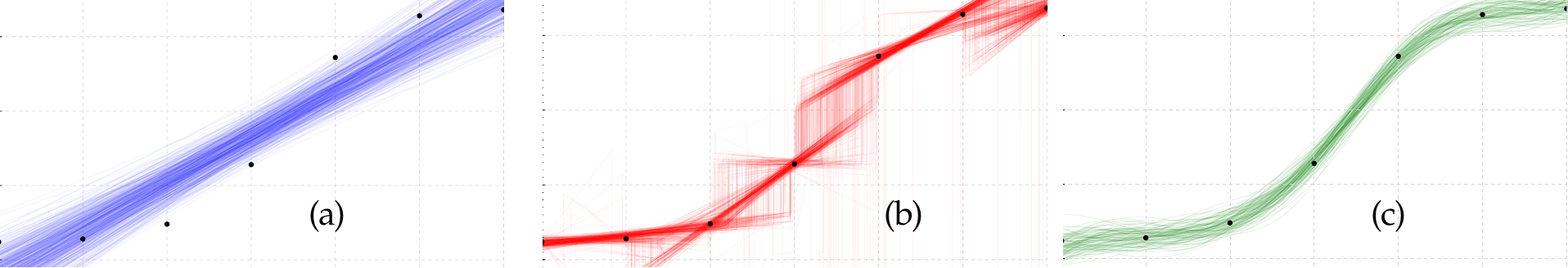}
  \caption{Bayesian regression in LazyPPL for the data set indicated by the dots. We illustrate the posteriors starting from three different priors on the function space. From left to right: (a)~linear (\S\ref{sec:easyexamples:reglin}), (b)~piecewise linear (\S\ref{sec:easyexamples:regpiecewiselin}), and (c) Gaussian processes (see \S\ref{sec:gp}).\label{fig:regression}}
\end{figure}
So linear regression in particular is achieved by performing Monte Carlo inference on the program \lstinline|(regress 0.1 linear dataset)|, see Figure~\ref{fig:regression} (a). We discuss inference in Section~\ref{sec:msmh}.

  \subsection{Equational reasoning}
\label{sec:monad-eq}

To begin to look more formally at the meaning of the programming syntax, we now consider equations between programs. We give a denotational model in Section~\ref{sec:monads}. One could also give an operational semantics, and we give an implementation in Section~\ref{sec:monads:impl}. Either way is slightly technical, so for now we work axiomatically, by listing some equations that we intend to hold. The idea is that the equations are sound, in that any programs that are derivably equal should be equal in any good denotational interpretation, or observationally equivalent for an operational semantics.

The equational theory is not intended to be a complete system, nor is it intended as a definitional semantics by rewriting, although the equations here are useful in practice, for example as compiler transformations.

We begin with the standard equational theory of the monadic metalanguage (Figure~\ref{fig:metalang-eq}), which is sound for strong monads on a cartesian closed category. Our interface has two monads, and we now state some additional equations that they should satisfy. 

First, we impose that \lstinline|sample :: Prob a -> Meas a| should be a monad morphism, i.e.:
\begin{equation}\label{eqn:score-morph}
  \begin{aligned}
  \mlstinline{sample (return t)} \quad&=\quad\mlstinline{return t :: Meas a} \\
  \mlstinline{sample (do \{x <- p ; q\})} \quad&=\quad\mlstinline{do \{x <- sample p ; sample q\} :: Meas b}
  \end{aligned}
\end{equation}
for \lstinline|t :: a|, \lstinline|p :: Prob a|, \lstinline|q :: Prob b|.

The measures monad should satisfy commutativity (e.g.~\cite{kock-comm-monad}): for
\lstinline|mx :: Meas a| and \lstinline|my :: Meas a|,
\begin{equation}\label{eqn:comm}
  \mlstinline{do \{x <- mx ; y <- my ; return (x,y)\}} \quad=\quad
  \mlstinline{do \{y <- my ; x <- mx ; return (x,y)\}}
\end{equation}
where $\mlstinline{x}\not\in\mathrm{freevars}(\mlstinline{my})$, $\mlstinline{y}\not\in\mathrm{freevars}(\mlstinline{mx})$.
The probability monad should satisfy commutativity, and also affinity (e.g.~\cite{kock-affine,jacobs-weakening}): 
\begin{equation}\label{eqn:affine}
\text{if \lstinline|mx :: Prob a|, then:} \quad \mlstinline{do \{x <- mx ; return ()\}} \qquad =\qquad \mlstinline{return ()}
\end{equation}
Finally, scores should combine multiplicatively, and a score of $0$ should trivialize:
\begin{equation}\label{eqn:scoremult}
  \begin{aligned}
&  \mlstinline{score 1} \ =\ \mlstinline{return () :: Meas ()}  \qquad \mlstinline{do \{score 0 ; mx\}} \ =\ \mlstinline{do \{score 0 ; my\} :: Meas a}\\
  &\mlstinline{do \{score r ; score s\}} \quad=\quad\mlstinline{score (r * s) :: Meas ()}
\end{aligned}
\end{equation}
for \lstinline|r,s :: RealNum| and \lstinline|mx,my :: Meas a|.

 \begin{figure}\newcommand{\eq}[2]{\mlstinline{#1} = \mlstinline{#2}}
   \[\boxed{\begin{aligned}   &
         \text{\emph{Typed equational theory:} we define a relation
           $\Gamma\vdash {\eq t u }{\mlstinline{::a}}$
           over typed terms $\Gamma\vdash { \mlstinline{t},\mlstinline{u} }{\mlstinline{::a}}$.}
         \\&\text{%
               Mostly the type and environment are clear from context;
               the exception is the unit axiom:}\\&\ \text{If }
             \Gamma \vdash\mlstinline{t :: ()}\text{ then }
             \Gamma\vdash \mlstinline{t}=\mlstinline{() :: ()}
\\
&\begin{array}{@{}ll}\text{Other axioms:}&\eq {do \{y <- do \{x <- s ; t\} ; u\}} {do \{x <- s ; do \{y <- t ; u\}\}}
                                           \quad [\mlstinline{x}\not\in\mathrm{fv}(\mlstinline{u})]
\\[2pt]&\eq {do \{x <- return t ; u\}} {u[t/x]}
      \qquad\eq {do \{x <- t ; return x\}} {t}\end{array}
         \\&\ 
      \eq{fst(t,u)}t\quad \eq{snd(t,u)}u\quad\eq t{(fst t,snd t)}
      \quad\eq{(\\x -> t) u} {t[u/x]}
      \quad\eq u{(\\x -> (u x))}
  \end{aligned}}
\]
\vspace{-5mm}
\caption{\label{fig:metalang-eq}Basic equational theory of the monadic metalanguage.
  We give additional equations in~\eqref{eqn:score-morph}--\eqref{eqn:stream}.}
 \end{figure}


  \subsection{Affine probability monads, laziness, and iid sequences}
\label{sec:afflazy}
We now show the connection between the affine monad law~\eqref{eqn:affine} and lazy data structures.
The reader might already glimpse a connection between~\eqref{eqn:affine} and laziness, since the law says that if the result \lstinline|x| of a program expression \lstinline|mx| is not used in what follows (left-hand side), then \lstinline|mx| need not be considered at all (right-hand side). 

To formally investigate this with lazy data structures,
we suppose that the metalanguage is extended with types \lstinline|Stream a| of streams,
  and standard accessor methods such as 
\[\mlstinline{hd :: Stream a -> a}\quad \mlstinline{tl :: Stream a -> Stream a}\quad
           \mlstinline{(:) :: a -> Stream a -> Stream a}\]
         together with syntactic sugar, and equations such as
         \begin{equation}\label{eqn:stream}\mlstinline{hd (x:xs)}=\mlstinline{x}\qquad
           \mlstinline{tl (x:xs)} = \mlstinline{xs}\qquad
           \mlstinline{(hd xs):(tl xs)} = \mlstinline{xs}
         \end{equation}
  We also suppose that we have a mechanism
  \lstinline|iid| for generating `independent and identically distributed' infinite random sequences.
  This should satisfy the following recursive equation:
  \begin{lstlisting}
    iid :: Prob RealNum -> Prob (Stream RealNum)
    iid p = do { x <- p ; xs <- iid p ; return (x : xs) }
  \end{lstlisting}
  We emphasise that the equation is for now regarded as an equational specification, not a recursive program definition with a given evaluation strategy. 
  Although the recursive call appears to make an infinite number of random choices, we can deduce from the basic equational laws that it must be treated lazily:
  \begin{theorem}\label{thm:iid}
    For any finite sequence \lstinline[mathescape]|i$\cb_1$|$,\dots,$ \lstinline[mathescape]|i$\cb_n$| of distinct indices,
    the following programs of type \lstinline|Prob (RealNum,...,RealNum)| are equal:
    \begin{enumerate}
    \item \lstinline[mathescape]|do {xs <- iid p ; return (xs !! i$\cb_1$,$\dots$, xs !! i$\cb_n$)}|
    \item \lstinline[mathescape]|do {x$\cb_1$ <- p ; $\dots$ ; x$\cb_n$ <- p ; return (x$\cb_1$ , $\dots$, x$\cb_n$)}|
    \end{enumerate}
  \end{theorem}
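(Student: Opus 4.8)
The plan is to argue entirely inside the equational theory --- the laws of Figure~\ref{fig:metalang-eq}, the stream equations~\eqref{eqn:stream}, commutativity and affinity~\eqref{eqn:affine} of \lstinline|Prob|, and the recursive specification of \lstinline|iid| --- with no appeal to semantics. I would first record two derived principles for \lstinline|Prob|. The first is \emph{weakening}: if $\mathtt{mx}$ has type $\mathtt{Prob}\ \mathtt b$ and $\mathtt x\notin\mathrm{fv}(\mathtt t)$, then $\mathtt{do}\,\{\mathtt x\leftarrow\mathtt{mx};\ \mathtt{return}\ \mathtt t\}=\mathtt{return}\ \mathtt t$. To obtain it, rewrite $\mathtt{return}\ \mathtt t$ as $\mathtt{do}\,\{\_\leftarrow\mathtt{return}\,();\ \mathtt{return}\ \mathtt t\}$ by left unit, re-associate (the side condition is exactly $\mathtt x\notin\mathrm{fv}(\mathtt t)$) to get $\mathtt{do}\,\{\_\leftarrow(\mathtt{do}\,\{\mathtt x\leftarrow\mathtt{mx};\ \mathtt{return}\,()\});\ \mathtt{return}\ \mathtt t\}$, collapse the inner block to $\mathtt{return}\,()$ by affinity, and apply left unit once more. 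The second is \emph{general commutativity}: if $\mathtt x\notin\mathrm{fv}(\mathtt{my})$ and $\mathtt y\notin\mathrm{fv}(\mathtt{mx})$, both of monad \lstinline|Prob|, then $\mathtt{do}\,\{\mathtt x\leftarrow\mathtt{mx};\ \mathtt y\leftarrow\mathtt{my};\ \mathtt C\}=\mathtt{do}\,\{\mathtt y\leftarrow\mathtt{my};\ \mathtt x\leftarrow\mathtt{mx};\ \mathtt C\}$ for \emph{any} continuation $\mathtt C$, even one mentioning $\mathtt x$ and $\mathtt y$; this follows from~\eqref{eqn:comm} by replacing $\mathtt C$ with $\mathtt{do}\,\{\mathtt w\leftarrow\mathtt{return}\,(\mathtt x,\mathtt y);\ \mathtt C[\mathtt{fst}\,\mathtt w/\mathtt x,\,\mathtt{snd}\,\mathtt w/\mathtt y]\}$ (using $\mathtt{fst}(\mathtt x,\mathtt y)=\mathtt x$, $\mathtt{snd}(\mathtt x,\mathtt y)=\mathtt y$, and congruence), re-associating until $\mathtt{do}\,\{\mathtt x\leftarrow\mathtt{mx};\ \mathtt y\leftarrow\mathtt{my};\ \mathtt{return}\,(\mathtt x,\mathtt y)\}$ is exposed as a sub-block (the re-association side conditions hold because the new continuation mentions neither $\mathtt x$ nor $\mathtt y$), swapping it via~\eqref{eqn:comm}, and re-associating back.

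With these in hand, I would unfold \lstinline|iid|. The case $n=0$ is a direct instance of affinity, so assume $n\ge 1$ and set $N=\max_j i_j$. An easy induction on $m$ --- base case the given equation for \lstinline|iid|, inductive step substituting that equation for the inner \lstinline|iid p| and simplifying by associativity and left unit --- yields, for every $m\ge 0$,
\[
  \mathtt{iid}\ \mathtt p\;=\;\mathtt{do}\,\{\,y_0\leftarrow\mathtt p;\ \dots;\ y_m\leftarrow\mathtt p;\ \mathtt{ys}\leftarrow\mathtt{iid}\ \mathtt p;\ \mathtt{return}\ (y_0:\dots:y_m:\mathtt{ys})\,\}.
\]
Instantiating this at $m=N$, substituting it into program~(1), re-associating, and using left unit to discharge the $\mathtt{return}$ of the explicit list, program~(1) reduces to a $\mathtt{do}$-block that draws $y_0,\dots,y_N$ from \lstinline|p|, binds \lstinline|ys| to \lstinline|iid p|, and returns the $n$-tuple of the $i_1$-th, \dots, $i_n$-th elements of the list $y_0:\dots:y_N:\mathtt{ys}$. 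Since each $i_j\le N$, iterating the stream equations~\eqref{eqn:stream} and the defining equations of list indexing (\texttt{(z:zs)!!0 = z} and \texttt{(z:zs)!!(k+1) = zs!!k}) rewrites the $i_j$-th element to $y_{i_j}$ without ever reaching \lstinline|ys|. Hence program~(1) equals
\[
  \mathtt{do}\,\{\,y_0\leftarrow\mathtt p;\ \dots;\ y_N\leftarrow\mathtt p;\ \mathtt{ys}\leftarrow\mathtt{iid}\ \mathtt p;\ \mathtt{return}\ (y_{i_1},\dots,y_{i_n})\,\},
\]
in which \lstinline|ys| no longer occurs.

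It remains to clean up the superfluous bindings. Since \lstinline|ys| is unused and its binding sits immediately before the $\mathtt{return}$, weakening deletes it. For each $k\in\{0,\dots,N\}\setminus\{i_1,\dots,i_n\}$ the variable $y_k$ does not occur in the returned tuple, so by general commutativity --- applicable because \lstinline|p| contains none of $y_0,\dots,y_N$ free --- I move the binding $y_k\leftarrow\mathtt p$ rightwards past every later binding until it stands immediately before the $\mathtt{return}$, and then delete it by weakening. Once all such bindings are gone, exactly the $n$ bindings for $y_{i_1},\dots,y_{i_n}$ remain --- distinctness of the indices is precisely what makes these $n$ different variables --- and one more round of general commutativity permutes them into the order $y_{i_1}\leftarrow\mathtt p;\ \dots;\ y_{i_n}\leftarrow\mathtt p$. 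Finally, the $\alpha$-renaming $y_{i_j}\mapsto x_j$ turns the expression into program~(2).

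The one point that needs care is making the two derived principles precise: every re-association of a $\mathtt{do}$-block carries the side condition that its bound variable is not free in the continuation, and this has to be checked at each use. In our setting the continuation is always a $\mathtt{return}$ of a term avoiding the variable in question (or has just been arranged to be so), so the conditions are met. The rest --- the inductive unfolding of \lstinline|iid|, the bookkeeping of fresh names, and the arithmetic of indices --- is routine, and with the two principles available the argument is a finite chain of rewrites.
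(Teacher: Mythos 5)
Your proposal is correct and follows essentially the same route as the paper's own proof notes: unfold the \lstinline|iid| equation enough times to cover the largest index, simplify the indexing via the stream equations, discard the unused bindings (including the residual \lstinline|iid p|) by affinity, and reorder by commutativity. The paper only sketches this on a two-index example, whereas you spell out the derived weakening and general-commutativity lemmas and the induction on the unfolding depth, which is a faithful generalization rather than a different argument.
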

  \begin{proof}[Proof notes]
    For instance (and omitting \lstinline|do| for brevity): 
    \begin{align*}
      &\mlstinline{xs <- iid p ; return (xs !! 2,xs !! 1)}&&\text{[\lstinline|iid| equation, 3 times]}
      \\&=
      \mlstinline{x$\cb_0\;$ <- p ; x$\cb_1\;$ <- p ; x$\cb_2\;$ <- p ; xs <- iid p ; return ((x$\cb_0$:x$\cb_1$:x$\cb_2$:xs)!!2,(x$\cb_0$:x$\cb_1$:x$\cb_2$:xs)!!1)}\hspace{-5cm}
      \\&=
            \mlstinline{x$\cb_0\;$ <- p ; x$\cb_1\;$ <- p ; x$\cb_2\;$ <- p ; xs <- iid p ; return (x$\cb_2$,x$\cb_1$)}&&\text{[\eqref{eqn:stream}]}
      \\&=
      \mlstinline{x$\cb_1\;$ <- p ; x$\cb_2\;$ <- p ; return (x$\cb_2$,x$\cb_1$)}&&\text{[\eqref{eqn:affine}]}
      \\&=
      \mlstinline{x$\cb_1\;$ <- p ; x$\cb_2\;$ <- p ; return (x$\cb_1$,x$\cb_2$)}&&\text{[\eqref{eqn:comm} and rename variables]}\qedhere 
   \end{align*}
\end{proof}
  To emphasise the connection with lazy structures, we make the following remarks. Our equational theory has thus far not discussed `termination', however, in any reasonable semantics, program~(2) in Theorem~\ref{thm:iid} would be regarded as terminating. Thus program~(1), being equal to~(2), should also be regarded as terminating.
  Thus the call to \lstinline|iid| induces an unbounded stream of results, despite terminating, which is the essence of laziness.
  

Using similar arguments, we can assume a more general primitive that generates a stream of results by iterating a function \lstinline|b -> Prob (a, b)|. This is subject to the following  equation:  
\begin{lstlisting}
 unfold :: (b -> Prob (a, b)) -> b -> Prob (Stream a)
 unfold f y = do { (x, y') <- f y; xs <- unfold f y'; return (x : xs)} 
\end{lstlisting}
We note that we could define \lstinline|iid| by 
\lstinline|iid p = unfold (\_ -> do {x <- p; return (x, ())}) ()|.

\subsubsection{Discussion of problems with laziness in general measures monads}
We do not ask that the general measures monad, \lstinline|Meas a|, is affine, because then the \lstinline|score| construct would disappear. In practice, this means that we cannot expect to construct lazy data structures using recursive definitions in the
\lstinline|Meas| monad, as we now explain. 

Suppose for a moment that we did have the equivalent of
\lstinline|unfold| and \lstinline|iid| in the \lstinline|Meas| monad:
\begin{equation}
  \label{eqn:unfoldM}\begin{aligned}
&  \mlstinline{unfoldM :: (b -> Meas (a, b)) -> b -> Meas (Stream a)} 
  \\& \mlstinline{iidM :: Meas RealNum -> Meas (Stream RealNum)}\end{aligned}
\end{equation}
   For instance, \lstinline|iidM m = do { x <- m ; xs <- iidM m ; return (x : xs)}|. 
Consider the program
\lstinline|do { xs <- iidM (do {score 2 ; return 42}) ; return (hd xs)} :: Meas RealNum|. \mbox{}
  We are unable to use the equations of the monadic metalanguage to simplify this program, because we cannot use (\ref{eqn:affine}), and the semantics is unclear. (Possible semantics include: the program does not terminate, or, the program terminates with infinite score, since $2^n\to \infty$ as $n\to \infty$.)
  \hide{
    \\&=
  \mlstinline{score 2 ; x0 <- return 42 ; xs <- iidM p ; return ((x0 :: xs) !! 0)}
    \\&=
  \mlstinline{score 2 ; xs <- iidM p ; return 42 }
    \\&=
  \mlstinline{score 2 ; score 2 ; xs <- iidM p ; return 42}
    \\&=
  \mlstinline{score 2 ; score 2 ; score 2 ; xs <- iidM p ; return 42}
    \\&=
  \dots
}

It is still possible to use lazy data structures when building unnormalized measures:
for instance we can write \lstinline|sample (iid p) :: Meas (Stream a)| whenever \lstinline|p :: Prob a|. 
But we show now that implementing a general \lstinline|unfoldM| and \lstinline|iidM| such that
\lstinline|sample (iid p) = iidM (sample p)| is impossible, whenever our equations hold.   
\begin{proposition}\label{prop:undecidable}
  Suppose that there are terms \lstinline|unfoldM| and \lstinline|iidM| as in \eqref{eqn:unfoldM} such that we have
  \lstinline|iidM (return t) = sample (iid (return t))|.
  Then among the programs \[\{\mlstinline{t}~|~\exists \mlstinline{r},\mlstinline{u}\text{ such that } \mlstinline{t}=\mlstinline{do \{score r ; return u\} :: Meas RealNum}\}\]
    the score \lstinline|r| is undecidable.
\end{proposition}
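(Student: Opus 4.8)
The plan is to reduce the halting problem to the task of determining the score. Fix a computable encoding of Turing-machine configurations as metalanguage values, so that the one-step transition function and the halting test become definable --- with lists and the structures of this section the metalanguage is Turing-complete. Given a machine $M$ with a fixed input, write $c_0$ for its initial configuration and let $\mathit{step}_M$ be the function of type $\mlstinline{Config -> Meas (RealNum, Config)}$ that returns $\mlstinline{return (42,c)}$ on a halting configuration $c$ and $\mlstinline{do \{ score 0.5 ; return (42, next c) \}}$ otherwise. Let $p_M$ be the program $\mlstinline{do \{ xs <- unfoldM f y ; return (hd xs) \}}$ of type $\mlstinline{Meas RealNum}$, with $\mlstinline{f}$ taken to be $\mathit{step}_M$ and $\mlstinline{y}$ taken to be $c_0$. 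The passage $M \mapsto p_M$ is uniformly computable in a description of $M$.

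The core claim is that $p_M$ is always equal to a program of the displayed form $\mlstinline{do \{ score r ; return u \}}$, with $\mlstinline{u}$ the constant $\mlstinline{42}$ and with score $r = 2^{-N}$ if $M$ halts after exactly $N$ steps and $r = 0$ if $M$ never halts. Granting this, if there were an algorithm computing the score of an arbitrary program in the displayed class --- or even just one deciding whether that score equals $0$ --- then composing it with $M \mapsto p_M$ would decide whether $M$ halts, which is impossible; hence the score is undecidable.

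To establish the claim I would unfold the defining equation of $\mlstinline{unfoldM}$ repeatedly, simplify occurrences of $\mlstinline{hd (x:xs)}$ to $\mlstinline{x}$, and merge the weights scored so far by the multiplicativity of $\mlstinline{score}$ in~\eqref{eqn:scoremult}; after $k$ steps this rewrites $p_M$ to $\mlstinline{do \{ score w ; xs <- unfoldM f y ; return 42 \}}$, where $\mlstinline{f},\mlstinline{y}$ are now at $\mathit{step}_M$ and at the configuration $c_k$ reached after $k$ steps, and $\mlstinline{w}$ is the product of the weights scored along the way. If $M$ halts at step $N$, take $k = N$: then $\mathit{step}_M$ evaluated at the halting configuration $c_N$ is the trivial program $\mlstinline{return (42, c)}$, so the residual tail $\mlstinline{unfoldM f y}$ is an instance of $\mlstinline{iidM (return 42)}$ --- by the $\mlstinline{iid}$-as-$\mlstinline{unfold}$ identity recalled above together with a routine uniformity of $\mlstinline{unfoldM}$ in its state argument, or, equivalently, because in the model of~\S\ref{sec:monads} it denotes the Dirac measure on the constant stream $\mlstinline{42}$. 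By the hypothesis this tail equals $\mlstinline{sample (iid (return 42))}$; pushing the final $\mlstinline{return 42}$ through the monad morphism $\mlstinline{sample}$~\eqref{eqn:score-morph} and then discarding the $\mlstinline{Prob}$-computation by affinity~\eqref{eqn:affine}, it collapses to $\mlstinline{return 42}$, and so $p_M$ reduces to the program that scores $2^{-N}$ and returns $\mlstinline{42}$, which is of the displayed form with $r = 2^{-N}$. If $M$ never halts, every weight scored is $0.5$, so letting $k \to \infty$ the measure denoted by $\mlstinline{unfoldM f y}$ has total mass $\prod_{i \ge 0}\tfrac12 = 0$ and is carried by a single stream; hence $p_M$ denotes the zero measure, which is $\mlstinline{do \{ score 0 ; return 42 \}}$ by~\eqref{eqn:scoremult}.

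The main obstacle is exactly this collapse of the infinite tail after the machine halts (and, in the divergent case, the evaluation of a non-terminating run): the finitary equational theory on its own cannot reduce an infinite composite of trivial scores, and it is precisely the hypothesised identity $\mlstinline{iidM (return t) = sample (iid (return t))}$ --- which pins $\mlstinline{iidM (return 42)}$ down to a genuine, mass-one measure rather than, say, the zero measure --- that makes the argument go through. Formalising that collapse (via the hypothesis plus a mild uniformity law for $\mlstinline{unfoldM}$, or by working in the denotational model of~\S\ref{sec:monads}, where $\mlstinline{Meas}$ is a bona fide measures monad) is the delicate step; everything else is bookkeeping, the one point worth checking being that $p_M$ genuinely lands in the displayed class of programs in both the halting and the non-halting regime.
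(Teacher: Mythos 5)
Your reduction is from the halting problem via \lstinline|unfoldM|, which is the same overall strategy as the paper's, but your choice of weights opens a genuine gap in the non-halting case. You score $0.5$ at every non-halting step, so when $M$ diverges the program is an infinite composite of \emph{non-trivial} scores, and you need $\prod_{i\ge 0}\tfrac12=0$ to conclude that the score is $0$. Nothing in the proposition licenses that step. The finitary equational theory cannot collapse infinitely many scores (you say so yourself), and the one extra hypothesis available, \lstinline|iidM (return t) = sample (iid (return t))|, is of no help here: your divergent tail is (up to state-uniformity) \lstinline|iidM (do {score 0.5; return 42})|, not \lstinline|iidM (return t)| for a pure \lstinline|t|, since every unrolling emits a \lstinline|score 0.5|. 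This is exactly the shape of program the paper flags just before the proposition as having \emph{unclear} semantics (divergence, or a degenerate total mass, are both admissible readings). Falling back on the denotational model of \S\ref{sec:monads} is not available either, because the proposition is a conditional statement about an \emph{arbitrary} pair of terms \lstinline|unfoldM|, \lstinline|iidM| satisfying the recursive specification and the compatibility equation; no particular model of them is assumed. So in the divergent case you cannot show that $p_M$ lies in the displayed class at all, and the reduction breaks: a decider for the score of programs \emph{in that class} owes you nothing on an input not provably in it. (A secondary weakness: your two scores are $2^{-N}$ versus $0$, which are not uniformly separated, so the argument needs an exact-valued oracle rather than an approximating one.)

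The repair is to make both cases finitary, which is what the paper's weighting does: score $1$ at each non-halting step and $0$ at the halting step. If $M$ halts at step $n$, finitely many unrollings reach a \lstinline|score 0|, and the axiom \lstinline|do {score 0; mx} = do {score 0; my}| from~\eqref{eqn:scoremult} collapses the entire infinite tail in one stroke, giving score $0$ (no appeal to the hypothesis is needed in this case). If $M$ diverges, every score is $1$, i.e.\ \lstinline|return ()|, so the program simplifies outright to applying \lstinline|hd| to \lstinline|iidM (return 42)|, and \emph{that} is precisely where the hypothesis, the monad-morphism law~\eqref{eqn:score-morph} and affinity~\eqref{eqn:affine} pin it down to \lstinline|do {score 1; return 42}|. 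Both derivations are finite and the scores $0$ and $1$ are trivially separated. Your remaining ingredients (computable encoding of configurations, identifying a constant-output unfold with an \lstinline|iidM| via uniformity in the state argument) match what the paper also uses, the latter implicitly.
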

\begin{proof}
  Pick an arbitrary Turing machine $M$.
  We exhibit a program \lstinline|h| such that we have either
  \lstinline|h = score 0 ; return 0| if $M$ halts, and \lstinline|h = score 1 ; return 42| otherwise. Thus it is in general undecidable whether the score is $0$ or $1$. 
To this end, write \lstinline|steps n| if $M$ halts after simulating it for \lstinline|n| steps. Suppose we extend the metalanguage with this \lstinline|steps| construction.
  \begin{lstlisting}[columns=flexible]
 h :: Meas Integer
 h = hd <$> (unfoldM (\n -> do{ score (if (steps n) then 0 else 1); return (42,n+1) })) 0
 \end{lstlisting}
 (where \lstinline|(t <$> u) = do {x <- u ; return (t x)}|). 
Suppose that $M$ terminates at time $n$. Then, unrolling \lstinline|<$>| and the recursive equation for \lstinline|unfoldM|, and simplifying using \eqref{eqn:score-morph}--\eqref{eqn:stream} and Fig.~\ref{fig:metalang-eq}:
  \begin{align*}\mlstinline{h}\ &=\; 
  \mlstinline{do \{xs <- (unfoldM $\dots$ ) 0 ; return (hd xs)\}}
\; =\;
  \mlstinline{do \{xs <- (unfoldM $\dots$ ) 1 ; return 42\}}\\ &= \quad\dots \quad =\;
  \mlstinline{do \{xs <- (unfoldM $\dots$ ) n ; return 42\}}
\\    
  &=\;
  \mlstinline{do \{score 0 ; xs <- (unfoldM $\dots$ ) (n+1) ; return 42\}}
\;=\;
  \mlstinline{do \{score 0 ; return 0\}}
   \text.
  \end{align*}
  On the other hand, if the machine $M$ does not halt,
  then \lstinline|h = hd <$> iidM (return 42)|.
  By the assumption in the theorem,
  \lstinline|h = hd <$> sample (iid (return 42))|,
  and so by (\ref{eqn:score-morph}) and (\ref{eqn:scoremult}), we have
  \begin{align*}
    \mlstinline{h}\ &=\;\mlstinline{hd <$\mbox{\textdollar}$> do \{xs <- sample (iid (return 42)) ; return (42 : xs) \}}&&\text{[\lstinline|iid| \& (\ref{eqn:score-morph})]}\\
    &=\; \mlstinline{do \{xs <- sample (iid (return 42)) ; return 42\}} &&\text{[$\mlstinline{<$\mbox{\textdollar}$>}$ \& (\ref{fig:metalang-eq})]}\\
    &=\; \mlstinline{sample (do \{xs <- iid (return 42) ; return 42\})} \;=\; \mlstinlin{sample (return 42)} &&\text{[(\ref{eqn:score-morph}) \& (\ref{eqn:affine})]}\\
    &=\; \mlstinline{return 42} \;=\; \mlstinline{do \{score 1 ; return 42\}} &&\text{[(\ref{eqn:score-morph}) \& (\ref{eqn:scoremult})]}\\\qedhere
  \end{align*}
  \end{proof}
If the score is undecidable, this means the system is not amenable to Monte Carlo simulation, because the score is needed to know whether or not to reject a run of the program. 
For the example in the proof of Proposition~\ref{prop:undecidable}, a Monte Carlo simulation would have to always reject if the machine~$M$ terminates, and always accept if the machine $M$ does not terminate, so it would have to decide the Halting problem, which is absurd.
There are other decidability problems in probabilistic programming (e.g.~\cite{aafrr-comput-cond-ind,hms-computable-ppl}), but this barrier occurs even without any statisticial primitives. 

    In summary, we can program random lazy data structures using the affine \lstinline|Prob| monad and constructions like \lstinline|iid|, and execution must treat them lazily (Theorem~\ref{thm:iid}). 
    By contrast, we cannot hope to program lazy data structures using the \lstinline|Meas| directly, for decidability reasons (Proposition~\ref{prop:undecidable}). 
    Nonetheless we can build random lazy data structures using the \lstinline|Prob| monad, and constructions such as \lstinline|iid|, and then regard them as arbitrary measures using \lstinline|sample :: Prob a -> Meas a|.

    \subsection{Comparison with prior work on synthetic measure theory}

    Previous work considered a synthetic measure theory based on a cartesian closed category with countable sums and products and a commutative monad $M$ that is countably additive~\cite{kock,scibior}. We can then construct an affine submonad $P\subseteq M$ as an equalizer, and an object of scalars $R = M(1)$.
    Any such model is a model of the metalanguage in Sections~\ref{sec:monad-eq}:
    $\mlstinline{Prob}=P$,
    $\mlstinline{Meas}=M$,
    $\mlstinline{RealNum}=R$,
    \lstinline|sample| is the inclusion, and \lstinline|score| is the identity function.

This connects to other axiomatizations too: for a model of synthetic measure theory, the Kleisli category of~$M$ is a CD/GS-category~\cite{cho-jacobs,fl-free-gs,dario-thesis}, and the Kleisli category of~$P$ is a Markov category~\cite{fritz,representable_markov}. Infinite processes such as \lstinline|iid| have recently been studied in Markov categories~\cite{infinite_products}, but not in the unnormalized setting.

One difference between our metalanguage and synthetic measure theory is that \lstinline|Meas|, \lstinline|Prob| and \lstinline|RealNum| are given as explicit ingredients here; \lstinline|Prob| is not necessarily an equalizer and \lstinline|RealNum| is not necessarily equal to \lstinline|Meas ()|. We found it helpful to make these distinctions in practice. For instance, the expression \lstinline|do {x <- sample uniform ; score (f x)} :: Meas ()| intuitively corresponds to the \lstinline|RealNum| that is the integral $\int_0^1 \mlstinline{f}(x)\,\dd x$, but for general~\lstinline|f|, this can only be calculated approximately, e.g.~via Monte Carlo simulation.


\section{Examples taking advantage of laziness}

\label{sec:easyexamples}

We now give examples of probabilistic programs with lazy data structures, using the metalanguage from Section~\ref{sec:interface}. We are able to run the programs by using our implementation (LazyPPL, \S\ref{sec:monads:impl}, \cite{nonanongit}). 

The examples show that laziness, and two separate monads \lstinline|Prob| and \lstinline|Meas|, allow compositional programming; without this, we cannot use \lstinline|iid|, and we have to pass around truncation bounds.

\subsection{Point processes as lazy streams of points}\label{sec:poisson}
A random collection of points is called a \emph{point process}. A first example of an infinite point process is a \emph{Poisson} point process on the positive reals
(see also Figure~\ref{fig:affine} (a)).
In statistical notation, we consider a sequence of real random variables $x_0,x_1,\dots$ such that the gaps are exponentially distributed:
\[
  x_0\sim \mathcal E\mathit{xponential}(r)\qquad\quad
  (x_{i+1}-x_i)\sim \mathcal E\mathit{xponential}(r)
\]
This can be defined in a minor extension of our metalanguage, by giving a random stream
\footnote{Henceforth we use \lstinline|[a]| instead of \lstinline|(Stream a)|, recalling that Haskell lists include infinite lists.}:
\begin{lstlisting}[columns=flexible]
  poissonPP :: RealNum -> Prob [RealNum]
  poissonPP rate = do { steps <- iid (exponential rate) ; return (scanl (+) 0 steps) } 
       \end{lstlisting}

  This returns an infinite random stream, illustrated in Figure~\ref{fig:affine} (a). Although this is simple, it becomes powerful when we compose with other models, as we show in the following sections. (See also \cite{swaraj}, for more on programming with point processes.)

\paragraph{Comments on avoiding lazy data structures}       Without using infinite lazy data types, we can only implement a truncated Poisson point process, where an upper bound is specified:
\hide{\begin{lstlisting}
  poissonPPbounded :: RealNum -> RealNum -> RealNum -> Prob [RealNum]
  poissonPPbounded rate lower upper = 
      do step <- exponential rate
         let x = lower + step
         if x > upper then return [] else do
            xs <- poissonPPbounded rate x upper
            return (x : xs)
\end{lstlisting}}
\begin{lstlisting}[columns=flexible]
poissonPPbounded :: RealNum -> RealNum -> RealNum -> Prob [RealNum]
poissonPPbounded rate lower upper = do 
  step <- exponential rate
  let x = lower + step
  if x > upper then return [] 
  else do {xs <- poissonPPbounded rate x upper; return (x:xs)}
\end{lstlisting}
This truncated form is an obstacle to compositional modelling, because when the Poisson point process is used as part of a more complex model, we will have to calculate an upper bound and pass it around manually, as we will discuss further in Section~\ref{sec:easyexamples:reg}.

(As an aside, we mention that although the bounded program will always return a finite list in practice, this relies on a mathematical argument: any sequence of draws from an exponential distribution will almost surely go above a given upper bound in a finite number of steps. This is no problem in a general purpose programming language, but may prove subtle in, say, a dependently typed language with termination guarantees.)

Of course, some truncation must happen at some point. In practice, in LazyPPL, it happens at the top level, the plotting routine stops looking at the stream beyond the viewport, and the laziness propagates from here. 


\subsection{Piecewise regression with lazy change points}

\label{sec:easyexamples:reg}
\label{sec:easyexamples:regpiecewiselin}
\label{sec:reg}

We combine the Poisson Point Process (\S\ref{sec:poisson}) with the linear regression model (\S\ref{sec:easyexamples:reglin}), to obtain \emph{piecewise} linear regression, where the prior is over piecewise linear functions.   
We define a function that will splice together different draws from a random function given a random sequence of change points: 
\begin{lstlisting}[columns=flexible]
 spliceProb :: Prob [RealNum] -> Prob (RealNum -> RealNum) -> Prob (RealNum -> RealNum)
 spliceProb pp f = do {xs <- pp ; fs <- iid f ; return $ splice xs fs }
\end{lstlisting}
Here we assume a basic deterministic function
\begin{lstlisting}[columns=flexible]
 splice :: [RealNum] -> [RealNum -> RealNum] -> (RealNum -> RealNum)
\end{lstlisting}
which splices together a sequence of functions at the given change points. 
The sequence of change points can be infinite, so the function can have an infinite number of pieces.
But this is no problem if our probability monad is affine, and this can be handled lazily.

We can perform piecewise linear regression using a Poisson point process, for example via
\begin{lstlisting}[columns=flexible]
  piecewiseLinear :: Prob (RealNum -> RealNum)
  piecewiseLinear = spliceProb (poissonPP 0.2) linear
\end{lstlisting}
and then \lstinline{regress 0.1 piecewiseLinear dataset}, which gives Fig.~\ref{fig:regression} (b), using the inference in \S\ref{sec:msmh}.

\paragraph{Comments on avoiding lazy data structures.}
Our piecewise linear model demonstrates why it is important to have separate \lstinline|Prob| and \lstinline|Meas| monads. As we have explained in Section~\ref{sec:afflazy}, it is impossible to have  a general \lstinline|iid| on arbitrary measures, but it is fine on probability measures. We have used \lstinline|iid| twice in this model. It is part of \lstinline|spliceProb|, and so it is crucial to know that the prior \lstinline|f| is a probability measure on functions.

This piecewise linear regression can be implemented without infinite lazy structures, by using
\lstinline|poissonPPbounded| and rewriting \lstinline|spliceProb| to just sample a bounded number of functions \lstinline|fs|, depending on the length of \lstinline|xs|. For this first example, the bound can come from the chosen viewport, or data range, which might not be hard to calculate, but in general, propagating this is not compositional. 
For example, we can quickly rescale a random function on the $x$-axis:
\begin{lstlisting}[columns=flexible]
  rescale :: Prob (RealNum -> RealNum) -> Prob  (RealNum -> RealNum) 
  rescale p = do { f <- p ; return (\x -> f(2 * x)) }
\end{lstlisting}
We can then perform regression
with \lstinline{regress 0.1 (rescale piecewiseLinear) dataset}.
To do this without lazy structures, with explicit bounds, we would also need to rescale the bounds by hand. So compositionality with explicit bounds is impossible.
  


As an aside, we point out the beauty of higher-order probabilistic programming for compositional model building: it is very easy to switch the Poisson process for a different point process, or to use piecewise \emph{constant} regression, and so on (see the repository~\cite{nonanongit} for more examples).

\subsection{Clustering using a lazily broken stick}
\label{sec:dp}

For a set of data points, clustering is the problem of finding the most appropriate partition into clusters.
In non-parametric clustering, the number of clusters is unknown and unbounded.
As we demonstrate, this is closely connected to laziness. The separation into two monads,
\lstinline|Prob| and \lstinline|Meas|, is crucial for a compositional specification in our metalanguage. 

%

\subsubsection{Stick-breaking}
\label{sec:easyexamples:stickbreaking}
In \emph{Bayesian} clustering, one considers a prior distribution over possible partitions of the data into clusters. This is usually described in terms of stick-breaking, where the unit interval $[0,1]$ is broken into an infinite number of sticks, each representing a cluster, and the size of the stick is the proportion of points in that cluster. 

As an example, we consider stick-breaking for a Dirichlet process (e.g.~\cite{gv-bnp}), where at each step we break off a portion of the remaining interval according to a beta distribution.
 This is often written in statistical notation as
\[
r_k\sim\mathcal B\mathit{eta}(1,\alpha)\qquad
\textstyle v_k=r_k\cdot\prod_{i=1}^{k-1}(1-r_i)
\]
The random sequence $(v_0,v_1,\dots)$ has the property that 
$\sum_{i=0}^\infty v_i=1$, almost surely. 

 Stick-breaking is easy to define as a random stream \mlstinline{stickBreaking $\ \alpha$} in our metalanguage.
\begin{figure}
\includegraphics[scale=0.3]{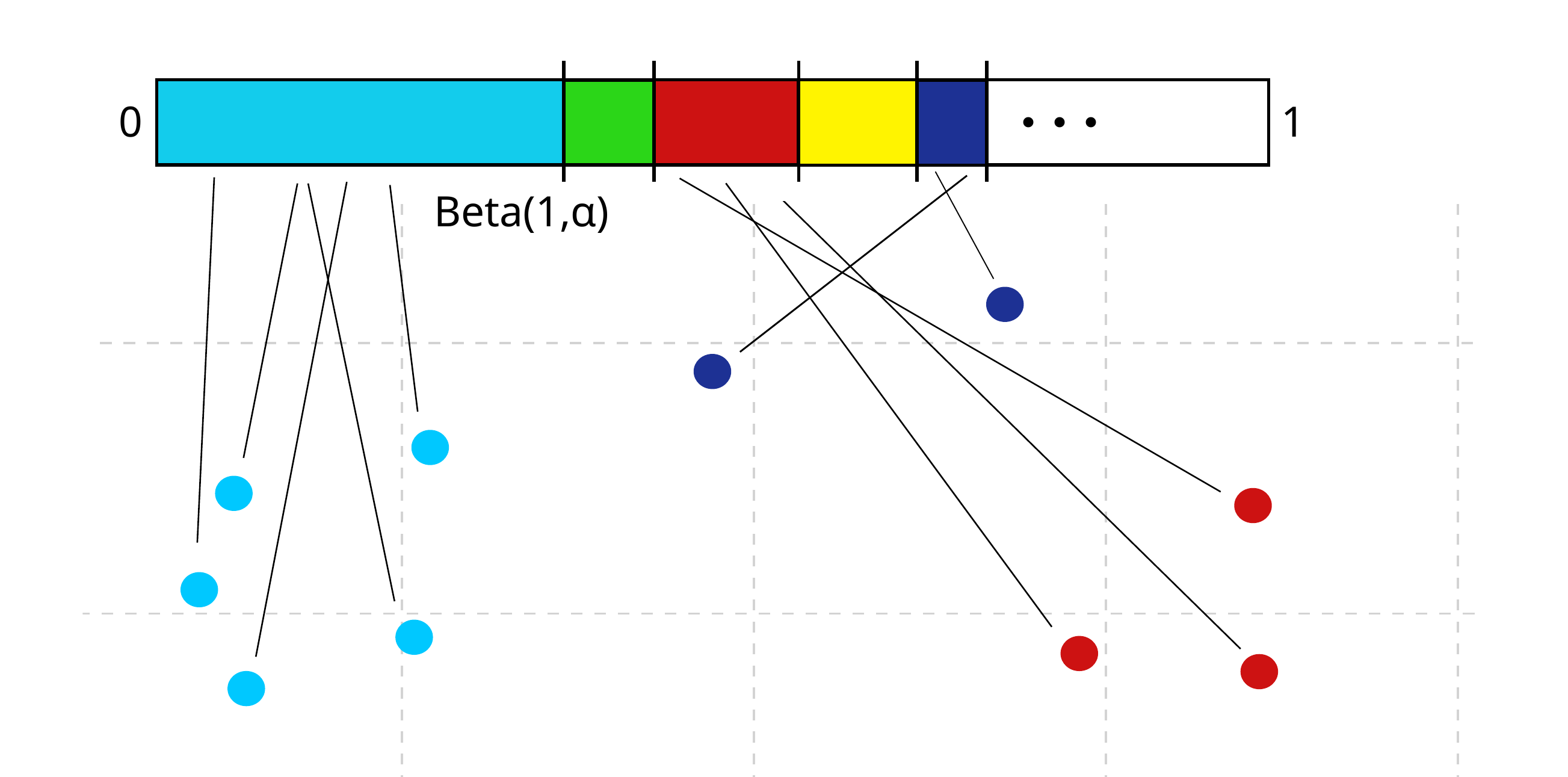}\qquad
\includegraphics[scale=0.18]{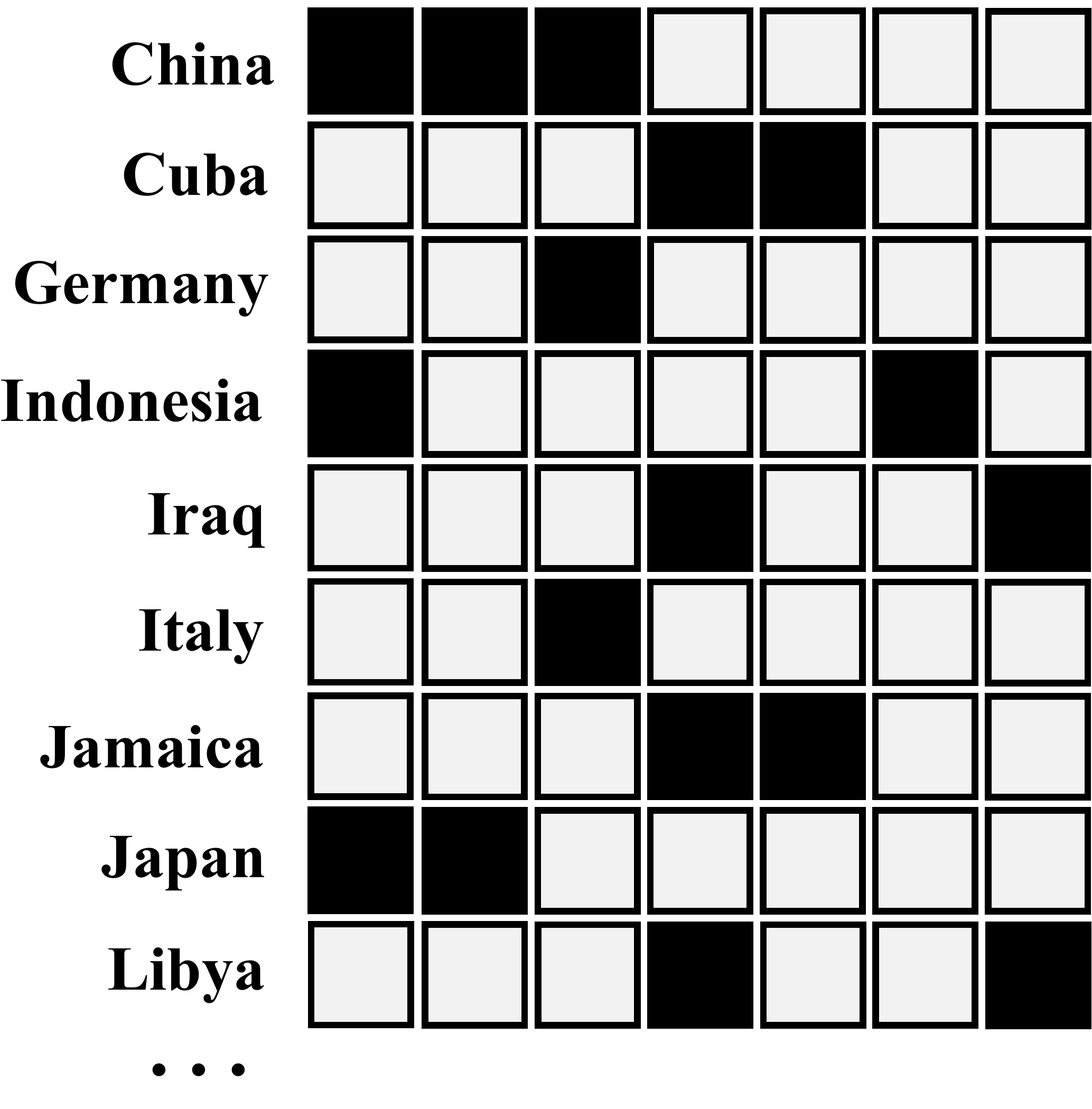}
\caption{From left to right: (a)~Dirichlet process clustering by stick-breaking (\S\ref{sec:dp});
  (b)~\label{fig:ibp}Feature extraction for a psychological experiment (\S\ref{sec:ibp}): the features (columns) are automatically inferred and often interpretable.\label{fig:dp}}
\end{figure}
\hide{\begin{lstlisting}
 stickBreaking :: RealNum -> Prob [RealNum]
 stickBreaking (*@$\cb\alpha$@*) = 
   do rs <- iid $ beta 1 (*@$\cb\alpha$@*)
      let vs = zipWith (*) rs $ scanl (*) 1 $ map (1-) rs
      return vs
\end{lstlisting}}
\begin{lstlisting}[columns=flexible]
 stickBreaking :: RealNum -> Prob [RealNum]
 stickBreaking (*@$\cb\alpha$@*) = do rs <- iid $ beta 1 (*@$\cb\alpha$@*)
                         let vs = zipWith (*) rs $ scanl (*) 1 $ map (1-) rs
                         return vs
\end{lstlisting}
\hide{\begin{lstlisting}[columns=flexible]
 stickBreaking :: RealNum -> Prob [RealNum]
 stickBreaking (*@$\cb\alpha$@*) = 
   do { rs <- iid (beta 1 (*@$\cb\alpha$@*)); return (zipWith (*) rs $ scanl (*) 1 $ map (1-) rs)}
\end{lstlisting}}
We can then regard this sequence of probabilities, which sums to $1$, as a distribution on natural numbers.
One way to do this is to use a uniform sample in $[0,1]$ to index into the sequence, via elementary functional programming. 
\begin{lstlisting}[columns=flexible]
 toProb :: [RealNum] -> Prob Int
 toProb vs = do {r <- uniform; return $ findIndex (> r) (scanl (+) 0 vs)} 
\end{lstlisting}
For clustering, we also assign some parameters to each cluster, i.e.~to each part of the stick. 
For example, if the clusters are located geometrically, we might associate a random position value to each cluster, assuming that the data points in this cluster follow a normal distribution around that position.
The parameters are taken from a `base' measure. The general procedure is the Dirichlet Process:
\hide{\begin{lstlisting}
 dp :: RealNum -> Prob a -> Prob (Prob a)
 dp (*@$\cb\alpha$@*) base = 
   do vs <- stickBreaking
      xs <- iid base
      return $ do {n <- toProb vs ; return $ xs !! n} 
\end{lstlisting}}
\begin{lstlisting}[columns=flexible]
dp :: RealNum -> Prob a -> Prob (Prob a)
dp (*@$\cb\alpha$@*) base = do {vs <- stickBreaking (*@$\cb\alpha$@*); 
  xs <- iid base; return $ do {n <- toProb vs; return (xs!!n)}}
\end{lstlisting}
%
%
We can then piece together a Dirichlet Process clustering model. 
\begin{lstlisting}[columns=flexible]
 cluster :: RealNum -> (Prob a) -> ((b, a) -> RealNum) -> [b] -> Meas [(b,a)]
 cluster (*@$\cb\alpha$@*) base likelihood dataset = do
  p <- sample $ dp (*@$\cb\alpha$@*) base
  xs <- sample $ iid p
  let taggedData = zip dataset xs 
  mapM_ (score . likelihood) taggedData
  return taggedData
\end{lstlisting}
This model (\mlstinline{cluster $\ \cb\alpha\ $ base likelihood dataset}) partitions the data into random clusters via stick-breaking, draws a value from \lstinline|base| for each cluster, and incorporates a score for each data point using \lstinline|likelihood|.

\paragraph{Comments on avoiding lazy data structures}
Our clustering model demonstrates why it is important to have a separate \lstinline|Prob| monad. 
We have used \lstinline|iid| three times in this model. As previously mentioned, it is impossible to have  a general \lstinline|iid| on arbitrary measures, but it is fine on probability measures (\S\ref{sec:afflazy}). For example, \lstinline|iid| is used in \lstinline|dp|, and so it is crucial to know that the argument \lstinline|base| is a \emph{probability} measure.

In the literature, there are various ways of avoiding the infinite lazy data structure of stick breaking. 
One common approach is to truncate the distribution. For example, we might pick some small $\epsilon$, and then find $n$ such that $(1-\sum_{i=1}^n v_i) < \epsilon$, so that the chance of needing $v_i$ with $i\geq n$ is small enough to be ignored. Care needs to be taken, then, to ensure that this $\epsilon$ is chosen suitably for the model, which is potentially difficult and non-compositional when stick-breaking is part of a complex model.

\subsection{Non-parametric feature assignment}
\label{sec:ibp}
Clustering assigns a single cluster to each data point. We briefly also mention the task of \emph{feature assignment}, which  assigns a finite set of features to each data point. A feature assignment for a data set consists of a finite set of features, together with a subset of these features for each data point. For example we can have a set of movies, and a feature assignment could be a set of genres for each movie. In the non-parametric setting, the number of features is unbounded even for a fixed data set.

As an application, we consider a basic problem from applied psychology \cite{navarroNonparametricBayesianMethod2006}. The problem is as follows. We have a set of countries together with a similarity coefficient for each pair of distinct countries, calculated based on answers from participants in a study.  
The goal is to infer a set of underlying features which characterize the countries in the participants' minds, and influence their judgement. 
The outcome is shown in Figure~\ref{fig:ibp} (b), as a Boolean matrix indicating assignments of features (columns) to countries (rows).
Notice that the inferred features do match actual perceived features, such as continent (column~1), industry (col.~2), development (col.~4), conflict (col.~7).  

We now describe the model, and where the laziness comes in. 
The overall process is described as a function 
\begin{lstlisting}[columns=flexible]
   bp :: RealNum -> Prob a -> Prob (Prob [a])
\end{lstlisting}
that corresponds to a combination of Beta and Bernoulli processes (e.g.~\cite{thibaux-jordan}).
This generalizes the Dirichlet process \lstinline[breaklines=true]|dp :: RealNum -> Prob a -> Prob (Prob a)| (\S\ref{sec:dp}) by producing a list rather than a single value. See~\cite{nonanongit} for details.

\paragraph{A note about implementation}
The Beta process can be constructed using a variant of stick-breaking: each stick now corresponds to a feature and its size is the proportion of points with that feature \cite{BetaStickBreaking}. But in practice this construction can only be used as an approximation, because we can only look at a finite number of sticks and the features are unbounded. By contrast, for the Dirichlet process we can stop stick-breaking once we have found a cluster. These difficulties appear to coincide with continuity issues in the analysis of the Beta process (e.g.~\cite[Sl.~6]{roy-ibp}, also~\cite{Roy-ChurchNP,roy2014continuum,aafrr-comput-cond-ind}). 

In our implementation~\cite{nonanongit} we avoid these issues by making use of some amount of hidden state. We set up an Indian Buffet process \cite{ibp}, which is a way to assign features to data points sequentially by keeping track of features assigned to previous points. This terminates, and gives the same model~\cite{thibaux-jordan}. The state is safely encapsulated (see e.g.~\cite{xrp-roy}), and we use streams to deal with the unbounded aspects.

\section{Further examples with random functions}
\label{sec:randfun}
Our examples so far have focused on random streams (\S\ref{sec:easyexamples}) and the \lstinline|iid| construction (\S\ref{sec:afflazy}). 
The type of streams is often isomorphic to the function type \lstinline|Nat -> a|, regarding a stream as the function that returns its value at each index. 
In this section we briefly discuss the role of the affinity of the monad~\lstinline|Prob| in exploring function types. 

We have seen random functions \lstinline|linear| and \lstinline|piecewiseLinear :: Prob (RealNum -> RealNum)| in Section~\ref{sec:easyexamples}. We now consider Gaussian Processes (\S\ref{sec:gp}) and stochastic memoization (\S\ref{sec:easyexamples:stochmem}), as other ways of building random functions. There are plenty more besides; in the repository we define random functions based on random programs, so that regression becomes program induction~\cite{nonanongit}.

\subsection{Gaussian process regression}
\label{sec:gp}

A Gaussian process (e.g.~\cite[Ch.~11]{gv-bnp}) is a random function $f$ with the property that for a finite input sequence $(x_1\dots x_n)$ the output distribution $(f(x_1)\dots f(x_n))$ is a multivariate normal distribution.
For simplicity, we focus on 1-dimensional Gaussian processes, which are random functions $\RR\to\RR$.
As with any function, we can call these with an unbounded number of arguments: $n$ is not fixed.

A Gaussian process is parameterized by a mean function $m:\RR\to \RR$ and a covariance
function $k:\RR^2\to \RR$. Gaussian processes admit conditioning: in statistical notation, if 
$f\sim \mathcal{GP}(m,k)$, then 
\[\text{for all $x_0\in \RR$, }\quad f(x_0)\sim \mathcal{N}\mathit{ormal}(m(x_0),\sqrt{k(x_0,x_0)})
    \quad\text{ and }\quad f|(x_0\mapsto y)\sim \mathcal{GP}(m|_{x_0\mapsto y},k|_{x_0\mapsto y})
  \]
  where $m|_{x_0\mapsto y},k|_{x_0\mapsto y}$ are respectively the conditional mean and covariance functions\footnote{$m|_{x_0\mapsto y}(x) = m(x) + \frac {k(x_0,x) (y - m(x_0))}{k(x_0,x_0)}$, and 
    $k|_{x_0\mapsto y}(u,v)=k(u,v)-\frac{k(u,x_0)k(x_0,v)}{k(x_0,x_0)}$.}.
To formalize this in our metalanguage, we extend it with a Gaussian process primitive
\begin{lstlisting}[columns=flexible]
gp :: (RealNum ->RealNum) -> (RealNum ->RealNum ->RealNum) -> Prob (RealNum ->RealNum)
\end{lstlisting}
and consider the following equation 
\hide{if $f\sim \mathcal{GP}(\mathit{cov})$
\[
  (f(x_1),\dots ,f(x_n))\sim \mathcal N\mathit{ormal}(0, \Sigma^{\mathit{cov}}_{x_1,\dots,x_n})
\]
where $\Sigma^{\mathit{cov}}_{x_1,\dots,x_n}$ is a covariance matrix, $\Big(\Sigma^{\mathit{cov}}_{x_1,\dots,x_n}\Big)_{i,j}=\mathit{cov}(x_i,x_j)$.}%
at type \lstinline|Prob (RealNum -> RealNum)|, for any \lstinline[mathescape]|x$\cb_0$|:
\begin{equation}\label{eqn:gp}
\begin{aligned}
     \mlstinline{gp m k}\ \ 
=\quad \mlstinline{do }\ \ &
\mlstinline{y <- normal (m x$\cb_0$) (sqrt (k x$\cb_0\;$ x$\cb_0$))}
\\&
\mlstinline{f <- gp (m' x$\cb_0\;$ y) (k' x$\cb_0\;$ y) }
\\&
\mlstinline{return (\\x -> if x==x$\cb_0\;$ then y else f x)}
\end{aligned}\end{equation}
where \mlstinline{m' x$\cb_0\;$ y} and \mlstinline{k' x$\cb_0\;$ y} are the conditional mean and covariance functions.
To lazily evaluate a Gaussian process we can use~\eqref{eqn:gp} to unroll it as necessary, using different values for~\lstinline[mathescape]|x$\cb_0$|, and then use
the affine property of the probability monad \eqref{eqn:affine} to remove any reference to \lstinline|gp|
(c.f.~Theorem~\ref{thm:iid}).

Once \lstinline|gp| is in our metalanguage, we can combine Gaussian processes with other random functions. 
For example, we can use a random linear function (\S\ref{sec:easyexamples:reglin}) as a mean for a Gaussian process:
\begin{lstlisting}[columns=flexible]
 gpWithLinear:: Prob (RealNum -> RealNum)
 gpWithLinear = do { f <- linear ; return $ gp f (rbf 1 1) }
\end{lstlisting}
(using the standard radial basis kernel \lstinline|rbf|). 
The result of running \lstinline|regress 0.3 gpWithLinear dataset| is a posterior over smooth curves fitting the points, shown in Figure~\ref{fig:regression} (c). 

\paragraph{Note about implementation}
Although the defining equation~\eqref{eqn:gp} is straightforward and reminiscent of Theorem~\ref{thm:iid}, it is different from the other distributions so far in that it is not a Haskell-style definition.
We implemented \lstinline|gp| in LazyPPL~\cite{nonanongit} by first making an infinite sequence of
standard normals, using \lstinline|iid (normal 0 1)|, and 
then by using a hidden memo table and standard linear algebra for conditional probability in Gaussian processes. 
In general, hidden state would violate the affine and commutativity properties; in this setting, since we validate~\eqref{eqn:gp}, the affine and commutative properties of \lstinline|Prob| (\ref{eqn:comm},\ref{eqn:affine})  remain satisfied.

\paragraph{Comments on the role of lazy structures}
Our Gaussian process regression example terminates because \eqref{eqn:gp} allows us to regard the value of the function \lstinline|f| at certain inputs, and disregard its value at unused arguments. The input values are the points of the data observations, and the points used for plotting, which are dependent on the viewport and the resolution. 

Since \lstinline|gp| uses the affine \lstinline|Prob| monad, and not the \lstinline|Meas| monad, it can be passed to the second-order distributions above from Section~\ref{sec:easyexamples}. We can write \lstinline|spliceProb (poissonPP r) (gp m k)| to make a piecewise Gaussian process, which amounts to what is called a `jump process' in statistics. We can write \lstinline[mathescape]|dp $\cb\alpha$ (gp m k)| for a Dirichlet process mixture of Gaussian processes. 

The function space of the metalanguage is here being used to hide a lazy process. One \emph{could} replace the call to \lstinline|gp| with a multivariate normal distribution of fixed dimension, but then the plotting points and observation points would need to be passed manually as an input to the function \lstinline|gpWithLinear|. The dimension would need to be picked based on the number of data points and the plotting resolution.

With the lazy behaviour of \lstinline|gpWithLinear|, we can easily rescale the function, for example using 
\lstinline|rescale gpWithLinear :: Prob(RealNum -> RealNum)|.
(We can consider more exotic rescalings too, such as `deep Gaussian processes'~\cite{deep-gp}, by composing Gaussian processes.)
With a non-lazy version based on explicit multivariate normal distributions, one would also need to take care to manually rescale the viewport, resolution, and the observation points, which would not be compositional. 


\subsection{Stochastic memoization}
\label{sec:easyexamples:stochmem}

In standard programming, memoization is a form of laziness where a function caches previous results instead of re-calculating~\cite{memo-nature}. In functional probabilistic programming, memoization becomes a powerful method for building infinite-dimensional probability measures (e.g. ~\cite{goodman:church,Roy-ChurchNP,wood-sequence-memoizer}).
We now discuss memoization in the setting of the metalanguage, with the separation of \lstinline|Prob| and~\lstinline|Meas|. 

In statistical terms, suppose we have a parameterized distribution, $\mathcal P(x)$ over a space $B$, with
parameters $x$ in a space $A$. 
Stochastic memoization provides a random function $F:A\to B$ with
\[
\forall x\in A.\quad  F(x)\ \sim\ \mathcal P(x)\text.
\]
We can study stochastic memoization in our metalanguage by adding a constant
\begin{lstlisting}
  memoize :: (a -> Prob b) -> Prob (a -> b)
\end{lstlisting}
The idea is that \lstinline|memoize p :: Prob (a -> b)|
randomly picks an assignment of results for each argument, informally by sampling once from \lstinline|(p x)| for every \lstinline|(x :: a)|.
 We consider the following equation at type \lstinline|Prob (a -> b)|:
\begin{equation}\label{eqn:mem}
    \hspace{-3mm} \mlstinline{memoize p}
= \ 
\mlstinline{ do \{ y <- p x$\cb_0$; f <- memoize p; return (\\x -> if x==x$\cb_0\;$ then y else f x)\}}\hspace{-3mm}
\end{equation}
As with \lstinline|gp|, if the memoized function is only called with a finite collection of arguments,
we can unroll equation~\eqref{eqn:mem} for each of the arguments, and then use the affine law~(\ref{eqn:affine}) to remove any reference to \lstinline|memoize|.


\subsubsection{Memoization on countable types}
By regarding \lstinline|Nat -> a| as isomorphic to \lstinline|Stream a|, we can define memoize with domain type \lstinline|Nat| 
just using basic primitives for lazy monadic streams, according to the following equation:
\begin{lstlisting}[columns=flexible]
 memoize :: (Nat -> Prob b) -> Prob (Nat -> b)
 memoize f = do { ys <- mapM f [0..] ; return $ \x -> ys !! x }
\end{lstlisting}
(In practice~\cite{nonanongit}, we define \lstinline|memoize| more efficiently using tries; c.f.~\cite{Hinze00memofunctions}.)
We can see, in particular, that
\lstinline|memoize (\_ -> p)| corresponds to \lstinline|iid p|. 
With this in mind, following Proposition~\ref{prop:undecidable}, we cannot hope to have
\lstinline|memoizeMeas :: (a -> Meas b) -> Meas (a -> b)|, and the restriction to the probability type is crucial.

\subsubsection{Illustrations of using memoization}

In Section~\ref{sec:dp} we have seen the Dirichlet process applied to clustering.
One common method in statistics is to regard the Dirichlet process as a random measure, by applying it to a generic base measure, for example,
\begin{lstlisting}
  dp (*@$\cb \alpha$@*) uniform :: Prob (Prob RealNum)
\end{lstlisting}
That is to say, rather than assigning informative parameters to the clusters (such as mean position) from the outset, we begin by assigning each cluster a name, picked uniformly from $[0,1]$.

Later on, of course, we would like to assign a position in (say) 2D space to each cluster. Since the clusters are named by real numbers, this assignment of positions to names of clusters is a random function $f:\RR\to \RR^2$ such that, say,
\[
  \forall r.\,   f_1(r) \sim \mathcal{N}\mathit{ormal}(0,3)\quad\&\quad
  f_2(r) \sim \mathcal{N}\mathit{ormal}(0,3)
  \text.
\]
This can be implemented in our metalanguage by using memoization:
\begin{lstlisting}[columns=flexible]
 p <- dp (*@$\cb \alpha$@*) uniform; xpos <- memoize (\_ -> normal 0 3); ypos <- memoize (\_ -> normal 0 3)
\end{lstlisting}
If we return \lstinline|do {r <- p ; return (xpos r,ypos r)}|, then this whole 
program is equivalent to using a different base distribution,
\lstinline[mathescape]|dp $\cb \alpha$ (mvnormal (0,0) (3,3))|.
However, the memoized form is more compositional: we can later assign other attributes to clusters in another part of the model (see~\cite{nonanongit}).



\hide{
In the clustering example, we can use memoization to assign a mean to each cluster. One implementation of our clustering model above is:
\begin{lstlisting}
 cluster dataset base likelihood = do 
    assignment <- sample $ assignClusters 0.3 dataset 
    positions <- sample $ memoize $ \c -> base -- lazily map values to clusters
    mapM (\(x, c) -> score $ likelihood (positions c) x) assignment 
    return assignment
\end{lstlisting}
The memoized function \lstinline|positions| is an infinite-dimensional object. There is no bound on the number of clusters, but only the dimensions needed need be computed. }





\section{Interpretation of the probability monad using infinite trees}

\label{sec:monads}
We now provide a model of the metalanguage from Section~\ref{sec:interface}, with two monads:
\lstinline|Prob| and \lstinline|Meas|. 
Our model is based on quasi-Borel spaces~\cite{qbs}. The novelty here is that, in prior work on quasi-Borel spaces, one works up to lots of measurable isomorphisms, e.g.~$\RR\cong\RR^\NN$. Here we are explicit about this, and so the connection to laziness is clarified. Because we are working more concretely, a new implementation is suggested (\S\ref{sec:monads:impl}, \cite{nonanongit}). 

\subsection{Affine and commutative monads from monoidal categories}\label{sec:monadsmonoidal}
\newcommand{\CatA}{\mathcal V}
\newcommand{\CatB}{\mathcal C}
\newcommand{\ob}{\mathit{ob}}

We begin by recalling a convenient presentation of commutative monads based on monoidal categories.
Recall that a symmetric monoidal category is a category $\CatB$ with a functor
${\otimes}:\CatB\times\CatB\to \CatB$ and an object $I$, together with coherent associativity and symmetry data.
It is called \emph{affine} if $I$ is a terminal object. The following is well known in some circles (e.g.~\cite{freyd-cat}), and allows us to define commutative monads by directly defining their Kleisli categories. 
\begin{lemma}\label{lem:freyd-cat}
  Let $J:\CatA\to \CatB$ be a functor that is identity on objects (i.e.~$\ob(\CatA)=\ob(\CatB)$).
  \begin{enumerate}
  \item If $J$ has a right adjoint $R$ then $\CatB$ is the Kleisli category of the monad $RJ$.
  \item Suppose moreover that $\CatA$ has products.
    Then to give $RJ$ the structure of a commutative monad~(\ref{eqn:comm}) is to give $\CatB$ the structure of
    a symmetric monoidal category such that $J$ is a strong symmetric monoidal functor.
  \item Moreover, $RJ$ is an affine monad~(\ref{eqn:affine}) if and only if $\CatB$ is affine monoidal.
  \end{enumerate}
\end{lemma}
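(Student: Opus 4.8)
The plan is to prove the three parts of Lemma~\ref{lem:freyd-cat} in sequence, since each builds on the previous, and the heart of the matter lies in carefully matching up the monad data on $RJ$ with the monoidal data on $\CatB$.

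For part (1), I would recall that an adjunction $J\dashv R$ always induces a monad $T=RJ$ on $\CatA$, together with a comparison functor $\CatB\to\CatA_T$ into the Kleisli category. Since $J$ is identity on objects and $J$ factors through the canonical functor $\CatA\to\CatB$ (as $J$ is the left adjoint and the left adjoint to $R$ from the Kleisli construction is identity-on-objects), the comparison functor is bijective on objects; one then checks it is fully faithful using the adjunction isomorphism $\CatB(JA,JB)\cong\CatA(A,RJB)=\CatA(A,TB)$ — this is essentially immediate because $J$ being identity-on-objects and having right adjoint $R$ forces $\CatB(A,B)=\CatB(JA,JB)\cong\CatA(A,TB)$, which is precisely the homset of the Kleisli category. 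So $\CatB\simeq\CatA_T$, in fact isomorphic.

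For part (2), under the identification $\CatB=\CatA_T$, I would invoke the standard correspondence (going back to work on commutative monads and monoidal Kleisli categories, e.g.~as in the cited~\cite{freyd-cat}): a strength for $T$ together with commutativity~(\ref{eqn:comm}) is exactly the data needed to lift the cartesian monoidal structure of $\CatA$ (which exists since $\CatA$ has products) to a symmetric monoidal structure on $\CatA_T$ making the canonical functor $J:\CatA\to\CatA_T$ strong symmetric monoidal. In one direction: given the commutative monad, define $A\otimes B$ on objects as $A\times B$, and on Kleisli morphisms use the double strength $T A\times T B\to T(A\times B)$ (well-defined and symmetric precisely by commutativity); the monoidal unit is the terminal-ish object $1=I$ of $\CatA$ sent through $J$, and coherence follows from the monad/strength laws. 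Conversely, given the symmetric monoidal structure on $\CatB$ with $J$ strong monoidal, one recovers the strength from the monoidal structure maps, and commutativity from the symmetry coherence. I would present this as an application of the known theorem rather than re-deriving all coherence diagrams.

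For part (3) — the statement actually being asked about — I would argue as follows. By part (2) we may assume $T=RJ$ is commutative and $\CatB=\CatA_T$ is symmetric monoidal with unit $I=J(1)$, where $1$ is the terminal object of $\CatA$. The monad $T$ is affine, meaning $T(1)\cong 1$ (equivalently the unit law~(\ref{eqn:affine}) holds: $\mlstinline{do \{x <- mx ; return ()\}} = \mlstinline{return ()}$), if and only if $1$ is terminal in the Kleisli category $\CatA_T$. Indeed, $\CatA_T(A,1)=\CatA(A,T1)$, and this is a singleton for every $A$ exactly when $T1\cong 1$ (using that $\CatA(A,1)$ is already a singleton since $1$ is terminal in $\CatA$, and the map $\CatA(A,1)\to\CatA(A,T1)$ induced by the unit $\eta_1$ is a bijection iff $\eta_1$ is an isomorphism iff $T1\cong1$ — here one uses that $\eta_1$ split monic and the affineness forces it epic, or simply that $T1$ being a singleton-valued representable forces $T1\cong1$). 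Finally, "$1$ is terminal in $\CatB$" is by definition "$\CatB$ is affine monoidal". The only subtlety to handle carefully is the equivalence "$T$ satisfies~(\ref{eqn:affine})" $\iff$ "$T1\cong1$": the forward direction takes $mx=\mlstinline{return}$ of the unique element to show the two canonical maps into $T1$ agree and are inverse to $\eta$; the converse is a diagram chase with naturality of $\eta$ and $\mu$. I expect this last equivalence, and making the identification $\CatB=\CatA_T$ fully precise in part (1), to be the main (though still routine) obstacles; everything monoidal in part (2) is cited wholesale.
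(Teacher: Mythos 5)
The paper does not actually prove this lemma: it is stated as ``well known in some circles'' with a pointer to the Freyd-category literature, so there is no in-paper argument to compare against. Your proposal reconstructs exactly the standard argument that the citation points to, and it is correct. Two small remarks. First, in part~(1) you describe ``a comparison functor $\mathcal C\to\mathcal V_T$''; the canonical comparison for the Kleisli construction goes the other way, $\mathcal V_T\to\mathcal C$ (Kleisli is initial among adjunctions inducing $T$), sending $A\mapsto JA$ and a Kleisli map $f\colon A\to TB$ to its adjunct $JA\to JB$. This is only a slip of direction: since you verify bijectivity on objects (from $J$ being identity-on-objects) and the hom-set bijection $\mathcal C(A,B)=\mathcal C(JA,JB)\cong\mathcal V(A,TB)$, you get an isomorphism of categories either way; the one step worth writing out explicitly is that this bijection carries Kleisli composition to composition in $\mathcal C$, which follows from naturality of the adjunction bijection and the triangle identities. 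Second, your part~(3) is the substantive contribution and is handled correctly: the equivalence between the discardability equation~(\ref{eqn:affine}), the condition $\eta_1\colon 1\to T1$ being invertible (split mono always, split epi by instantiating the equation at $\mathit{mx}=\mathrm{id}_{T1}$, and conversely terminality of $T1$ gives the equation for all $a$), and terminality of the monoidal unit in the Kleisli category, is exactly right.
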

Suppose that $\CatB$ is the Kleisli category for a monad, say \lstinline|Meas|. 
The composition in $\CatB$ corresponds to the sequencing of the metalanguage.
First, a context $\Gamma=(\mlstinline{x$\cb_1$::a$\cb_1$,...,x$\cb_n$::a$\cb_n$})$ is interpreted as the monoidal product
$\mlstinline{a$\cb_1$}\otimes\dots \otimes \mlstinline{a$\cb_n$}$ in $\CatB$. 
Then a term $\Gamma\vdash \mlstinline{t :: Meas a}$
can be interpreted as a Kleisli morphism, i.e.~a morphism $\Gamma \to \mlstinline{a}$ in $\CatB$. 
If we also have $\mlstinline {x::a},\Delta\vdash \mlstinline{u :: Meas b}$,
then the sequencing $\Gamma,\Delta\vdash \mlstinline{do \{x <- t ; u\} :: Meas b}$ is the following composite morphism:
\[
  \Gamma,\Delta \xrightarrow{\mlstinline{t}\otimes \Delta} \mlstinline{a},\Delta\xrightarrow {\mlstinline{u}} \mlstinline{b}
\]
With this in mind, the commutativity law~(\ref{eqn:comm}) amounts to the interchange law of monoidal categories,
$(\mlstinline{a'}\otimes g)\circ (f\otimes\mlstinline{b})=(f\otimes \mlstinline{b'})\circ (\mlstinline{a}\otimes g)$ (for $f\colon \mlstinline{a} \to \mlstinline{a'}$ and $g\colon \mlstinline{b} \to \mlstinline{b'}$).

We now define two monads by instantiating this lemma. In both cases, $\mathcal V$ is the category of quasi-Borel spaces (\S\ref{sec:monads:qbs}); $\mathcal C$ will be a category of probability (\S\ref{sec:monads:probker}) or measure kernels (\S\ref{sec:monads:measker}). 
\subsection{Key intuition: randomized functions and splitting}
\label{sec:randomizedfun}
Let $\Omega$ be a set of random seeds. A \emph{randomized function}
between sets $X$ and $Y$ is a function
${f\colon X\times \Omega\to Y}$, that depends on the random seed.
Suppose that we have a method for splitting random seeds,
$\gamma:\Omega\to\Omega\times \Omega$ (e.g.~\cite{splitmix}).
Then we can compose randomized functions  
\[f\colon X\times \Omega\to Y\qquad
  g\colon Y\times \Omega\to Z
  \qquad
  (g\circ f)\colon X\times \Omega\to Z
\]
by $(g\circ f)(x,\omega)=g(f(x,\omega_1),\omega_2)$,
where $\gamma(\omega)=(\omega_1,\omega_2)$. 
This is the essence of our treatment of probability. 
However, put plainly like this, composition is not associative. To achieve associativity, we equate certain randomized functions, but to do this we need to talk about expected values, measures and integration (\S\ref{sec:monads:qbs}--\ref{sec:monads:measker}). 

We aim to use Lemma~\ref{lem:freyd-cat} to convert this category into a Kleisli category.
Indeed, by currying, we can regard a function $f\colon X\times \Omega\to Y$ as a
function $X\to Y^\Omega$. Once we equate certain functions, $Y^\Omega$ becomes a monad, and we are thus in the setting of programming with monads. (We emphasise that, despite appearances, this is not the standard reader monad, and $(\Omega,\gamma)$ is not a comonoid.)


\subsection{Rudiments of quasi-Borel spaces}
\label{sec:monads:qbs}

\subsubsection{Rudiments of measure theory}
\label{sec:measure-monads}
We first recall some rudiments of measure theory.
\begin{definition}
  A measurable space $(X,\Sigma_X)$ is a set $X$ together with a set $\Sigma_X$ of `measurable subsets' of $X$, which must be a $\sigma$-algebra, i.e.~closed under countable unions and complements.
  A measure on a space $(X,\Sigma_X)$ is a function $\mu\colon\Sigma_X\to [0,\infty]$ that is $\sigma$-additive
  ($\mu(\biguplus_{i=1}^\infty U_i) = \sum_{i=1}^\infty \mu(U_i)$); it is a probability measure if it is normalized, i.e.~if $\mu(X)=1$. 
  A function $f\colon (X,\Sigma_X)\to (Y,\Sigma_Y)$ is \emph{measurable} if $f^{-1}(U)\in \Sigma_X$ for all $U\in\Sigma_Y$. 
\end{definition}
A key measurable space is $(\RR,\Sigma_\RR)$, where $\Sigma_\RR$ comprises the Borel sets, the least $\sigma$-algebra containing the open intervals. The unit interval $([0,1],\Sigma_{[0,1]})$ is a subspace, and the uniform measure on $[0,1]$ is a measure that assigns to each open interval its length. 
For any measure $\mu$ on $(X,\Sigma_X)$, we can find the expected value of any measurable function $f\colon (X,\Sigma)\to (\RR,\Sigma_\RR)$, notated $\int f(x)\,\mu(\dd x)\in[0,\infty]$, the Lebesgue integral of $f$ with respect to $\mu$. Two measures are the same if they induce the same integration operator. 

\hide{For any measurable space $(X,\Sigma_X)$, the set of probability measures $PX$ can be made into a measurable space,
with $\Sigma_{PX}$ the least $\sigma$-algebra making $\int f(x)\,[-](\dd x):PX \to [0,\infty]$ measurable for all $f\colon X\to \RR$. This is actually a monad, with $\mu\, \mlstinline{>>=} f$ (for $\mu\colon PX, \, f\colon X \to PY$) the measure which is the integration operator taking $g\colon Y\to \RR$ to the iterated integral
\[\int g(y)\,(\mu \, \mlstinline{>>=} \, f)(\dd y) \quad = \quad \int\int g(y) \,f(x,\dd y)\,\mu(\dd x)\text.\]
The category of measurable spaces is not cartesian closed, a point we revisit in Section~\ref{sec:monads}.
But we can actually still interpret the commutativity \eqref{eqn:comm}  and affine laws (\ref{eqn:affine}) for the Giry monad, which respectively amount to Fubini's theorem
\[
  \int\int f(x,y)\,\mu_Y(\dd y)\,\mu_X(\dd x)\ =\ 
 \int\int f(x,y)\,\mu_X(\dd x)\,\mu_Y(\dd y)
\]
and the unity of a probability measure 
\[
  \int 1 \, \mathit{mx}(\dd x) = 1
\]
However, for arbitrary measures, iterated integration does not work, and Fubini does not hold (see \cite{staton-probprog-measure} for a programming angle on this). This is arguably because of obscure measures such as the counting measure which have no role in programming or Monte Carlo simulation. It is an open problem to find a commutative monad of measures on the category of measurable spaces that
includes the one-point measures and the probability measures. We revisit this in Section~\ref{sec:monads:measker}.
}
\subsubsection{Borel spaces and quasi-Borel spaces}


We begin with the notion of \emph{standard Borel space}. In fact, we do not need the traditional definition;
the following characterization will suffice.

\begin{proposition}[e.g.~\cite{kechris}]\begin{enumerate}\item 
    A standard Borel space is a measurable space $(X,\Sigma_X)$ that is either (a)~countable, with $\Sigma_X$ the powerset of $X$, or (b)~measurably isomorphic to $(\RR,\Sigma_\RR)$.
    \item Any measurable subspace of $\RR$ is standard Borel (e.g. $[0,1]$ is standard Borel). 
    \item Standard Borel spaces are closed under countable products.\end{enumerate}\end{proposition}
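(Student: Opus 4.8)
The plan is to reduce all three claims to the classical theory of Polish spaces, working from the standard definition that a \emph{standard Borel space} is a measurable space $(X,\Sigma_X)$ such that $\Sigma_X$ is the Borel $\sigma$-algebra of some Polish (separable completely metrizable) topology on $X$. Three results of descriptive set theory are the load-bearing ingredients, which I would cite rather than reprove: (i) the \emph{perfect set theorem}, that every uncountable Polish space contains a homeomorphic copy of Cantor space $2^\NN$; (ii) the \emph{Lusin--Souslin theorem}, that an injective Borel image of a standard Borel space is again standard Borel, so that in particular any Borel bijection between standard Borel spaces has a Borel inverse; and (iii) the \emph{change-of-topology theorem}, that a Borel subset $B$ of a Polish space, equipped with the restricted $\sigma$-algebra, is itself standard Borel (refine the topology to make $B$ clopen without enlarging the Borel sets). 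Everything below is assembly on top of (i)--(iii).

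For part~1 the countable case is immediate: a countable standard Borel space has all singletons measurable, hence $\Sigma_X$ is the full powerset, which is case~(a); conversely a countable set with the powerset $\sigma$-algebra is the Borel algebra of the discrete topology, which is Polish. For the uncountable case I would prove the \emph{Borel isomorphism theorem}: any uncountable standard Borel space $X$ is measurably isomorphic to $(\RR,\Sigma_\RR)$. One has a Borel injection $X \hookrightarrow 2^\NN$ by embedding the underlying Polish space into the Hilbert cube $[0,1]^\NN$ and then coding $[0,1]$ into $2^\NN$ by binary expansions (using $(2^\NN)^\NN \cong 2^\NN$); and a Borel injection $2^\NN \hookrightarrow X$ by the perfect set theorem~(i). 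Splicing these by a Borel Schr\"oder--Bernstein argument---whose output is a Borel bijection, legitimised as a measurable isomorphism by~(ii)---gives $X \cong 2^\NN$, and $2^\NN$ is measurably isomorphic to $\RR$ by an explicit classical coding.

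Part~2 then follows immediately: if $B \subseteq \RR$ is Borel, then with its subspace $\sigma$-algebra it is standard Borel by~(iii), so part~1 applies and $B$ is either countable with the powerset $\sigma$-algebra (case~(a)) or measurably isomorphic to $(\RR,\Sigma_\RR)$ (case~(b)); taking $B=[0,1]$ gives the stated example. For part~3, let $(X_i,\Sigma_{X_i})_{i\in\NN}$ be standard Borel; by part~1 each is measurably isomorphic to a Borel subset $B_i \subseteq \RR$ (a countable set sits in $\RR$ as a Borel subset). These isomorphisms assemble into a measurable isomorphism between $\prod_i X_i$ with its product $\sigma$-algebra and $\prod_i B_i$ with its product $\sigma$-algebra; since $\RR$ is second countable, the product $\sigma$-algebra on $\RR^\NN$ coincides with the Borel $\sigma$-algebra of the (Polish) product topology, and under this identification $\prod_i B_i$ is a Borel subset of $\RR^\NN$ whose product $\sigma$-algebra is the restriction of that of $\RR^\NN$. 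Hence $\prod_i X_i$ is measurably isomorphic to a Borel subset of the standard Borel space $\RR^\NN$, and is therefore standard Borel by~(iii).

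The only genuinely non-trivial part is the Borel isomorphism theorem in part~1, and within it the two descriptive-set-theoretic inputs~(i) and~(ii); parts~2 and~3 are then routine bookkeeping with $\sigma$-algebras and Borel embeddings. Since the proposition is classical and used here only as a black box, in the actual write-up I would simply cite a reference such as Kechris for~(i)--(iii) and the Borel isomorphism theorem, and present only the short embedding and product arguments for parts~2 and~3---or cite the whole statement outright.
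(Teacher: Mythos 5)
Your proposal is mathematically sound: it is the standard descriptive-set-theoretic route (perfect set theorem, Lusin--Souslin, change of topology, Borel Schr\"oder--Bernstein) to the Borel isomorphism theorem, with parts~2 and~3 following by routine bookkeeping exactly as you describe. The paper itself gives no proof at all --- it states the proposition purely as a black-box characterization cited from Kechris --- so your closing instinct to cite the whole statement outright is precisely what the authors do, and your sketch is a faithful reconstruction of the textbook argument behind that citation.
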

Let $\Omega$ be a fixed uncountable standard Borel space (traditionally~${\Omega=\RR}$, but see Section~\ref{sec:probspace}).
    \begin{definition}[\cite{qbs}]
      A quasi-Borel space $(X,M_X)$ comprises a set $X$ together with a collection $M_X$ of functions $\Omega\to X$,
      called `admissible random elements',
      such that
      \begin{itemize}
      \item all constant functions are in $M_X$;
      \item composition: if $\alpha\in M_X$ and $f\colon \Omega\to\Omega$ is measurable then $(\alpha\circ f)\in M_X$;
      \item gluing: if $\alpha_1\ldots \alpha_n\ldots \in M_X$ and $\Omega=\biguplus_{n=1}^\infty U_i$ measurable then
        $ \alpha\in M_X$ where $\alpha(\omega)=\alpha_n(\omega)$ when $\omega\in U_i$.
      \end{itemize}
      A morphism $f\colon (X,M_X)\to (Y,M_Y)$ between quasi-Borel spaces is function such that for all $\alpha\in M_X$,
      $(f\circ \alpha) \in M_Y$. 
    \end{definition}
    \begin{proposition}[\cite{qbs}]\label{prop:qbs}
      Quasi-Borel spaces and morphisms form a category $\Qbs$ that is cartesian closed. Standard Borel spaces $(X,\Sigma_X)$ fully embed in $\Qbs$, taking $M_X$ to be the measurable functions. 
    \end{proposition}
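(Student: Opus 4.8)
The plan is to verify the category axioms, then build finite products, then exponentials, and finally check the embedding of standard Borel spaces. Composition and identities in $\Qbs$ are simply inherited from $\mathbf{Set}$: the identity function is a morphism since $\mathrm{id}_X\circ\alpha=\alpha$, and if $f\colon X\to Y$ and $g\colon Y\to Z$ are morphisms then $(g\circ f)\circ\alpha=g\circ(f\circ\alpha)\in M_Z$ for every $\alpha\in M_X$; associativity and unit laws follow. The terminal object is the one-point space with its unique quasi-Borel structure. For binary products I would equip $X\times Y$ with $M_{X\times Y}=\{\gamma\colon\Omega\to X\times Y \mid \pi_X\circ\gamma\in M_X,\ \pi_Y\circ\gamma\in M_Y\}$; constants, closure under measurable precomposition, and gluing all hold componentwise. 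The projections are morphisms by construction, and given morphisms $f\colon Z\to X$ and $g\colon Z\to Y$ the set-theoretic pairing $\langle f,g\rangle$ is a morphism because $\pi_X\circ\langle f,g\rangle\circ\delta=f\circ\delta\in M_X$ and likewise for $Y$ whenever $\delta\in M_Z$; uniqueness is inherited from $\mathbf{Set}$.

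The exponential is the main point. I would take $Y^X$ to have underlying set the hom-set $\Qbs(X,Y)$, and declare $\gamma\colon\Omega\to\Qbs(X,Y)$ admissible precisely when the uncurried map $\hat\gamma\colon\Omega\times X\to Y$, $(\omega,x)\mapsto\gamma(\omega)(x)$, is a $\Qbs$-morphism. Here one uses that $\Omega$, viewed as a standard Borel space, is an object of $\Qbs$ with $M_\Omega$ the measurable functions $\Omega\to\Omega$ (so in particular $\mathrm{id}_\Omega\in M_\Omega$), and that the admissible elements of $\Omega\times X$ are exactly the pairs $\langle\epsilon,\zeta\rangle$ with $\epsilon$ measurable and $\zeta\in M_X$. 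Checking that this defines a quasi-Borel structure: constants and precomposition are routine, but the delicate axiom is gluing. Given $\gamma_n\in M_{Y^X}$ and a measurable partition $\Omega=\biguplus_n U_n$, one tests the glued $\hat\gamma$ against an arbitrary $\langle\epsilon,\zeta\rangle\in M_{\Omega\times X}$ by pulling the partition back along $\epsilon$ to the measurable partition $\{\epsilon^{-1}(U_n)\}$ of $\Omega$, observing that on $\epsilon^{-1}(U_n)$ the composite agrees with $\hat\gamma_n\circ\langle\epsilon,\zeta\rangle\in M_Y$, and then invoking the gluing axiom of $M_Y$.

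For the universal property, evaluation $\mathrm{ev}\colon Y^X\times X\to Y$ is a morphism because for $\langle\gamma,\alpha\rangle\in M_{Y^X\times X}$ we have $\mathrm{ev}\circ\langle\gamma,\alpha\rangle=\hat\gamma\circ\langle\mathrm{id}_\Omega,\alpha\rangle\in M_Y$. For a morphism $f\colon Z\times X\to Y$ the transpose $\lambda f\colon Z\to Y^X$, $z\mapsto(x\mapsto f(z,x))$, first lands in $\Qbs(X,Y)$ since $(x\mapsto f(z,x))\circ\beta=f\circ\langle\mathrm{const}_z,\beta\rangle\in M_Y$ for $\beta\in M_X$; and it is itself a morphism because $\widehat{(\lambda f)\circ\delta}=f\circ(\delta\times\mathrm{id}_X)$, with $\delta\times\mathrm{id}_X$ a morphism (again componentwise, using closure of $M_Z$ under measurable precomposition to get $\delta\circ\epsilon\in M_Z$). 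Uniqueness of the transpose is forced by $\mathrm{ev}\circ(\lambda f\times\mathrm{id}_X)=f$.

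Finally, for the embedding: equipping a standard Borel space $(X,\Sigma_X)$ with $M_X=\mathrm{Meas}(\Omega,X)$ yields a quasi-Borel space, the only nonimmediate axiom being gluing, which is the classical fact that a function defined piecewise measurably over a countable measurable partition is measurable. The functor $(X,\Sigma_X)\mapsto(X,\mathrm{Meas}(\Omega,X))$ acting as the identity on maps is trivially faithful; fullness splits into two cases. If $X$ is countable with its powerset $\sigma$-algebra, every function out of $X$ is measurable, so there is nothing to check. If $X$ is uncountable, fix a measurable isomorphism $h\colon\Omega\to X$ (available since $\Omega\cong\RR\cong X$); then $h\in M_X$, so any $\Qbs$-morphism $f$ makes $f\circ h$ measurable, whence $f=(f\circ h)\circ h^{-1}$ is measurable. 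I expect the gluing axiom for the exponential to be the only step needing genuine care; everything else is bookkeeping over $\mathbf{Set}$ together with elementary facts about measurable functions and standard Borel spaces.
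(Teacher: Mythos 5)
Your proposal is correct, and it follows essentially the same route as the proof in the cited source \cite{qbs} (the paper itself only states this result by citation): products defined componentwise, exponentials carried by the hom-set with admissibility defined via uncurrying against $M_{\Omega\times X}$, gluing for the exponential handled by pulling the partition back along the measurable component, and fullness of the embedding via the case split on countable versus uncountable standard Borel spaces using a Borel isomorphism $\Omega\cong X$. Nothing further is needed.
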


\subsection{A category of probability kernels}
\label{sec:monads:probker}
We now revisit the intuition about randomized functions from Section~\ref{sec:randomizedfun} from a more formal perspective.
The key idea is that
$X$ and $Y$ there should be regarded as quasi-Borel spaces and the functions $f,g$ as quasi-Borel functions.
This allows us to equate randomized functions up-to equivalence of measures, giving an affine symmetric monoidal category. 

\subsubsection{Basic probability space}
\label{sec:probspace}
We now fix some basic ingredients:
\begin{itemize}
  \item a standard Borel space $(\Omega,\Sigma_\Omega)$ with a probability measure~$\mu$ on it;
\item a measure-preserving function
\[
  \gamma:(\Omega,\mu)\to(\Omega\times\Omega,\mu\otimes\mu)\text.
\]
i.e.~for all $f\colon \Omega\times \Omega\to \RR$,
$\int f(\gamma(\omega))\,\mu(\dd \omega)=
\int\int f(\omega_1,\omega_2)\,\mu(\dd\omega_2)\,\mu(\dd\omega_1)$;
\item a chosen uniformly distributed random variable $\upsilon\colon \Omega\to [0,1]$.
  \end{itemize}

A canonical example is to let $\Omega = [0,1]^{\NN^*}$, where $\NN^*$ is the set of finite lists of natural numbers, and let
\[\gamma(\omega)=\big(\lambda (i_1, \dots,  i_n). \,\omega(0,i_1, \dots, i_n),\ \lambda (i_1, \dots, i_n). \,\omega(i_1+1, \dots, i_n)\big)\]
In fact, this $\gamma$ is an isomorphism. 
 For an intuition, recall that a list of natural numbers describes a path to a node in the tree that is infinitely deep and infinitely wide (sometimes called a `rose tree').
 So each $\omega\in\Omega$ is an infinitely wide and deep tree where every node is annotated with a real number, and
 $\gamma$ splits the tree as indicated by the dotted line: 
\newcommand{\seq}[1]{\omega(#1)}
\[
  \begin{tikzpicture}
\node (empty) {$\seq{}$};

\node (gammastart) [left=1cm of empty] {};
\node (gammaend) [below=1cm of empty] {};
\node (gamma)  [above left =1mm and -1mm of gammaend,scale=0.7] {$\gamma$};

\draw[dashed,very thick] (gammastart) -- (gammaend);

\node (0) [below left = 5mm and 1cm of empty] {$\seq{0}$};
\node (1) [right=3cm of 0,scale=1] {$\seq{1}$};
\node (2) [right=2 cm of 1,scale=.7] {$\seq{2}$};
\node (3) [right=1cm of 2,scale=.7] {$\seq{3}$};
\node (ldots) [right=0cm of 3,scale=.7] {$\ldots$};

\draw (empty)--(0);
\draw (empty)--(1);
\draw (empty)--(2);
\draw (empty)--(3);

\node (00) [below left = 3mm and 1cm of 0,scale=.9] {$\seq{0,0}$};
\node (01) [right=.8cm of 00,scale=.7] {$\seq{0,1}$};
\node (02) [right=.6 cm of 01,scale=.7] {$\seq{0,2}$};
\node (0ldots) [right=0cm of 02,scale=.6] {$\ldots$};

\draw (0)--(00);
\draw (0)--(01);
\draw (0)--(02);

\node (10) [right = .5cm of 0ldots,scale=.7] {$\seq{1,0}$};
\node (11) [right=.8cm of 10,scale=.7] {$\seq{1,1}$};
\node (12) [right=.6 cm of 11,scale=.7] {$\seq{1,2}$};
\node (1ldots) [right=0mm of 12,scale=.6] {$\ldots$};

\draw (1)--(10);
\draw (1)--(11);
\draw (1)--(12);

\node (vdots2) [below = -2mm of 2,scale=0.6] {$\vdots$};
\node (vdots3) [below = -2mm of 3,scale=0.6] {$\vdots$};
\node (vdots00) [below = -2mm of 00,scale=0.6] {$\vdots$};
\node (vdots01) [below = -2mm of 01,scale=0.6] {$\vdots$};
\node (vdots02) [below = -2mm of 02,scale=0.6] {$\vdots$};
\node (vdots10) [below = -2mm of 10,scale=0.6] {$\vdots$};
\node (vdots11) [below = -2mm of 11,scale=0.6] {$\vdots$};
\node (vdots12) [below = -2mm of 12,scale=0.6] {$\vdots$};
\end{tikzpicture}
\]

Our probability measure~$\mu$ on this choice of $\Omega$
is the countably-infinite product measure of the uniform distribution, given by the Kolmogorov extension theorem. 
For each path $(i_1,\dots, i_n)\in \NN^*$,
the projection function gives a random variable $\Omega\to [0,1]$, which is uniformly distributed, and these are all independent.
In particular, the empty path gives $\upsilon:\Omega\to[0,1]$, with
$\upsilon(\omega)=\omega()$. 

\subsubsection{Probability kernels}
We return to the question of when two randomized functions should be equated (\S\ref{sec:randomizedfun}). Suppose that $\Omega$ is equipped with a probability measure, as in Section~\ref{sec:probspace}.
Although a quasi-Borel space $Y$ is not (a priori) a measurable space,
we still have a construction like a `push-forward measure' along any morphism $f:\Omega\to Y$. By this we mean that for any morphism $h:Y\to \RR$, we can define its `expected value'
to be $\int h(f(\omega))\,\mu(\dd\omega)$. This is a Lebesgue integral because the 
 composite function $hf:\Omega\to \mathbb{R}$ is a morphism, and $\Omega$ and $\mathbb{R}$ are standard Borel spaces, so $hf$ is measurable (Prop.~\ref{prop:qbs}).
In traditional measure theory, two measures inducing the same expectation operation must be equal. We use this intuition to formulate when randomized functions should be equated in this setting. 
\begin{definition}Let $X$ and $Y$ be quasi-Borel spaces.
  A \emph{probability kernel} $f\colon X\rightsquigarrow Y$ is a quasi-Borel function $f\colon X\times \Omega\to Y$. 
  We consider the equivalence relation on probability kernels that is determined by
  \begin{equation}\label{eq:kernel-equiv-relation}
    f\sim g \text{ if for all $x\in X$ and all morphisms $h:Y\to \RR$,} \textstyle\int h(f(x,\omega))\,\mu(\dd \omega)= \int h(g(x,\omega))\, \mu(\dd \omega)\text.
  \end{equation}
\end{definition}
We can perform various constructions on probability kernels:
\begin{itemize}
\item There is a probability kernel $1\rightsquigarrow \RR$
  which describes the uniform distribution on the unit interval $[0,1]$, coming from $\upsilon:\Omega\to [0,1]$. 
\item For any $X$, the identity probability kernel $X\rightsquigarrow X$ is the projection function $X\times \Omega\to X$. 
\item We \emph{compose} two probability kernels $f\colon X\rightsquigarrow Y$, $g\colon Y \rightsquigarrow Z$,
  obtaining a probability kernel $gf\colon X\rightsquigarrow Z$ given by:
  \[
    X\times \Omega\xrightarrow {X\times \gamma} X\times \Omega\times \Omega\xrightarrow {f\times \Omega} Y\times \Omega\xrightarrow g Z
  \]
\item We \emph{tensor} two probability kernels $f\colon A\rightsquigarrow B$, $g\colon X \rightsquigarrow Y$,
  obtaining a probability kernel $f\otimes g\colon (A\times X)\rightsquigarrow (B\times Y)$ given by:
  \[
   A\times X\times \Omega \xrightarrow {A\times X\times \gamma} A\times X\times \Omega\times \Omega \xrightarrow \cong A\times \Omega\times X \times\Omega \xrightarrow {f\times g} B\times Y
 \]
\end{itemize}
\begin{proposition}
  Probability kernels modulo equivalence~(\ref{eq:kernel-equiv-relation}) form a monoidal category~$\ProbKer$: that is, composition and tensor are associative and unital up to equivalence, the interchange law is satisfied up to equivalence, 
  and the operations on probability kernels respect equivalence relations. 
\end{proposition}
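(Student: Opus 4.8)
The plan is to reduce every clause of the proposition to one technical device together with a Fubini calculation. The device --- call it the \emph{expectation lemma} --- is that, since $\Omega$ is standard Borel and $\mu$ is a probability measure, the operation sending a quasi-Borel morphism $k\colon Y\times\Omega\to\RR$ to $\overline{k}\colon Y\to\RR$, $\overline{k}(y)=\int k(y,\omega)\,\mu(\dd\omega)$, again yields a morphism of quasi-Borel spaces. Indeed, for any $\alpha\in M_Y$ the map $(\omega',\omega)\mapsto k(\alpha(\omega'),\omega)$ is a morphism $\Omega\times\Omega\to\RR$, hence measurable by Proposition~\ref{prop:qbs} since $\Omega\times\Omega$ and $\RR$ are standard Borel; so by Tonelli/Fubini on $\Omega\times\Omega$ its partial integral $\omega'\mapsto\int k(\alpha(\omega'),\omega)\,\mu(\dd\omega)$ is measurable, i.e.\ $\overline{k}\circ\alpha\in M_\RR$. (To make all the integrals in~\eqref{eq:kernel-equiv-relation} well defined one takes the test morphisms $h$ to be bounded --- equivalently $[0,1]$-valued --- which is no loss and matches the quasi-Borel probability monad of~\cite{qbs}.) The second ingredient is that the measure-preserving property of $\gamma$ gives, for every morphism $F\colon\Omega\times\Omega\to\RR$, the \emph{unsplitting identity} $\int F(\gamma(\omega))\,\mu(\dd\omega)=\int\int F(\omega_1,\omega_2)\,\mu(\dd\omega_1)\,\mu(\dd\omega_2)$, and that Fubini on the standard Borel spaces $\Omega^n$ allows the resulting iterated integrals to be reordered freely.

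First I would establish that composition and tensor respect the equivalence relation. Given $f\sim f'$ and $g\sim g'$, to show $g\circ f\sim g'\circ f'$ I fix a test morphism $h$ and expand $\int h\big((g\circ f)(x,\omega)\big)\,\mu(\dd\omega)$, using the definition of Kleisli composition and the unsplitting identity, into $\int \overline{h\circ g}\big(f(x,\omega_1)\big)\,\mu(\dd\omega_1)$, where $\overline{h\circ g}\colon Y\to\RR$ is a morphism by the expectation lemma. Now $g\sim g'$ says precisely $\overline{h\circ g}=\overline{h\circ g'}$ pointwise, so this equals $\int \overline{h\circ g'}\big(f(x,\omega_1)\big)\,\mu(\dd\omega_1)$; and since $\overline{h\circ g'}$ is a morphism and $f\sim f'$, this equals $\int \overline{h\circ g'}\big(f'(x,\omega_1)\big)\,\mu(\dd\omega_1)$, which refolds to $\int h\big((g'\circ f')(x,\omega)\big)\,\mu(\dd\omega)$. (It is cleanest to do this in two steps, $g\circ f\sim g'\circ f\sim g'\circ f'$.) The analogous statement for $\otimes$ is the same argument after inserting the bookkeeping isomorphism $A\times X\times\Omega^2\cong A\times\Omega\times X\times\Omega$.

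For the category and monoidal axioms I would in each case pick an arbitrary test morphism, expand both sides completely into iterated integrals over a finite power of $\Omega$ by repeated use of the unsplitting identity, observe that the two integrands agree as functions of the leaf variables, and invoke Fubini on $\Omega^n$ to equate the two iterated integrals irrespective of how the splittings were nested. Concretely, the unit laws $\mathrm{id}\circ f\sim f\sim f\circ\mathrm{id}$ reduce after unsplitting to the identity $\int\int h\big(f(x,\omega_1)\big)\,\mu(\dd\omega_2)\,\mu(\dd\omega_1)=\int h\big(f(x,\omega_1)\big)\,\mu(\dd\omega_1)$, valid because $\mu(\Omega)=1$ --- this is exactly where affinity shows up. Associativity of $\circ$, associativity, unitality and symmetry of $\otimes$, and the interchange law all fall to the same ``unsplit, reorder by Fubini, refold'' pattern; for the interchange law the point is that the two kernels being tensored act on disjoint coordinates and are fed independent seeds, so Fubini cleanly separates their contributions.

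I expect the main obstacle to be the expectation lemma and its measure-theoretic fine print: fixing a class of test morphisms for which every integral in sight is defined, and checking that partial integration against $\mu$ really does preserve quasi-Borel morphisms. This is the one step that genuinely uses the hypotheses that $\Omega$ is standard Borel, that standard Borel spaces are closed under countable products, and that Fubini holds for probability (indeed $\sigma$-finite) measures on them; it is precisely the place where the category of general measurable spaces would fail. Once the lemma is in place, every remaining clause is routine unwinding.
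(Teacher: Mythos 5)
The paper states this proposition without proof, so there is nothing to compare against line by line; your argument is the standard one that the authors presumably have in mind, and it is correct. You have also put your finger on the one genuinely non-routine ingredient: that averaging a morphism $k\colon Y\times\Omega\to\RR$ over the second argument again yields a quasi-Borel morphism $Y\to\RR$, which is exactly what is needed to feed $\overline{h\circ g'}$ back into the equivalence relation for $f$, and which really does use that $\Omega$ (and $\Omega\times\Omega$) is standard Borel, that the embedding of standard Borel spaces into $\Qbs$ is full and preserves countable products, and Tonelli/Fubini. Your fine-print fix --- taking the test morphisms $h$ in~(\ref{eq:kernel-equiv-relation}) to be bounded so that all integrals are defined, noting that this does not change the relation for probability kernels --- is the right reading of the definition. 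The remaining axioms do all reduce, as you say, to the measure-preservation of $\gamma$ (and of its iterated nestings into $\Omega^n$), $\mu(\Omega)=1$ for the unit laws, and Fubini for associativity and interchange; the only bookkeeping worth spelling out in a full write-up is that the two nestings $(\gamma\times\mathrm{id})\circ\gamma$ and $(\mathrm{id}\times\gamma)\circ\gamma$ need not agree as functions but are both measure-preserving into $(\Omega^3,\mu^{\otimes 3})$, which is why both sides of the associativity law unfold to the same triple integral.
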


We can regard any quasi-Borel function $f\colon X\to Y$ as a probability kernel
\raisebox{0pt}[0pt]{$(X\times \Omega\xrightarrow{\mlstinline{fst}} X\xrightarrow f Y)$}; this induces an identity-on-objects functor $\Qbs\to \ProbKer$. 
\begin{proposition}[see~\cite{qbs}]\label{prop:prob-monad}
  The inclusion functor $\Qbs\to\ProbKer$ has a right adjoint.
  That is, the functions $(\Omega\to X)$ modulo equivalence~(\ref{eq:kernel-equiv-relation}) form an affine commutative monad
  on the category of quasi-Borel spaces. 
\end{proposition}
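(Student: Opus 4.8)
The plan is to apply Lemma~\ref{lem:freyd-cat} to the identity-on-objects functor $J\colon\Qbs\to\ProbKer$ described just before the statement. It is enough to produce a right adjoint $R$: part~(1) then says $\ProbKer$ is the Kleisli category of $RJ$, and since $RJ(X)=R(JX)=\ProbKer(1,X)$, which (using $1\times\Omega\cong\Omega$) is exactly $(\Omega\to X)$ modulo~\eqref{eq:kernel-equiv-relation}, this identifies the monad; parts~(2) and~(3) then upgrade it to commutative and affine.

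For the object part of $R$ we are essentially forced: taking $X=1$ in the desired isomorphism $\ProbKer(JX,Y)\cong\Qbs(X,RY)$ shows the underlying set of $RY$ must be $(\Omega\to Y)/{\sim}$, with $\sim$ as in~\eqref{eq:kernel-equiv-relation} at $X=1$. I would equip it with the quasi-Borel structure in which $\beta\colon\Omega\to RY$ is admissible iff it has a \emph{qbs witness}, i.e.\ a morphism $g\colon\Omega\times\Omega\to Y$ with $\beta(\omega)=[g(\omega,-)]$ for all $\omega$; the three quasi-Borel axioms are routine (constants have witness $g(\omega,\omega')=\alpha(\omega')$, measurable reparametrization acts on the first coordinate of the witness, and gluing along a countable measurable partition is gluing of witnesses using the gluing axiom in $Y$). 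On morphisms, $R$ sends a kernel $\phi\colon Y\rightsquigarrow Y'$ to $[\alpha]\mapsto$ (the class of the $\ProbKer$-composite of $[\alpha]\colon1\rightsquigarrow Y$ with $\phi$), which is well defined and quasi-Borel by inspection. The candidate hom-bijection $\ProbKer(JX,Y)\cong\Qbs(X,RY)$ is currying: $[f]\mapsto(x\mapsto[f(x,-)])$. It is well defined on $\sim$-classes, and injective, precisely because~\eqref{eq:kernel-equiv-relation} is stated pointwise in $x$; quasi-Borel-ness of $x\mapsto[f(x,-)]$ is immediate since $f\circ(\alpha\times\mathrm{id}_\Omega)$ is a witness for each $\alpha\in M_X$. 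Naturality in both variables is a calculation whose only nonformal input is that $\gamma$ is measure-preserving (\S\ref{sec:probspace}), so that $\pi_2\circ\gamma$ preserves $\mu$ and the split-and-recombine composition of $\ProbKer$ agrees with plain substitution up to $\sim$ — this is what already makes $\ProbKer$ a category, reused here.

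The one step that genuinely requires work is \emph{surjectivity} of currying: every quasi-Borel morphism $X\to RY$ must arise from an \emph{actual} morphism $X\times\Omega\to Y$, not merely from a pointwise choice of representatives of the classes $q(x)$. This is exactly the content of the construction of the probability monad on quasi-Borel spaces: $RY$ as defined above is the carrier $PY$ of that monad, with this very quasi-Borel structure, and in~\cite{qbs} the morphisms into $PY$ are shown to be precisely the equivalence classes of parametrized admissible elements. I would therefore discharge this step by invoking~\cite{qbs}; it then follows that $\ProbKer$ is (isomorphic over $\Qbs$ to) the Kleisli category $\mathrm{Kl}(P)$ and $J$ is the canonical functor, so the right adjoint $R$ exists and Lemma~\ref{lem:freyd-cat}(1) applies.

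It remains to invoke parts~(2) and~(3). The category $\Qbs$ has finite products since it is cartesian closed (Proposition~\ref{prop:qbs}); $\ProbKer$ is symmetric monoidal by the preceding proposition together with the symmetry inherited from the cartesian product of $\Qbs$; and $J$ is strong symmetric monoidal because it discards the seed, so that $J(f\times g)$ coincides with $J(f)\otimes J(g)$, $J(1)=I$, and the coherence isomorphisms are images of those in $\Qbs$. Hence by part~(2) the monad $RJ$ satisfies commutativity~\eqref{eqn:comm}. For part~(3), $\ProbKer$ is affine monoidal because its unit $I=1$ is terminal: there is exactly one function $X\times\Omega\to1$, hence exactly one kernel $X\rightsquigarrow1$ (equivalently $RJ(1)=(\Omega\to1)/{\sim}\cong1$), so $RJ$ is affine~\eqref{eqn:affine}. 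The expected main obstacle is the surjectivity/uniformization step above: for standard Borel $Y$ one can do it by hand using the uniform variable $\upsilon$ (e.g.\ an inverse-CDF section of $\RR^\Omega\to P\RR$), but for a general quasi-Borel space it rests on the analysis of~\cite{qbs}; everything else is bookkeeping once the quasi-Borel structure on $RY$ and the measure-preserving role of $\gamma$ are fixed.
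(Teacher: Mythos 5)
Your proposal is correct and follows essentially the same route as the paper, which offers no proof of its own beyond the attribution to~\cite{qbs}: the surrounding text (\S\ref{sec:monadsmonoidal}--\S\ref{sec:monads:probker}) sets things up precisely so that, once the right adjoint to $J\colon\Qbs\to\ProbKer$ is in hand, Lemma~\ref{lem:freyd-cat} delivers the commutative affine monad, and you correctly isolate the one nontrivial ingredient (that every $\Qbs$-morphism $X\to RY$ is uniformly represented by a single kernel $X\times\Omega\to Y$, i.e.\ surjectivity of currying) and defer it to~\cite{qbs}, exactly as the paper does. The remaining bookkeeping you supply (the witness-based quasi-Borel structure on $(\Omega\to Y)/{\sim}$, strong symmetric monoidality of $J$, and terminality of $I=1$ in $\ProbKer$ for affineness) is accurate.
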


\subsection{A category of measure kernels}
\label{sec:monads:measker}
 We now turn to unnormalized measures.
The notion of probability kernel on quasi-Borel spaces accounts for the basic notion of pushing forward a probability measure along a
function. 
The other key method for building probability measures, and measures generally, is using \emph{densities} or \emph{weights}.
For example, the density of the beta distribution, $6x(1-x)$, defines a measure on the unit interval. 
Densities can also construct unnormalized measures: starting from the standard normal distribution on $\RR$, the weight
$(\sqrt{2\pi}) e^{\frac 12 x^2}$ defines the Lebesgue measure on $\RR$, which assigns to each open interval its length
(see also~\cite{staton-probprog-measure}).

A parameterized measure, i.e.~a measure kernel, will thus be a probability kernel together with a weight function. 
As motivated in Section~\ref{sec:interface}, this matches the two operations forming measures in the metalanguage,
\lstinline|sample| and \lstinline|score|.
\begin{definition}
  A \emph{measure kernel} $(f,\ell)\colon X\rightsquigarrow Y$ is a pair of quasi-Borel functions, $f:X\times \Omega\to Y$, $\ell\colon X\times \Omega\to [0,\infty]$. 
  We consider the equivalence relation on probability kernels that is determined by
  $(f,\ell)\sim (f',\ell') \text{ if for all $x\in X$ and all morphisms $g:Y\to \RR$,}$\[ \int \ell(x,\omega)\cdot g(f(x,\omega))\ \mu(\dd \omega)= \int  \ell'(x,\omega) \cdot g(f'(x,\omega) )\ \mu(\dd \omega)\text.
  \]
\end{definition}
We can perform various constructions on measure kernels too. 
\begin{itemize}
\item Any probability kernel $X\rightsquigarrow Y$ can be regarded as a measure kernel with constant weight $1$.
\item Any morphism $w\colon X\to \RR$ can be regarded as a measure kernel $X\rightsquigarrow 1$ onto the one point space.
\item We \emph{compose} measure kernels by composing the probability kernels and multiplying the weights. 
\item We \emph{tensor} measure kernels by tensoring the probability kernels and multiplying the weights. 
\item We can regard any quasi-Borel function $X\to Y$ as a measure kernel
$X\rightsquigarrow Y$, with weight constant~$1$; this induces an identity-on-objects functor $\Qbs\to \MeasKer$. 
\end{itemize}
\begin{proposition}[\cite{scibior}, \S4.3.3]\label{prop:meas-monad}
  Measure kernels modulo equivalence form a monoidal category.
  The inclusion functor $\Qbs\to\MeasKer$ has a right adjoint, and 
  so the functions $\Omega\to (X\times \RR)$ modulo equivalence form a commutative monad
  on the category of quasi-Borel spaces. 
\end{proposition}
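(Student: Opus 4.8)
The plan is to mirror the treatment of probability kernels behind Proposition~\ref{prop:prob-monad}, carrying the extra weight component through, and then to invoke Lemma~\ref{lem:freyd-cat}. I would establish in turn: (i)~that measure kernels modulo equivalence form a category $\MeasKer$, i.e.\ that composition respects $\sim$ and is associative and unital modulo $\sim$; (ii)~that the tensor makes $\MeasKer$ a symmetric monoidal category, with the identity-on-objects functor $J\colon\Qbs\to\MeasKer$ (a quasi-Borel map viewed as a kernel with constant weight $1$) strong symmetric monoidal; and (iii)~that $J$ has a right adjoint $R$, namely the quotient quasi-Borel space $R(Y)=(Y\times[0,\infty])^\Omega/{\sim_Y}$. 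Given (i)--(iii), Lemma~\ref{lem:freyd-cat}(1) identifies $\MeasKer$ with the Kleisli category of $M=RJ$, and Lemma~\ref{lem:freyd-cat}(2) --- applicable since $\Qbs$ has products (Proposition~\ref{prop:qbs}) --- upgrades $M$ to a \emph{commutative} monad, with underlying endofunctor $X\mapsto(X\times[0,\infty])^\Omega/{\sim}$. Here $[0,\infty]$ is a standard Borel space, hence a quasi-Borel space (it is Borel-isomorphic to a measurable subspace of $\RR$), and one reads `$\RR$' in the statement as this space of weights. We do \emph{not} use Lemma~\ref{lem:freyd-cat}(3): $\MeasKer$ is not affine, since $1$ is not terminal there --- its endo-kernels modulo $\sim$ are classified by a total mass in $[0,\infty]$.

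For (i) and (ii), the representative-level operations are those listed in the excerpt just above the statement: to compose $(f,\ell)\colon X\rightsquigarrow Y$ with $(g,m)\colon Y\rightsquigarrow Z$ one uses the seed-splitting $\gamma$ on the $f,g$ parts exactly as for probability kernels and multiplies the weights, yielding $(x,\omega)\mapsto\big(g(f(x,\omega_1),\omega_2),\ \ell(x,\omega_1)\cdot m(f(x,\omega_1),\omega_2)\big)$ with $\gamma(\omega)=(\omega_1,\omega_2)$, and tensoring is analogous. Each of well-definedness, associativity, unitality and the interchange law then reduces, after unfolding the definition of $\sim$, to an identity between iterated Lebesgue integrals; the non-trivial move is always to use the measure-preserving property of $\gamma$ (Section~\ref{sec:probspace}) to rewrite an integral over $\Omega$ as an iterated integral and then to reorder, which is Tonelli's theorem. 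Because the weights are $[0,\infty]$-valued, positivity handles all integrability concerns on non-negative integrands, and a general test morphism $g\colon Y\to\RR$ is reduced to non-negative ones by linearity (equivalently, it suffices to test $\sim$ against bounded non-negative $g$). Strong symmetric monoidality of $J$ is then immediate, $J$ being identity on objects with coherence kernels equivalent to quasi-Borel functions seen as weight-$1$ kernels.

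For (iii), cartesian closedness of $\Qbs$ (Proposition~\ref{prop:qbs}) lets us curry: a measure kernel $X\rightsquigarrow Y$, i.e.\ the pair $\langle f,\ell\rangle\colon X\times\Omega\to Y\times[0,\infty]$, is the same as a quasi-Borel map $X\to(Y\times[0,\infty])^\Omega$, and $\sim$ is defined separately at each $x\in X$, so it descends to a relation $\sim_Y$ on the object $(Y\times[0,\infty])^\Omega$. Taking $R(Y)=(Y\times[0,\infty])^\Omega/{\sim_Y}$ as a quotient in $\Qbs$, the step to carry out is the natural bijection
\[
  \MeasKer(X,Y)\ =\ \Qbs\!\big(X,(Y\times[0,\infty])^\Omega\big)\big/{\sim}\ \ \cong\ \ \Qbs\!\big(X,R(Y)\big),
\]
natural in $X$ (and $Y$), which exhibits $R$ as right adjoint to $J$ and, together with Lemma~\ref{lem:freyd-cat}, finishes the proof. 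This is precisely the unnormalized analogue of the probability construction, so most of it can be quoted from Proposition~\ref{prop:prob-monad} and \cite{scibior}.

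I expect step (iii) to be the main obstacle, specifically showing that $R(Y)$ genuinely \emph{represents} $X\mapsto\MeasKer(X,Y)$ rather than merely admitting a natural map to it. The crux is the usual difficulty with quotient monads: one must control the admissible random elements of the quotient quasi-Borel space well enough that every morphism into $R(Y)$ can be lifted back along the quotient (at least locally, enough to define Kleisli composition and verify the triangle identities), i.e.\ that $\sim_Y$ is a genuine congruence for the quasi-Borel structure. A secondary, more bookkeeping, obstacle is making the integrals and the relation $\sim$ unambiguous in the presence of the value $\infty$ and of non-positive test maps; restricting test morphisms to bounded non-negative maps (which still determines the associated measure) and applying Tonelli only to non-negative integrands resolves this, exactly as in \cite{qbs,scibior}.
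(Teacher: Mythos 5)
Your proposal is correct and follows exactly the route the paper intends: the paper gives no proof of its own here, deferring to \cite{scibior}, \S4.3.3, but the surrounding text makes clear that the argument is to verify that measure kernels modulo equivalence form a symmetric monoidal category with an identity-on-objects strong monoidal functor from $\Qbs$ admitting a right adjoint, and then invoke Lemma~\ref{lem:freyd-cat}(1)--(2) --- which is precisely your steps (i)--(iii), including the correct observation that part (3) of the lemma is not used since $1$ is not terminal in $\MeasKer$. Your identification of the representability of $X\mapsto\MeasKer(X,Y)$ by the quotient object as the one genuinely delicate point, and the restriction to non-negative test functions to avoid $\infty-\infty$, matches how the cited work handles it.
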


\subsection{Summary, and alternative approaches and variations}
\subsubsection{Summary}
The category of quasi-Borel spaces provides a model of the metalanguage from Section~\ref{sec:interface}, as follows.
We understand \lstinline|RealNum| as the quasi-Borel space of real numbers, $\RR$. 
\begin{itemize}
\item The \lstinline|Prob| monad is induced by the category of probability kernels, and is affine and commutative, by Proposition~\ref{prop:prob-monad}.
\item The \lstinline|Meas| monad is induced by the category of measure kernels, and is commutative, by Proposition~\ref{prop:meas-monad}.
\item The morphism \lstinline|sample :: Prob a -> Meas a| is induced by the inclusion of the category of probability kernels into the category of measure kernels, taking constant weights ($\ell=1$).
\item The morphism \lstinline|score :: RealNum -> Meas ()| is induced by putting $\ell:\RR\times \Omega\to [0,\infty]$ as the absolute value of the left projection.
\end{itemize}
We also have the concrete distributions used in the examples in Section~\ref{sec:easyexamples}. All distributions on $\RR$ arise as pushforwards of the uniform distribution on $[0,1]$, so they are present;
the morphism \lstinline|iid :: Prob RealNum -> Prob (Stream RealNum)| is defined by the Kolmogorov extension theorem.

As an aside we note another clue that a distinction between \lstinline|Prob| and \lstinline|Meas| is needed. If we had a quasi-Borel morphism \lstinline|iidMeas :: Meas RealNum -> Meas (Stream RealNum)| in this measure-theoretic situation, it could be used to build an infinite-dimensional Lebesgue measure on $\RR^\NN$, which is well-known to be problematic~\cite{lebesgue-R}. By contrast an infinite-dimensional uniform probability distribution is straightforward, by Kolmogorov extension, and is very useful. 

\subsubsection{Categories of measure kernels}
The probability measures on a measurable space $X$ form a affine commutative monad on the category of measurable spaces and measure-preserving maps, called the Giry monad~\cite{giry}.
Moreover, the s-finite measure kernels between measurable spaces form a monoidal category~\cite{staton:sfinite}.
This more traditional measure-theoretic foundation forms a good intuition for our metalanguage (\S\ref{sec:interface}) (see also~\cite{kozen}), but unlike quasi-Borel spaces, it cannot actually model the metalanguage, for two reasons:
first, the category is not cartesian closed, so it does not support all function types~\cite{aumann};
second, it is not known whether the s-finite measure kernels form a Kleisli category.
These issues with the category of measurable spaces led us to use quasi-Borel spaces, where they vanish, and so we use this instead as a semantic basis. 

  \subsubsection{Alternative representations of randomized functions.}
We note a different notion of randomized function,
where the function is equipped with a parameter space (following e.g.~\cite{shiebler}; see also \cite{scibior,lew-trace-types}). 
Let us briefly define a \emph{para-randomized} function
between sets $X$ and $Y$ to be a pair $(\Omega,f)$, where $\Omega$ is a standard probability space and 
$f\colon X\times \Omega\to Y$ is a function. Unlike with our randomized functions, the seed space
is not fixed and is part of the data for a para-randomized function. 
Composition is of the form
\[f\colon X\times \Omega_1\to Y\qquad
  g\colon Y\times \Omega_2\to Z
  \qquad
  (g\circ f)\colon X\times (\Omega_1\times \Omega_2)\to Z
\]
with $(g\circ f)(x,(\omega_1,\omega_2)) =g(f(x,\omega_1),\omega_2)$.
This formulation is convenient when a function has a natural parameter space of fixed dimension such as $\RR^3$. Then composing $(f,\RR^m)$ and $(g,\RR^n)$
yields $(g\circ f,\RR^{m+n})$. This is reasonable for certain simple probabilistic programs, but in this article we are especially interested in the situation where the parameter spaces are not so simple. For example, in \S\ref{sec:easyexamples:regpiecewiselin} we compose each point of a Poisson point process, of infinite dimension,
with a random linear function; it is not so clear how to manage this straightforwardly by combining dimensions.


\subsubsection{Domain theoretic models and recursion}
The equational definitions in Sections~\ref{sec:interface}--\ref{sec:randfun} all have solutions in quasi-Borel spaces.
For general higher order recursion, one can extend quasi-Borel spaces by placing a cpo structure on the carrier, and imposing compatibility conditions, see~\cite{wqbs}. We omit the details, which are largely orthogonal. 

A different approach would be to use the recent work of~\cite{jgl-domain-probprog} to build an entirely domain theoretic model. 
There are also other potential ways to interpret laziness in probabilistic models of linear logic \cite{measurable-cones,geometryDLH,maraist1999call}.
 Here we have stuck with quasi-Borel spaces because they also connect to our implementation~(\S\ref{sec:monads:impl}).

\subsubsection{Open questions}
We briefly remark that although the category of quasi-Borel spaces accommodates all the examples from Section~\ref{sec:easyexamples}, the abstraction and generality of the metalanguage opens up challenges. For example, we do have \lstinline|iid| for spaces that are standard Borel, but Kolmogorov extension beyond that is an open question; semantic models of stochastic memoization for uncountable domain spaces (e.g.~white noise) is an open question; semantic models with Gaussian processes with discontinuous kernels is an open question.
These are all well-known challenges for traditional measure-theoretic probability, but they can now be phrased in precise terms through the metalanguage. 
(Curiously, none of these things are at all problematic in our implementation (\S\ref{sec:monads:impl}).)

\subsection{Haskell implementation of the probability and measures monads}
\label{sec:monads:impl}
In Sections~\ref{sec:randomizedfun}--\ref{sec:monads:measker} we gave a semantic model of the metalanguage (\S\ref{sec:interface}). This leads directly to our implementation in the LazyPPL library~\cite{nonanongit}.
%
For the probability space, we put $\Omega=\mlstinline{Tree}$ and $\gamma=\mlstinline{splitTree}$:
\hide{\begin{lstlisting}[columns=flexible]
 data Tree = Tree Double [Tree]
 splitTree :: Tree -> (Tree , Tree)
 splitTree (Tree r (t : ts)) = (t , Tree r ts)
\end{lstlisting}}
\[\begin{array}{@{}p{6cm}|p{8cm}}\mlstinlin{data Tree = Tree Double [Tree]}
 &\mlstinlin{splitTree :: Tree -> (Tree , Tree)}
 \\&\mlstinlin{splitTree (Tree r (t : ts)) = (t , Tree r ts)}
\end{array}
\]
A probability distribution over \lstinline|a| is a function \lstinline|Tree -> a|.
\hide{\begin{lstlisting}
 newtype Prob a = Prob (Tree -> a)
 uniform :: Prob Double
 uniform = Prob $ \(Tree r _) -> r 
 return a = Prob $ const a
 (Prob m) >>= f = Prob $ \g -> let (g(*@$\cb_1$@*),g(*@$\cb_2$@*)) = splitTree g
                               (Prob m') = f (m g(*@$\cb_1$@*))
                               in m' g(*@$\cb_2$@*)
\end{lstlisting}}
\hide{\begin{lstlisting}[columns=flexible]
 newtype Prob a = Prob (Tree -> a)
 uniform :: Prob Double
 uniform = Prob $ \(Tree r _) -> r 
 return a = Prob $ const a
 (Prob m) >>= f = Prob $ \g -> let {(g(*@$\cb_1$@*),g(*@$\cb_2$@*)) = splitTree g; (Prob m') = f (m g(*@$\cb_1$@*))} in m' g(*@$\cb_2$@*)
\end{lstlisting}}
\begin{align*}
&\begin{array}{@{}p{6cm}|p{8cm}}\mlstinlin{newtype Prob a = Prob (Tree -> a)}\qquad
&\mlstinlin{uniform :: Prob Double}
\\ \mlstinlin{return a = Prob $ const a}
   &\mlstinlin{uniform = Prob $ \(Tree r _) -> r }
     \end{array}\\&
 \mlstinlin{(Prob m) >>= f = Prob $ \\g -> let \{}\mlstinline{(g$\cb_1$,g$\cb_2$) = splitTree g; (Prob m') = f (m g$\cb_1$)}\mlstinline{\} in m' g$\cb_2$}
\end{align*}
Note that although the type looks like the reader monad, the bind is different. 
A similar bind is used in QuickCheck~\cite[\S6.4]{quickcheck}, although we are not aware of 
a semantic analysis in the literature. 

We implement the measures monad using the writer monad transformer. 
Because weights multiply, they often become very small, and so we use log numbers. 
\newline

\noindent\hide{\begin{lstlisting}[columns=flexible]
 newtype Meas a = Meas (WriterT (Product (Log Double)) Prob a)
 score :: Double -> Meas ()
 score r = Meas $ tell $ Product $ (Exp . log) $ r
 sample :: Prob a -> Meas a
 sample p = Meas $ lift p
\end{lstlisting}}%
$\begin{aligned}
&\mlstinline{newtype Meas a = Meas (WriterT (Product (Log Double)) Prob a)}\\
 &\begin{array}{@{}p{6cm}|p{8cm}}
                                             \mlstinlin{sample :: Prob a -> Meas a}&
    \mlstinlin{score :: Double -> Meas ()}
\\\mlstinlin{sample p = Meas \$ lift p}\quad
& \mlstinlin{score r = Meas \$ tell \$ Product \$ (Exp . log) r}\end{array}
\end{aligned}$
\newline

\label{sec:monads:lwis}
We provide several inference methods in~\cite{nonanongit}. A simple reference method is a likelihood-weighted importance sampler, following e.g.~\cite[\S4.1]{meent-book}, which is just 20 lines of Haskell. 
This has type
\lstinline|lwis :: Int -> Meas a -> IO [a]|.
Running~\lstinline|(lwis |$n$\lstinline| m)| produces a stream of samples from the unnormalized measure \lstinline|m|.
\hide{and uses the following pseudocode: 
\begin{enumerate}
\item generate $n$ pairs of weighted samples from \lstinline|m|, i.e.~pairs
$(w_i,x_i)$ of a weight $w_i$ and result $x_i$ in~\lstinline|a|, each according to the following algorithm:
\begin{enumerate}
\item lazily initialize a random rose tree \lstinline|t :: Tree|, allocating a fresh random number to each~node;  
\item run the program \lstinline|m :: Meas a| with the rose tree \lstinline|t|; 
\item return the pair $(w,x)$ of the resulting value and the accumulated weight. 
\end{enumerate}
\item generate an infinite stream of samples from the discrete `empirical' distribution 
$\sum_{i=1}^n \frac{w_i}{\sum_{i=1}^n w_i} \cdot x_i$:
\begin{enumerate}
\item at each step, pick a random number $r$ uniformly in the interval $[0,\sum_{i=1}^nw_i]$;
\item if $r$ is in the interval $[\sum_{i=1}^{j-1} w_i,\sum_{i=1}^j w_j]$ then output $x_j$.
\end{enumerate}
\end{enumerate}
As $n\to \infty$, the stream of samples appears to be  almost surely converges to the normalized probability distribution. 
But in practice, for small $n$, it is not usually very good, and so we look at a better algorithm in Section~\ref{sec:msmh:mhg}. }
As $n\to \infty$, the stream of samples converges to a stream of iid samples from the normalized probability distribution corresponding to \lstinline|m|. 
But in practice, for feasible $n$, it is not usually very accurate, and so we look at a better algorithm in Section~\ref{sec:msmh}.


\section{A new Metropolis-Hastings kernel for laziness}

\label{sec:msmh}

In Section~\ref{sec:monads} we showed that a closed program in the metalanguage (\S\ref{sec:interface}) of
type \mlstinline{Meas $\ X$} induces a pair of functions
\raisebox{0pt}[11pt]{$\RR\xleftarrow \ell \Omega \xrightarrow {f} X$},
where $\Omega$ is regarded with a basic probability measure~$p$,
and $\ell$ is measurable. Here $\ell$ is regarded as a density for an unnormalized distribution on $\Omega$,
which is then to be pushed forward to~$X$, which is the space of interest. 
We now describe a new Markov-Chain Monte Carlo inference algorithm that works for programs in the metalanguage under this semantics, which works well for the examples we have considered here (\S\ref{sec:easyexamples},\ref{sec:randfun}).

Notice that there are four measures of interest:
\begin{itemize}
\item The basic probability measure $\mu$ on $\Omega$;
\item The unnormalized measure $\mu_\ell$ on $\Omega$, induced by regarding $\ell$ as a density. Formally,
  $\mu_\ell(U)=\int_\Omega [\omega\in U]\cdot \ell(\omega)\,\mu(\dd \omega)$.
  This could be written in the metalanguage as \[\mlstinline{do \{$\cb \omega\ $ <- sample $\cb\ \mu$ ; score ($\cb \ell\,\omega$); return $\cb\ \omega$\} :: Meas $\cb\ \Omega$}
    \text.\]
  (Here, and throughout Section~\ref{sec:msmh:green} and the proof of Thm.~\ref{thm:mutatetrees-green}, we are using the metalanguage to discuss and manipulate semantic measures -- these are not necessarily programs to be run directly, by contrast with Section~\ref{sec:easyexamples}.)
\item The \emph{normalized} form of the measure $\mu_\ell$, $\frac {\mu_\ell} {\mu_\ell(\Omega)}$, which is a probability measure, assuming $\mu_\ell(\Omega)\in (0,\infty)$. 
\item The pushforward probability measure on $X$, $f^*(\frac {\mu_\ell} {\mu_\ell(\Omega)})$.
  This could be written \[\mlstinline{do \{$\cb \omega\ $ <- sample $\cb\ \mu$ ; score ($\cb \ell\,\omega$); score ($1/({\cb \mu_\ell}\ {\cb\Omega})$) ; return ($\cb f\,\omega$)\} :: Meas $\cb\ X$}\text.\]
\end{itemize}
The challenge is that the normalizing constant $\mu_\ell(\Omega)$ is typically difficult to calculate.
The Markov-Chain Monte Carlo simulation algorithms provide a sampling procedure for $\frac {\mu_\ell} {\mu_\ell(\Omega)}$ on $\Omega$, without explicitly calculating $\mu_\ell(\Omega)$.
They are best described as algorithms over $\Omega$, rather than $X$, although we can push-forward the samples to $X$ at the last minute.

\subsection{Proposal kernels in general}
\label{sec:msmh:green}

The key ingredient for a Metropolis-Hastings algorithm is a `proposal' Markov kernel. 
This is a function $k:\Omega\times \Sigma_\Omega\to [0,1]$ such that each $k(\omega,-)$ is a probability measure and each $k(-,U)$ is measurable. We follow the analysis of proposal kernels from~\cite{rjmcmc,geyer}.

The proposal kernel~$k$ does not directly capture the probability measure  $\frac {\mu_\ell} {\mu_\ell(\Omega)}$. Rather, it induces another kernel, which works by first proposing changes (using $k$) and then either accepting or rejecting them
(\S\ref{sec:msmh:mhg}). This depends on an `acceptance ratio' which exists as long as $k$ satisfies the Green property, that we define now. 

Given a Markov kernel $k$ we can form an unnormalized kernel by composing it with the unnormalized measure $\mu_\ell$.
This gives two measures $m,m_{\mathit{rev}}$ on $\Omega\times \Omega$:
\begin{align*}
  m(U)\ &=\ \int_\Omega\int_\Omega [(\omega_1,\omega_2)\in U] \cdot \ell(\omega_1)\, k(\omega_1,\dd \omega_2)\, \mu(\dd \omega_1)\\
  m_{\mathit{rev}}(U)\ &=\ \int_\Omega\int_\Omega [(\omega_2,\omega_1)\in U] \cdot \ell(\omega_1)\, k(\omega_1,\dd \omega_2)\, \mu(\dd \omega_1)
\end{align*}
These can be described in programming terms as
\begin{align*}m&=\mlstinline{do \{$\cb\omega_1\ $ <- sample $\;\cb\mu\ $ ; $\ \cb\omega_2\ $ <- sample ($\cb k\,\omega_1$) ; score ($\cb l \omega_1$) ; return ($\cb \omega_1$,$\cb\omega_2$)\}}
  \\  m_{\mathit{rev}}\ &= \mlstinline{do \{$\cb\omega_1\ $ <- sample $\;\cb \mu\ $ ; $\ \cb\omega_2\ $ <- sample ($\cb k\,\omega_1$) ; score ($\cb l\,\omega_1$) ; return ($\cb\omega_2$,$\cb\omega_1$)\}}
\end{align*}
\begin{definition}[\cite{rjmcmc}]\label{def:greenk}
  We say that a kernel $k$ is \emph{Green} with respect to $\ell$ and $\mu$ if
  $m_{\mathit{rev}} $ is absolutely continuous with respect to $m$. This means that there exists a `ratio' 
  $r:\Omega\times \Omega\to \RR$ (the `Radon-Nikodym derivative') such that
  \[\int [(\omega_1,\omega_2)\in U] \cdot r(\omega_1,\omega_2)\cdot \ell(\omega_1)\ k(\omega_1,\dd \omega_2)\,\mu(\dd\omega_1)
    =\int [(\omega_2,\omega_1)\in U] \cdot \ell(\omega_1)\ k(\omega_1,\dd \omega_2)\,\mu(\dd\omega_1)
  \]
  or in programming terms
  \begin{align*}
    &\mlstinline{do \{$\cb\omega_1\ $ <- sample $\;\cb \mu\ $; $\ \cb\omega_2\ $ <- sample ($\cb k\,\omega_1$); score ($\cb \ell\,\omega_1$); score ($\cb r \;\omega_1\;$ $\cb\omega_2$); return ($\cb\omega_1$,$\cb\omega_2$)\}}\\&=
          \mlstinline{do \{$\cb\omega_1\ $ <- sample $\;\cb \mu\ $; $\ \cb\omega_2\ $ <- sample ($\cb k\, \omega_1$); score ($\cb \ell\, \omega_1$); return ($\cb\omega_2$,$\cb\omega_1$)\}}
         \end{align*}
\end{definition}

\subsection{A new proposal kernel for lazy rose trees}
\label{sec:msmh:rose}
Recall our choice of $\Omega$ is rose trees: infinitely deep and infinitely wide trees labelled from $[0,1]$, with the basic probability measure~$\mu$ giving the
uniform distribution to all nodes.
We consider a new proposal kernel, parameterized by a probability $p\in[0,1]$:
\begin{itemize}\item for every node, toss a coin with bias~$p$; if heads, resample from the uniform distribution on $[0,1]$, if tails, leave it alone.\end{itemize}
This requires an infinite number of changes, but since probability is treated lazily, there is no problem in practice. 
\begin{lstlisting}
 mutateTree :: RealNum -> Tree -> Prob Tree
 mutateTree p (Tree a ts) = do b <- bernoulli p
                               a' <- uniform
                               ts' <- mapM (mutateTree p) ts
                               return $ Tree (if b then a' else a) ts'
   \end{lstlisting}
   This can be defined measure-theoretically using Kolmogorov's extension theorem. 
\begin{theorem}\label{thm:mutatetrees-green}
  The kernel $k:\Omega\times \Sigma_\Omega\to [0,1]$ given by \lstinline|(mutateTree p)| is Green, 
  and the ratio is $r(\omega_1,\omega_2)=\frac{\ell(\omega_2)}{\ell(\omega_1)}$. 
\end{theorem}
\begin{proof}[Proof notes]
  Notice that $k$ is reversible with respect to $\mu$ in that
  \[\mlstinline{do \{$\cb\omega_1\ $ <- $\ \cb\mu$; $\ \cb\omega_2\ $ <- $\ \cb k\,\omega_1$; return ($\cb\omega_1$,$\cb\omega_2$)\}} \ \ = \ \ \mlstinline{do \{$\cb\omega_1\ $ <- $\ \cb \mu$; $\ \cb\omega_2\ $ <- $\ \cb k\,\omega_1$; return ($\cb\omega_2$,$\cb\omega_1$)\}}\]
  This can be deduced from Kolmogorov's extension theorem, by proving it for finite projections. 
  Therefore the given $r$ is indeed a ratio, since
  \begin{align*}
    &\mlstinline{do \{$\cb\omega_1\;$ <- sample $\;\cb \mu$; $\;\cb\omega_2\;$ <- sample ($\cb k \,\omega_1$); score ($\cb\ell\,\omega_1$); score ($\cb r\,\omega_1\,\omega_2$); return ($\cb\omega_1$,$\cb\omega_2$)\}}
    \\&=
  \mlstinline{do \{$\cb\omega_1\;$ <- sample $\;\cb \mu$; $\;\cb\omega_2\;$ <- sample ($\cb k\,\omega_1$); score ($\cb\ell\,\omega_2$); return ($\cb\omega_1$,$\cb\omega_2$)\}}
    \\&=
  \mlstinline{do \{$\cb\omega_1\;$ <- sample $\;\cb \mu$; $\;\cb\omega_2\;$ <- sample ($\cb k \,\omega_1$); score ($\cb\ell\,\omega_1$); return ($\cb\omega_2$,$\cb\omega_1$)\}}
    \end{align*}
as required, where the second step uses the reversibility of $k$ with respect to $\mu$. 
\end{proof}
\paragraph{Technical note.} Our $k$ is reversible in the given sense, and this appears to be a `Metropolis ratio'.
But because our space $\Omega$ is infinite-dimensional, the traditional density-based analysis 
of Metropolis does not apply, whereas this more general approach by Green does. 
\subsection{The Metropolis-Hastings-Green Markov Chain}
\label{sec:msmh:mhg}
 Let $k: \Omega\times \Sigma_\Omega\to [0,1]$ be a Green Markov kernel with ratio $r:\Omega\times\Omega\to [0,1]$.  
The \emph{Metropolis-Hastings-Green kernel} $k_{\mathit{MHG}}: \Omega\times \Sigma_\Omega\to [0,1]$ is now given by
\emph{proposing} a new $\omega_2$ via $k(\omega_1,-)$, and accepting or rejecting the proposal according to $\min(1,r(\omega_1,\omega_2))$. 
Either way, we produce something, either the new $\omega_2$ or the old $\omega_1$. 
\begin{lstlisting}[columns=flexible]
 (*@$\cb k_{\mathit{MHG}}$@*) :: (*@$\Omega $@*) -> Prob (*@$\Omega $@*)
 (*@$\cb k_{\mathit{MHG}}$@*) (*@$\cb\omega_1$@*) = 
  do {(*@$\cb\omega_2$@*) <- (*@$\cb k\,\omega_1$@*) ; b <- bernoulli $ min 1 ((*@$\cb r\,\omega_1\,\omega_2$@*)) ; if b then return (*@$\cb\omega_2$@*) else return (*@$\cb\omega_1$@*)}
\end{lstlisting}
We can then construct a Markov chain with transitions given by $k_{\mathit{MHG}}$. The key result (e.g. \cite{rjmcmc,geyer}) is that when $k$ is well-behaved, the states of this Markov chain approximate the posterior distribution. Theorem~\ref{th:mhg} says this formally. Recall that a probability measure $\nu$ on $\Omega$ is a stationary distribution for a kernel $k : \Omega \times \Sigma_\Omega \to [0, 1]$ if 
\[
\int_{\Omega} k(\omega, U) \cdot  \nu(\dd \omega) = \nu(U).
\]
We say that $k$ is irreducible with respect to a probability measure $\xi$ if for every $\omega \in \Omega$ and for every $U \in \Sigma_\Omega$ such that $\xi(U) > 0$, there exists $n \in \NN$ such that $k^n(\omega, U) > 0$. Informally, irreducibility means that the Markov chain will reach any set of positive measure in finite time.
\begin{theorem}[Metropolis-Hastings-Green]\label{th:mhg}
For any Green kernel $k$, the induced kernel $k_{\mathit{MHG}}$ has a stationary distribution, which is the normalized probability measure 
$\frac{\mu_\ell}{\mu_\ell(\Omega)}$ on $\Omega$. 
If $k_{\mathit{MHG}}$ is irreducible with respect to $\frac{\mu_\ell}{\mu_\ell(\Omega)}$ then the stationary distribution is unique.
\end{theorem}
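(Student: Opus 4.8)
The plan is to derive Theorem~\ref{th:mhg} from the single fact that $k_{\mathit{MHG}}$ is \emph{reversible} with respect to the probability measure $\nu := \mu_\ell/\mu_\ell(\Omega)$; stationarity is then immediate, and the uniqueness clause is a citation to standard Markov-chain theory. First I would unfold the definition of $k_{\mathit{MHG}}$ into its ``accept'' and ``reject'' parts: writing $a\wedge b$ for $\min(a,b)$, $k_{\mathit{MHG}}(\omega_1,U) = \int_\Omega (1\wedge r(\omega_1,\omega_2))\,[\omega_2\in U]\,k(\omega_1,\dd\omega_2) + [\omega_1\in U]\int_\Omega\bigl(1-(1\wedge r(\omega_1,\omega_2))\bigr)\,k(\omega_1,\dd\omega_2)$. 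Reversibility means that the joint measure $\nu(\dd\omega_1)\,k_{\mathit{MHG}}(\omega_1,\dd\omega_2)$ on $\Omega\times\Omega$ is invariant under the swap involution $\sigma(\omega_1,\omega_2)=(\omega_2,\omega_1)$; taking the second marginal of this identity gives $\int_\Omega k_{\mathit{MHG}}(\omega,U)\,\nu(\dd\omega)=\nu(U)$, which is exactly the stationarity claim.

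For the reversibility computation, the reject part of $\nu(\dd\omega_1)\,k_{\mathit{MHG}}(\omega_1,\dd\omega_2)$ is supported on the diagonal $\{\omega_1=\omega_2\}$ and is therefore trivially $\sigma$-invariant, so everything reduces to the accept part, which up to the constant $\mu_\ell(\Omega)$ equals $(1\wedge r(\omega_1,\omega_2))\,m(\dd\omega_1,\dd\omega_2)$, with $m$ and $m_{\mathit{rev}}$ as in \S\ref{sec:msmh:green}. Using the Green property (Definition~\ref{def:greenk}) in the integrated form $\int g\,\dd m_{\mathit{rev}} = \int g(\omega_1,\omega_2)\,r(\omega_1,\omega_2)\,m(\dd\omega_1\,\dd\omega_2)$ together with $m_{\mathit{rev}}=\sigma_* m$ (the pushforward of $m$ along $\sigma$), the $\sigma$-image of the accept part rewrites as $(1\wedge r(\omega_2,\omega_1))\,r(\omega_1,\omega_2)\,m(\dd\omega_1\,\dd\omega_2)$, so the whole argument comes down to the pointwise identity $(1\wedge r(\omega_2,\omega_1))\,r(\omega_1,\omega_2) = 1\wedge r(\omega_1,\omega_2)$ for $m$-almost every $(\omega_1,\omega_2)$.

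The crux is therefore the reciprocity relation $r(\omega_2,\omega_1)=1/r(\omega_1,\omega_2)$, $m$-a.e., after which the identity is the elementary fact $(1\wedge s^{-1})\,s = 1\wedge s$ for $s>0$. I would establish reciprocity as follows: pushforward along the involution $\sigma$ preserves absolute continuity, so $m_{\mathit{rev}}=\sigma_* m \ll m$ forces $m=\sigma_* m_{\mathit{rev}}\ll m_{\mathit{rev}}$; hence $m$ and $m_{\mathit{rev}}$ are mutually absolutely continuous, $r>0$ $m$-a.e., and $\dd m/\dd m_{\mathit{rev}}=1/r$ $m$-a.e., while on the other hand $\dd m/\dd m_{\mathit{rev}} = \dd(\sigma_* m_{\mathit{rev}})/\dd(\sigma_* m) = (\dd m_{\mathit{rev}}/\dd m)\circ\sigma = r(\omega_2,\omega_1)$, by uniqueness of Radon--Nikodym derivatives. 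Comparing the two expressions gives the claim. (For the concrete kernel of Theorem~\ref{thm:mutatetrees-green}, where $r(\omega_1,\omega_2)=\ell(\omega_2)/\ell(\omega_1)$, reciprocity holds on the nose wherever $\ell$ is positive, which is a useful sanity check.)

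Finally, for uniqueness: once $\nu$ is known to be stationary, the hypothesis that $k_{\mathit{MHG}}$ is irreducible with respect to $\nu$ says the chain is $\nu$-irreducible, and a standard result on general state-space Markov chains (e.g.~\cite{rjmcmc,geyer}) is that a $\phi$-irreducible kernel admits at most one invariant probability measure; hence $\nu$ is the unique stationary distribution. I expect the main obstacle to be the reciprocity step and the surrounding measure-theoretic bookkeeping: the abstract Green property postulates only $m_{\mathit{rev}}\ll m$ with \emph{some} ratio $r$, so one must be careful on the null ``irreversibility'' set where $\ell$ or $r$ degenerate (on which both sides of every integral identity in fact place no mass, so it can be discarded). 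The remaining ingredients --- unfolding $k_{\mathit{MHG}}$, the diagonal argument, and the marginalization step --- are routine, and are markedly simpler for the specific proposal kernel of \S\ref{sec:msmh:rose}.
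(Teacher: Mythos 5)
Your proposal is correct, but note that the paper does not actually prove Theorem~\ref{th:mhg}: it states it as a known result and defers to the literature on reversible-jump MCMC (Green) and Geyer's survey. What you have written is essentially the standard Tierney-style proof, and it is sound. The decomposition of $\nu(\dd\omega_1)\,k_{\mathit{MHG}}(\omega_1,\dd\omega_2)$ into a diagonal reject part (trivially swap-invariant) and an accept part proportional to $(1\wedge r)\,m$ is right, as is the reduction to the pointwise identity $(1\wedge r(\omega_2,\omega_1))\,r(\omega_1,\omega_2)=1\wedge r(\omega_1,\omega_2)$ and the marginalization step from reversibility to stationarity. The most delicate point is the one you identify: the paper's Definition~\ref{def:greenk} only postulates $m_{\mathit{rev}}\ll m$, whereas the general treatment in the literature works on the symmetric set where $m$ and $m_{\mathit{rev}}$ are mutually absolutely continuous and handles the complement separately. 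Your observation that $m_{\mathit{rev}}=\sigma_*m$ with $\sigma$ an involution forces $m\ll m_{\mathit{rev}}$ automatically (apply $m_{\mathit{rev}}\ll m$ to $\sigma A$) is correct and neatly collapses that case analysis, giving $r\in(0,\infty)$ $m$-a.e.\ and the reciprocity $r\circ\sigma=1/r$ via uniqueness of Radon--Nikodym derivatives; after that, $(1\wedge s^{-1})s=1\wedge s$ for $s>0$ finishes the computation. The uniqueness clause is, as you say, a citation to the standard fact that a $\phi$-irreducible kernel admits at most one invariant probability measure. In short: the paper buys this theorem by reference; your argument supplies a correct self-contained proof adapted to the paper's particular (one-sided) formulation of the Green property.
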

We can therefore use the Metropolis-Hastings-Green kernel as a method for sampling from the normalized probability measure.
\begin{proposition}\label{prop:mh1irr}
  For the \lstinline|mutateTree| kernel (\S\ref{sec:msmh:rose}) with $p=1$, $k_{\mathit{MHG}}$ is irreducible for $\frac{\mu_\ell}{\mu_\ell(\Omega)}$.
\end{proposition}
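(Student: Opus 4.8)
The plan is to exploit the fact that when $p=1$ the \lstinline|mutateTree| proposal degenerates into an \emph{independence sampler}. Indeed, \lstinline|mutateTree 1| resamples the label of every node from the uniform distribution, so by Kolmogorov's extension theorem the proposal kernel is simply $k(\omega_1,-)=\mu$, with no dependence on the current state $\omega_1$. Combining this with the ratio $r(\omega_1,\omega_2)=\ell(\omega_2)/\ell(\omega_1)$ supplied by Theorem~\ref{thm:mutatetrees-green}, I claim that $k_{\mathit{MHG}}$ can move from any state into any target-positive set in a \emph{single} step, so that $n=1$ always witnesses irreducibility.

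First I would record the lower bound obtained by counting only ``accepted'' moves: for any $\omega_1\in\Omega$ and any $U\in\Sigma_\Omega$,
\[
  k_{\mathit{MHG}}(\omega_1,U)\ \geq\ \int_\Omega [\omega_2\in U]\cdot \min\!\Big(1,\tfrac{\ell(\omega_2)}{\ell(\omega_1)}\Big)\,\mu(\dd\omega_2),
\]
discarding the non-negative ``rejection'' contribution. Next I would unfold the hypothesis $\frac{\mu_\ell}{\mu_\ell(\Omega)}(U)>0$ into the equivalent statement $\mu\big(U\cap\{\ell>0\}\big)>0$, using $\mu_\ell(U)=\int_U\ell\,\dd\mu$ and the decomposition $\{\ell>0\}=\bigcup_n\{\ell>1/n\}$ (if $\mu(U\cap\{\ell>0\})=0$ the integral vanishes; conversely some $\{\ell>1/n\}\cap U$ has positive measure and contributes at least $\tfrac1n\mu(\{\ell>1/n\}\cap U)$).

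The core estimate is then immediate. Fix $\omega_1$ with $0<\ell(\omega_1)<\infty$ (note $\ell$ is real-valued here, since \lstinline|score| takes only real arguments). For the $n$ with $\mu(U\cap\{\ell>1/n\})>0$, the integrand $\min(1,\ell(\omega_2)/\ell(\omega_1))$ is bounded below by $\min(1,1/(n\,\ell(\omega_1)))>0$ on $U\cap\{\ell>1/n\}$, so the integral, and hence $k_{\mathit{MHG}}(\omega_1,U)$, is strictly positive. For the remaining states, those with $\ell(\omega_1)=0$, I would invoke the standard convention that the acceptance probability is taken to be $1$ when the current density vanishes; then $k_{\mathit{MHG}}(\omega_1,-)=\mu$ and again $k_{\mathit{MHG}}(\omega_1,U)\geq\mu(U\cap\{\ell>0\})>0$. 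Thus for every $\omega_1$ and every target-positive $U$ we have $k_{\mathit{MHG}}(\omega_1,U)>0$, i.e.\ $n=1$ works, which is exactly irreducibility.

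The one point requiring care --- more a bookkeeping matter than a genuine obstacle --- is the treatment of the degenerate states with $\ell(\omega_1)=0$ (and, if one allows $[0,\infty]$-valued densities, $\ell(\omega_1)=\infty$), where the Radon--Nikodym ratio of Theorem~\ref{thm:mutatetrees-green} is only pinned down $m$-almost everywhere; one must fix the convention for $\min(1,r)$ on this $\mu$-null set so that $k_{\mathit{MHG}}$ is defined on all of $\Omega$ and the one-step argument genuinely covers those points. Everything else reduces to the routine fact that a strictly positive integrand over a positive-measure set has positive integral. I note that for $p<1$ this single-step argument breaks down and a more delicate analysis would be needed, which is presumably why the statement is restricted to $p=1$.
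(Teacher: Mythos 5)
Your argument is correct and is exactly the one the paper intends: its entire proof note is ``Here $n=1$ suffices,'' and your write-up is the full elaboration of that one-step claim, using that \lstinline|mutateTree 1| is an independence sampler with proposal $\mu$ and bounding the acceptance probability below on a positive-measure subset of $U\cap\{\ell>0\}$. The care you take over the $\ell(\omega_1)\in\{0,\infty\}$ states and the almost-everywhere ambiguity of the Green ratio goes beyond what the paper records, but it does not change the approach.
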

\begin{proof}[Proof note]
  Here $n=1$ suffices. 
\end{proof}
We recall that correctness of a similar `all-sites' Metropolis-Hastings scheme for probabilistic programming was proved in~\cite{bdgs-lambdacalc-prob},
albeit for a non-lazy language.

There remains a concern that $k_{\mathit{MHG}}$ is not irreducible for $p<1$.
Indeed, in that situation, the set $U=\{\omega'~|~\forall i.\,\omega_i\neq \omega_i'\}$ is not reachable from $\omega$, even though $U$ typically has measure~$1$.  More informally, although every node has a chance of being changed, there will almost surely exist a node that is not changed.
In practice, \lstinline|(mutateTree p)| alone appears to be fine, because any finite collection of samples will only invoke a finite number of nodes anyway. We return to this point in Section~\ref{sec:ssmhmix}.
\subsection{Summary and example}\label{sec:msmh:summary}
In summary, we have a procedure for sampling from the distribution described by a program in the metalanguage (\S\ref{sec:interface}), by using the Metropolis-Hastings-Green kernel (\S\ref{sec:msmh:mhg})
associated to the Green Markov kernel~\lstinline|(mutateTree p)| (\S\ref{sec:msmh:rose}). 
Each step of the algorithm provides a sample from the measure $\frac{\mu_\ell}{\mu_\ell(\Omega)}$ on $\Omega$,
and we can push-forward this sample along $f\colon \Omega\to X$ to obtain a sample from the measure described by the program.

To illustrate, we recall the simple linear regression model (\S\ref{sec:easyexamples:reglin}). Although we are using an infinite tree, only two samples will be used, for the slope~\lstinline|a| and 
intercept~\lstinline|b|. 
If we use our kernel with $p=0.5$, at each step, \emph{one} of the following steps will happen, each with probability $0.25$. 
\begin{itemize}
\item We will change neither \lstinline|a| nor~\lstinline|b|. (This is a wasted step.)
\item We try to change the slope \lstinline|a| but keep the intercept \lstinline|b| the same. This is useful if they are independent. 
\item We try to change the intercept \lstinline|b| but keep the slope \lstinline|a| the same. Again, this is useful if they are independent. 
\item We try to change both the slope \lstinline|a| and the intercept \lstinline|b|. This is sometimes called `multisite' inference, and is useful if they are correlated.
\end{itemize}
As is always the case with general purpose methods, it is non-optimal if the independence and correlations are known. But our algorithm serves well where they are not known, and moreover works perfectly well with the lazy structures used in the probability monad.


\section{Mixed kernels and single-site Metropolis-Hastings}

\label{sec:single-site}



There are many possible variations on the generic inference algorithm in Section~\ref{sec:msmh}.
In this section we consider the mixture of kernels --- randomly choosing between different kernels at each step. 
This has at least two useful applications:
\begin{itemize}
\item Mixing \lstinline|(mutateTree p)| with \lstinline|(mutateTree 1)|, which ensures a unique stationary distribution, and intuitively allows the entire tree to be reset sometimes, which can be useful for exploring multimodal distributions (\S\ref{sec:ssmhmix});
\item A single-site proposal kernel, where we mutate exactly one node in each step, following~Wingate et al.~\cite{lightweight-mh} (\S\ref{sec:single-site:proposal}).
\end{itemize}
\hide{The inference algorithm in Section~\ref{sec:msmh} randomly mutates each
node (also sometimes referred to as \emph{site}) in the lazy rose tree
with some given bias $p$ at every proposal step.
Such an inference technique may not always be desirable, however,
e.g.~in situations where we want to control the exact number of nodes being
mutated in each proposal (so that our mutations are more gradual): we can never guarantee
a constant fixed number of mutations due to them happening independently of each other. }

\hide{suggest a proposal kernel where
only a single node is mutated at every proposal step. In this Section we discuss our own implementation of this
in LazyPPL. We found this to be particularly profitable where it is unclear which~$p$ to use, as in the clustering example (\S\ref{sec:dp}).}

\hide{\subsection{Overview}
Recall that our source of randomness comes from an infinitely deep and wide
tree. Here we suggest a proposal kernel \lstinline|mutateSite| roughly of the type
\lstinline|Tree -> Prob Tree|, which works by only mutating a \emph{single} uniformly
random node (aka site) in our tree. We say ``roughly'' because we do not uniformly 
randomly choose a site from a countably infinite set (which is impossible).
Instead, we consider the \emph{finite} set of sites in the tree accessed by the
probabilistic program and mutate a random site from there.
Intuitively the idea
is to only change sites which directly affect the current computation, ensuring fewer
wasteful runs of the proposal step where irrelevant sites are changed.
This is very stateful...
}
\subsection{State-dependent mixing in general}
\label{sec:single-site:theory}
 We consider the following general method for mixing Green kernels (Def.~\ref{def:greenk}), which is perhaps implicit in~\cite{rjmcmc,geyer}. 
Let $k_i:\Omega\times \Sigma_\Omega\to[0,1]$ be a countable family of Markov kernels (\S\ref{sec:msmh:green}),
and let $c\colon \Omega\times\NN\to [0,1]$ be a parameterized probability distribution function over $\NN$,
i.e.~for all $\omega\in\Omega$, $\sum_{i=1}^\infty c(\omega,i)=1$.
Let $k$ be the mixed kernel
\begin{equation}k(\omega,U)=\textstyle \sum_{i=1}^\infty c(\omega,i)\cdot k_i(\omega,U)\text.\label{eqn:mixkernel}\end{equation}
  Suppose that each $k_i$ is a Green kernel with respect to $\mu$, with 
  ratio $r_i:\Omega\times\Omega\to[0,\infty]$ (Def.~\ref{def:greenk}). 
Suppose that we can always detect which kernel was used, i.e.~there is a function
$e:\Omega\times\Omega\to \NN$ such that
\mlstinline{do \{$\omega$ <- $\mu$; $\ {\omega^\prime}$ <- $k_i\,\omega$; return ($\omega$,$\omega^\prime$,$e(\omega,\omega^\prime)$)\} = do \{$\omega$ <- $\mu$; $\ \omega^\prime$ <- $k_i\,\omega$; return ($\omega$,$\omega^\prime$,$i$)\}}. 

\begin{theorem}\label{thm:state-dependent-mixing}
  The kernel $k$ \eqref{eqn:mixkernel} is a Green kernel with respect to $\mu$, with ratio
  $r\colon \Omega\times \Omega\to [0,\infty]$
  \[r(\omega,\omega')=r_i(\omega,\omega')\cdot \frac{c(\omega',i)}{c(\omega,i)}
    \quad\text{where $i=e(\omega,\omega')$\text.}
  \]
\end{theorem}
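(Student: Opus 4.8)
The plan is to verify the Green property of Definition~\ref{def:greenk} for $k$ directly, by splitting everything according to the index $i$ of the component kernel that was used. Fix the ambient density $\ell$, and write $m,m_{\mathit{rev}}$ for the two measures on $\Omega\times\Omega$ associated with $k$ as in Section~\ref{sec:msmh:green}. Since $k(\omega,U)=\sum_{i} c(\omega,i)\,k_i(\omega,U)$, we get $m=\sum_i \tilde m^{(i)}$ and $m_{\mathit{rev}}=\sum_i \tilde m^{(i)}_{\mathit{rev}}$, where
\[
\tilde m^{(i)}(U)=\int_\Omega\int_\Omega [(\omega_1,\omega_2)\in U]\cdot \ell(\omega_1)\,c(\omega_1,i)\,k_i(\omega_1,\dd\omega_2)\,\mu(\dd\omega_1),
\]
and $\tilde m^{(i)}_{\mathit{rev}}$ is the same integral with the pair swapped. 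So it suffices to compute, for each $i$, a Radon--Nikodym derivative of $\tilde m^{(i)}_{\mathit{rev}}$ with respect to $\tilde m^{(i)}$, and then assemble these.

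The assembly uses the detection hypothesis on $e$. Since $\mu$ is a probability measure and each $k_i$ is Markov, the joint $\nu^{(i)}(\dd\omega_1,\dd\omega_2)=k_i(\omega_1,\dd\omega_2)\,\mu(\dd\omega_1)$ is a probability measure, and the hypothesis says that its pushforward along $(\omega_1,\omega_2)\mapsto(\omega_1,\omega_2,e(\omega_1,\omega_2))$ sits on $\Omega\times\Omega\times\{i\}$. Hence $\nu^{(i)}$, and therefore $\tilde m^{(i)}$ and (via the absolute continuity established below) $\tilde m^{(i)}_{\mathit{rev}}$, are concentrated on $A_i:=\{(\omega,\omega')\mid e(\omega,\omega')=i\}$. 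As $e$ is a function, the $A_i$ form a measurable partition of $\Omega\times\Omega$; so once each $\tilde m^{(i)}_{\mathit{rev}}\ll\tilde m^{(i)}$ is in hand, summing gives $m_{\mathit{rev}}\ll m$ with $\frac{\dd m_{\mathit{rev}}}{\dd m}$ equal on $A_i$ to $\frac{\dd\tilde m^{(i)}_{\mathit{rev}}}{\dd\tilde m^{(i)}}$. On $A_i$ we have $e=i$, so this will match the stated formula for $r$.

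For the per-component step, let $m^{(i)}(\dd\omega_1,\dd\omega_2)=\ell(\omega_1)\,k_i(\omega_1,\dd\omega_2)\,\mu(\dd\omega_1)$ be the measure attached to $k_i$ itself, so $\tilde m^{(i)}$ has density $c(\omega_1,i)$ with respect to $m^{(i)}$. Green-ness of $k_i$ with ratio $r_i$ amounts to $\int G(\omega_2,\omega_1)\,m^{(i)}(\dd\omega_1,\dd\omega_2)=\int G(\omega_1,\omega_2)\,r_i(\omega_1,\omega_2)\,m^{(i)}(\dd\omega_1,\dd\omega_2)$ for all nonnegative measurable $G$. Applying this with $G(\omega_1,\omega_2)=H(\omega_1,\omega_2)\,c(\omega_2,i)$, and recalling $\tilde m^{(i)}_{\mathit{rev}}$ is the swap of $c(\omega_1,i)\,m^{(i)}$, one obtains $\int H\,\dd\tilde m^{(i)}_{\mathit{rev}}=\int H(\omega_1,\omega_2)\,r_i(\omega_1,\omega_2)\,c(\omega_2,i)\,m^{(i)}(\dd\omega_1,\dd\omega_2)$; that is, $\tilde m^{(i)}_{\mathit{rev}}$ has density $r_i(\omega_1,\omega_2)\,c(\omega_2,i)$ with respect to $m^{(i)}$. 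Dividing the two densities over the common $m^{(i)}$ gives $\frac{\dd\tilde m^{(i)}_{\mathit{rev}}}{\dd\tilde m^{(i)}}(\omega_1,\omega_2)=r_i(\omega_1,\omega_2)\cdot\frac{c(\omega_2,i)}{c(\omega_1,i)}$, which on $A_i$ is exactly the claimed $r$. In programming terms I would carry this out symbolically as in the proof of Theorem~\ref{thm:mutatetrees-green}: expand $k$ into the mixture, insert the detection equation to replace $i$ by $e(\omega_1,\omega_2)$ in the three score factors $c(\omega_2,i)$, $1/c(\omega_1,i)$, $r_i(\omega_1,\omega_2)$, and then invoke the Green identity for $k_i$.

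The main obstacle is the final division: it is legitimate only where $c(\omega_1,i)>0$, and for $\tilde m^{(i)}_{\mathit{rev}}\ll\tilde m^{(i)}$ one also needs the set $\{\,c(\cdot,i)=0\,\}$ in the first coordinate to be $\tilde m^{(i)}_{\mathit{rev}}$-null --- equivalently, by a further use of the Green identity, that from a state $\omega_1$ where component $i$ fires, the proposal $\omega_2$ almost surely still has $c(\omega_2,i)>0$. This is automatic when every $c(\omega,i)$ is strictly positive (as in the concrete mixtures of Section~\ref{sec:ssmhmix}); in general one simply restricts to the support of $c(\cdot,i)$, which is the only place the value of $r$ is ever used. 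Everything else --- the mixture decomposition, the partitioning by $e$, the Green identity for $k_i$, and the gluing --- is routine, so I expect this positivity/absorption bookkeeping to be the only delicate point.
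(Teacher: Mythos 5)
Your argument is sound, and there is nothing in the paper to compare it against: Theorem~\ref{thm:state-dependent-mixing} is stated without proof (the paper only remarks that it is ``perhaps implicit'' in the MCMC literature). Your route --- decomposing $m=\sum_i\tilde m^{(i)}$ and $m_{\mathit{rev}}=\sum_i\tilde m^{(i)}_{\mathit{rev}}$, computing the two densities $c(\omega_1,i)$ and $r_i(\omega_1,\omega_2)\,c(\omega_2,i)$ with respect to the common reference measure $m^{(i)}$ via the Green identity for $k_i$, and gluing along the partition $A_i=\{e=i\}$ using the detection hypothesis --- is the standard one, and each step checks out.

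The one substantive point is the positivity condition you flag at the end, and you are right to single it out, but you understate it: it is not ``bookkeeping'' that can be handled by ``restricting to the support of $c(\cdot,i)$'' --- it is a genuinely missing hypothesis without which the statement is false, because the absolute continuity $m_{\mathit{rev}}\ll m$ itself can fail. A two-point example: take $\Omega=\{a,b\}$ with $\mu$ uniform and $\ell\equiv 1$, let $k_1$ be the deterministic swap and $k_2$ the identity (both Green with ratio $1$), let $c(a,\cdot)=(1,0)$ and $c(b,\cdot)=(0,1)$, and let $e$ return $1$ off the diagonal and $2$ on it (this satisfies the detection hypothesis). Then $m$ charges only $(a,b)$ and $(b,b)$ while $m_{\mathit{rev}}$ gives mass $\tfrac12$ to $(b,a)$, so $k$ is not Green. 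Hence the condition you isolate --- that whenever $c(\omega_1,i)>0$, the proposal $\omega_2\sim k_i(\omega_1,-)$ almost surely satisfies $c(\omega_2,i)>0$ --- must be added as an explicit hypothesis, not treated as a convention about where $r$ is evaluated. It does hold in both of the paper's applications: trivially in \S\ref{sec:ssmhmix}, where the mixture weights are constants in $(0,1)$, and for the single-site kernel of \S\ref{sec:single-site:proposal} because if site $i$ is inspected when the program runs on $\omega$, and $\omega'$ differs from $\omega$ only at site $i$, then the two runs agree up to the first read of site $i$, so $i\in S_{\omega'}$ and $c(\omega',i)>0$. With that hypothesis stated, your proof is complete.
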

\subsection{Application 1: mixing for unique stationary distributions}\label{sec:ssmhmix}
A special case of Theorem~\ref{thm:state-dependent-mixing} is independent mixing, which is a simple way of combining kernels, and a way of building irreducible MHG kernels (i.e.~with unique stationary distributions, \S\ref{sec:msmh:mhg}):
\begin{proposition}
  If $k$ and $k'$ are Green kernels, and $k$ induces an irreducible MHG kernel, then for $r\in (0,1)$,  the kernel $(r\cdot k + (1-r)\cdot k')$ again induces an irreducible MHG kernel. 
  \end{proposition}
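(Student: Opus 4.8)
The plan is to reduce the statement to Theorems~\ref{thm:state-dependent-mixing} and~\ref{th:mhg}, the only genuinely new ingredient being a short domination estimate for irreducibility. Throughout, write $\nu=\frac{\mu_\ell}{\mu_\ell(\Omega)}$ for the target measure and $k''=r\cdot k+(1-r)\cdot k'$ for the mixed proposal kernel. First I would apply Theorem~\ref{thm:state-dependent-mixing} to the two-element family $\{k,k'\}$ with the constant mixing weights $c(\omega,1)=r$, $c(\omega,2)=1-r$, and a detection function $e\colon\Omega\times\Omega\to\{1,2\}$ recording which of the two kernels produced a given transition (when the two proposal mechanisms overlap, one first passes to the tagged kernels on $\Omega\times\{1,2\}$, which leaves the induced chain on $\Omega$ unchanged). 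This yields that $k''$ is Green with respect to $\mu$, and, since $c$ does not depend on $\omega$, the ratio of Theorem~\ref{thm:state-dependent-mixing} collapses to $r''(\omega,\omega')=r_{e(\omega,\omega')}(\omega,\omega')$, i.e.\ the ratio of whichever component kernel was used; in particular $k''$ does induce an MHG kernel $k''_{\mathit{MHG}}$.

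Next I would record the key decomposition $k''_{\mathit{MHG}}=r\cdot k_{\mathit{MHG}}+(1-r)\cdot k'_{\mathit{MHG}}$. Indeed, a step of $k''_{\mathit{MHG}}$ proposes from the mixture $k''(\omega,-)$ and accepts with probability $\min(1,r''(\omega,\omega'))$; since a proposal drawn from the $k$-component is accepted with probability $\min(1,r_k)$ and one drawn from the $k'$-component with probability $\min(1,r_{k'})$, this is the same as tossing a bias-$r$ coin and then performing one step of $k_{\mathit{MHG}}$ or of $k'_{\mathit{MHG}}$ accordingly. From this the stationary distribution is immediate: by Theorem~\ref{th:mhg} both $k_{\mathit{MHG}}$ and $k'_{\mathit{MHG}}$ fix $\nu$, hence so does their convex combination $k''_{\mathit{MHG}}$ (equivalently, apply the first clause of Theorem~\ref{th:mhg} to the Green kernel $k''$).

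It remains to show $k''_{\mathit{MHG}}$ is irreducible with respect to $\nu$, which is the only point needing an argument beyond the cited theorems. Here I would use a domination estimate: expanding the $n$-fold composite of $k''_{\mathit{MHG}}=r\cdot k_{\mathit{MHG}}+(1-r)\cdot k'_{\mathit{MHG}}$ and discarding every nonnegative term except the one that uses $k_{\mathit{MHG}}$ at each of the $n$ steps gives the pointwise bound $(k''_{\mathit{MHG}})^n(\omega,U)\geq r^n\,(k_{\mathit{MHG}})^n(\omega,U)$ for all $\omega$, all $U\in\Sigma_\Omega$, and all $n$. Given $U$ with $\nu(U)>0$ and any $\omega$, irreducibility of $k_{\mathit{MHG}}$ supplies some $n$ with $(k_{\mathit{MHG}})^n(\omega,U)>0$, and then $(k''_{\mathit{MHG}})^n(\omega,U)>0$ since $r>0$. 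Finally, applying the uniqueness clause of Theorem~\ref{th:mhg} to $k''$ --- whose MHG kernel $k''_{\mathit{MHG}}$ has stationary distribution $\nu$ and is now irreducible with respect to it --- shows that stationary distribution is unique, so $k''_{\mathit{MHG}}$ is an irreducible MHG kernel.

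The main obstacle I anticipate is not the irreducibility step, which is the one-line domination argument above, but the bookkeeping around the detection function $e$ when the supports of $k(\omega,-)$ and $k'(\omega,-)$ are not disjoint: one must check that Theorem~\ref{thm:state-dependent-mixing} still delivers a legitimate ratio for $k''$, which the tagging device handles cleanly.
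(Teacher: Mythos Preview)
Your proof is correct and follows the same approach as the paper: Greenness of the mixture via Theorem~\ref{thm:state-dependent-mixing}, then irreducibility via the domination bound $(k''_{\mathit{MHG}})^n\geq r^n(k_{\mathit{MHG}})^n$. You are in fact more careful than the paper's brief proof notes, which apply the domination estimate with the proposal kernels $k,k'$ written in place of the MHG kernels; your explicit verification that $k''_{\mathit{MHG}}=r\cdot k_{\mathit{MHG}}+(1-r)\cdot k'_{\mathit{MHG}}$ (using that the constant mixing weights make the ratio factor $c(\omega',i)/c(\omega,i)$ vanish) fills that gap.
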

  \begin{proof}[Proof notes]
    The kernel $(r\cdot k + (1-r)\cdot k')$ is Green by Theorem~\ref{thm:state-dependent-mixing}. 
    For irreducibility, consider~$\omega$ and $U$; since $k$ is irreducible there is $n$ such that $k^n(\omega,U)>0$, so $(r\cdot k + (1-r)\cdot k')^n(\omega,U)> r^n\cdot k^n(\omega,U) > 0$. 
  \end{proof}
  In particular, completing the discussion from Section~\ref{sec:msmh:mhg}, \lstinline|(mutateTree p)| might not induce an irreducible kernel, but if an irreducible kernel is desired then we can instead start from the mixed kernel
  $(r\cdot (\mlstinline{mutateTree 1}) + (1-r)\cdot (\mlstinline{mutateTree p}))$, using Prop.~\ref{prop:mh1irr}.

\subsection{Application 2: single-site proposal kernel for lazy rose trees}
\label{sec:single-site:proposal}
The single-site Metropolis-Hastings proposal kernel is a popular generic kernel for probabilistic programming (\cite{lightweight-mh},\cite[\S4.2.1]{meent-book}). In a finite-dimensional situation, the idea is to randomly pick one dimension and change it, leaving the other dimensions unchanged.

The subtlety here is that in a lazy program, such as the examples in Sections~\ref{sec:easyexamples} and~\ref{sec:randfun}, the number of dimensions is unbounded, and so it is a priori impossible to pick one dimension uniformly at random. Nonetheless, we now use Theorem~\ref{thm:state-dependent-mixing} to show that there is actually a well-behaved analogue of this kernel that is relevant where there is lazy structure. The idea is to inspect the dimensions that are actually used in a given computation.

\paragraph{High-level view}
\label{sec:single-site-high-level}
Recall the representation of probabilistic programs developed in Sections~\ref{sec:monads} and~\ref{sec:msmh}, with $\Omega$ the infinite rose trees,
and weight function $\ell:\Omega\to[0,\infty]$, and an outcome $\Omega\to X$.
We describe the single-site proposal kernel at this level.
To do this, we instantiate state-dependent mixing as follows. We work up-to a bijection between natural numbers and finite paths through the rose tree, which are countably infinite. 
\begin{itemize}
\item For each path~$i$ through the rose tree, let $k_i$ be the kernel that randomly
  changes node~$i$ and leaves the others unchanged.
  This is a Green kernel with ratio $\ell(\omega')/\ell(\omega)$.
\item If $\omega$ and $\omega'$ differ by only one node, then let $e(\omega,\omega')$ return the path to this node.
\item For any given tree $\omega$, we define $c(\omega,i)$ as follows.
  First, if $\ell$ and $f$ are defined by programs that terminate, then we calculate the necessarily finite set of nodes $S_\omega=\{i_1\dots i_n\}$ that are actually inspected in evaluating $\ell(\omega)$ and $f(\omega)$.
  We then pick one at random, i.e.~let $c(\omega,i)= \frac 1{|S_\omega|}$ if $i\in S_\omega$, and $c(\omega,i)= 0$ otherwise.
\item Following Theorem~\ref{thm:state-dependent-mixing}, we can calculate the Green ratio as $\frac{\ell(\omega')\cdot |S_{\omega}|}{\ell(\omega)\cdot |S_{\omega'}|}$. 
\end{itemize}

\paragraph{Implementation details}
Lazy evaluation is the sole reason why we are even able to consider `the set of nodes
in the tree that have been evaluated' in any given run of our probabilistic program, and
it ensures that no irrelevant sites are present in that set (i.e. those sites which do
not affect the outcome of the result of that run). In our implementation of the single-side proposal kernel, we go under the hood and inspect system
memory from within Haskell to calculate this set of sites $S_\omega$.
The \lstinline|ghc-heap| module
exposes the parts of the tree that have been evaluated 
to weak-head normal form and parts still
untouched (present in memory as thunks), and using this we can safely inspect the runtime
evaluation state of our random tree without forcing any further computation on it.

\paragraph{A note on performance}
General purpose Metropolis-Hastings methods cannot be expected to out-perform hand-crafted methods, or more specialized Monte Carlo algorithms \cite{meent-book}, and so we have not carried out a detailed performance evaluation, but the performance was perfectly adequate for the examples in Sections~\ref{sec:easyexamples} and~\ref{sec:randfun}. Our main intention for this section was to demonstrate that the popular single-site proposal kernels can still be used in the lazy setting.
That said, we can make some brief anecdotal remarks: we found that sometimes, where there are very many independent sites, such as in cluster assignment, the single-site method will perform better, whereas in many other situations the method of Section~\ref{sec:msmh} performs well; in multimodal situations, such as mixture models, we found it is beneficial to use Section~\ref{sec:ssmhmix}, allowing a small chance of resetting the tree and exploring a totally different region. 

Our illustrations in Figures~\ref{fig:regression} and~\ref{fig:ibp} (see~\cite{nonanongit} for more) show that these inference methods produce reasonable results in practice. We also remark briefly on the time/space usage of the current implementation in Haskell. The three regression illustrations in Figure~\ref{fig:regression} took around 2s, 18s and 19s respectively (AWS EC2 t3.large). For instance, in Figure~\ref{fig:regression}(b) we ran $10^6$ steps of the Markov chain, thinning the samples for legibility. Although these timings are fine, in our experience, memory usage can be a concern with larger models in the current implementation. For example, we tested a program induction example which is highly multimodal (see~\cite{nonanongit}), and it typically needs more than 8GB RAM to produce $10^6$ samples. 



\hide{\subsubsection{Calculating the nodes that are used in evaluation}
Lazy evaluation is the sole reason why we are even able to consider `the set of nodes
in the tree that have been evaluated' in any given run of our probabilistic program, and
it ensures that no irrelevant sites are present in that set (i.e. those sites which do
not affect the outcome of the result of that run). By going under the hood and inspecting system
memory (from within Haskell) we are able to calculate this set of sites $S_\omega$.

Crucially, we make use of GHC's \lstinline|ghc-heap| module, which exposes the layout
of heap objects in memory allowing us to view function closures, pointers to thunked
objects, and several other GHC runtime-specific objects present in memory.  By analysing
the parts of the tree that have been evaluated to weak-head normal form and parts still
untouched (present in memory as thunks), we are able to safely inspect the \emph{runtime
evaluation state} of our random tree without forcing any further computation on it.
We encapsulate our use of \lstinline|ghc-heap| in a method truncating lazy infinite rose trees into finite partial trees, as follows. 
We first define the type of partial trees where unevaluated branches are explicitly marked.
\begin{lstlisting}
 data PTree = PTree (Maybe Double) [Maybe PTree]
\end{lstlisting}
Following this we define \lstinline|trunc :: Tree -> IO PTree|, the key function we use in
our implementation of single-site MH. It takes a lazy rose tree as input, inspects its structure in
the Haskell heap, and from it builds the partial tree.
For example, if the evaluation of a program forces the partial evaluation of a rose tree in memory
to be \lstinline[breaklines=true]|t = Tree _ (_ : (Tree 0.3 _) : _ : (Tree 0.4 _) : _)|
(underscores represent unevaluated thunks), its \lstinline|trunc|ation will return
\lstinline[breaklines=true,keepspaces=true]|PTree Nothing [Nothing, Just (PTree (Just 0.3) []), Nothing, Just (PTree (Just 0.4) [])]|.
It is from this partial tree that we identify the finite set of used nodes $S_\omega$, which is the key ingredient in the proposal step (\S\ref{sec:single-site-high-level}). We then uniformly pick a site to modify, and follow the generic Metropolis-Hastings-Green algorithm (\S\ref{sec:msmh:mhg}).}

\hide{\subsubsection{A note on irreducibility}
\label{sec:single-site:irreducibility}
A Markov chain constructed using our single-site MH need not always be irreducible: this depends on the program.
This can immediately be seen in the coin-toss example \lstinline|test| from \S\ref{sec:hardexamples:addressing}
where the only way to go from \lstinline|(x,y) = (True,True)| to \lstinline|(False,False)| is via either 
\lstinline|(True,False)| or \lstinline|(False,True)| (since only one variable can
be changed at a time), but those are \lstinline|score 0|
regions which will never get accepted, forcing all our samples to be constant.}

\hide{To illustrate why formal irreducibility should perhaps not be an end goal in itself, notice that
we can make this particular kernel formally irreducible by replacing the \lstinline|score 0| 
with an extremely small non-zero value. But in practice our samples will still be
constant due to the tiny probability of accepting the intermediate sample.}

\hide{Another general approach would be to use Theorem~\ref{thm:state-dependent-mixing} to mix single-site MH with another kernel that is more reliably irreducible,
such as the kernel from Section~\ref{sec:msmh}.}

\hide{\subsection{Discussion}
\subsubsection{Differences between single-site MH and lightweight MH}
Although mostly similar, there are some key subtle differences---both theoretically and in
implementation---between lightweight MH and single-site MH.
Unlike in the original lightweight MH implementation, we delegate the task of obtaining
the current set of active sites to Haskell's lazy evaluator, rather than reimplementing
any ad-hoc evaluation methods ourselves.

Avoiding addressing issues, cite earlier section.

Any other notable LMH implementations to compare to?

\subsubsection{Suitability to various probabilistic models}
\label{sec:single-site:suitability}
 Like other inference methods in probabilistic programming, the effectiveness
of single-site MH depends on the kind of probabilistic model it is used on. 
As we have already noted, the irreducibility of the Markov chain generated
by \lstinline|mutateSite| will depend on the model under consideration and the way in which
it has been defined. It is usually possible to guarantee the irreducibility of the Markov
chain simply by redefining the program appropriately, but there is no mechanical procedure
to achieve this.

Probabilistically clustering points is a nice example where the usefulness of single-site
MH is visible. Assume we have a set of points (say, on the real line) and that we want
to assign to each point one of two Gaussian distributions on the real line (with some
fixed mean and variance).
For such a probabilistic program, the sites in the random tree will end up indicating
which cluster the corresponding points will belong to.
In terms of our proposal kernels this procedure amounts to proposing clusters for each
point by mutating tree sites.
We argue that in this case modifying only one site of the tree
at a time (i.e. changing cluster assignments one at a time) is more effective due to
its gradual nature of proposals, enabling us to arrive at more likely cluster assignments
more quickly.
Mutating multiple sites at once will still eventually converge to the same final
distribution, but useful progress made towards approaching a likely clustering could
easily be reset if too many points' clusters get reassigned at once.
Furthermore, care will need to be taken in choosing the best bias $p$ in MHG, which, at the
very least, will depend on the number of points present in the dataset. Such fine-tuning
of parameters is not necessary in single-site MH. The situation for MHG gets even worse
when the number of points can vary over time.

Single-site MH has its drawbacks too. It could, for example, be avoided in models
exhibiting high correlations between random variables because proposals differing only
by a single changed site could be more likely to be rejected (as opposed to when multiple
sites are changed).

\subsubsection{Modifications to single-site MHG}
To alleviate the above issue of correlated variables, one could choose modify the proposal
kernel to mutate more than a single site, but to still pick sites from a set of
evaluated sites. To our knowledge such kernels have not been studied before,
and so it remains to be seen how useful they could be from a theoretical point of view.
Fortunately, implementing such proposal kernels is not too difficult in our 
memory-inspecting framework, making LazyPPL a useful environment for testing out
such prototypes.

Going further one could even consider combining MHG with single-site MH, where before
each proposal step one could flip a coin to decide whether to independently
probabilistically mutate each site as in MHG or mutate just a single site. Clustering
problems and models with correlated variables can benefit from such a setting.
}


\section{Related work on laziness and practical synthetic probability}
\label{sec:relatedwork}
Our aim in this work is to study the power of types and laziness as a practical synthetic measure theory.
Our work is inspired by many other developments on the practical front.
\subsection{Laziness in probabilistic programming languages}
The Church project~\cite{goodman:church} is a major inspiration for our work. Although Church is an eager language, it has a primitive memoization construct (c.f.~\S\ref{sec:easyexamples:stochmem}). This leads to a programming style for lazy behaviour: instead of writing
\lstinline[mathescape]|do {x <- t; y <- u; z <- v; $\dots$}|
and expecting lazy evaluation, one can write (roughly)
\lstinline[mathescape]|f <- memoize(\i -> case i of {1 -> t; 2 -> u; 3 -> v}); $\dots$|
with eager evaluation, and use \lstinline|f 1|, \lstinline|f 2|, \lstinline|f 3|
in place of \lstinline|x|, \lstinline|y|, \lstinline|z| respectively.
Although this is an unusual programming style, it is usable nonetheless. 
Since Church is untyped, the precise connection with our metalanguage (\S\ref{sec:interface}) is unclear, and the semantics seems slightly different.
But the connection with
non-parametric statistics is heavily emphasized, for example in the analysis of stick-breaking~\cite{goodman:church,Roy-ChurchNP} and exchangeable primitives~\cite{wu-church}.
In summary, from a bird's eye view, our metalanguage (\S\ref{sec:interface}) and implementation~\cite{nonanongit} form a variation of Church with more idiomatic laziness, a type system and a semantics.

Languages such as Anglican~\cite{anglican}, WebPPL~\cite{dippl}, BayesDB~\cite{saad-bayesdb} and Turing~\cite{bloem-reddy-lazy} follow within the tradition of Church, exploring ideas from non-parametric statistics further.
The Birch language~\cite{birch} is class-based, transpiling to C++ 
(and so the connection to Section~\ref{sec:interface} is unclear),
but Birch heavily uses laziness and advanced control flow manipulations in its inference methods~\cite{delayed_sampling,lazy-object-copy}.

Beyond these examples, laziness has been explored in various aspects of probabilistic programming, dating back at least as far as the pioneering work by Koller et al.~\cite{pfeffer-lazy}, and more recently in the work on lazy factored inference in Figaro~\cite{pfeffer-lazyfunction}, and efficient implementation in delimited continuations through Hansei~(\cite{kiselyov-shan-ocaml}, which focuses on discrete distributions). 

\subsection{Other probabilistic programming work using Haskell and quasi-Borel spaces}
Various libraries have exploited Haskell for probabilistic programming.
Hakaru~\cite{narayanan2016probabilistic,Narayanan-shan2020,shan} provides a DSL with impressive symbolic inference methods. Stochaskell~\cite{roberts} provides a DSL \hide{, embedded in Haskell,}which compiles to Stan, Church and other back-ends. Stochaskell moreover allows a limited form of lazy lists, implemented via Church's memoization.

Our work here is most heavily inspired by MonadBayes~\cite{monadbayes}, which is a monad-based implementation of a variety of inference combinators, also inspired by the formalism of quasi-Borel spaces~\cite{scibior}. Originally, MonadBayes was not fully lazy: the Metropolis-Hastings simulation was based on the state monad and did not support laziness. The LazyPPL project grew out of adding laziness to MonadBayes, leading to the developments and the more natural expression of the various examples in this paper. For simplicity, we have focused here on the Metropolis-Hastings inference, but in practice it would be appropriate to adapt other MonadBayes inference combinators to the lazy setting. This was explored in a recent version of MonadBayes, using a portion of the LazyPPL implementation \cite{monadbayesWebsite,tweag-blog-post}.

Going beyond the inference combinators of MonadBayes, quasi-Borel spaces are also a foundation for a new dependent type system based on `trace types'~\cite{lew-trace-types} (prototyped in Haskell). This provides a well-typed account of the `programmable inference' that makes recent languages such as Gen~\cite{gen} and Pyro~\cite{bingham2018pyro} so powerful in practice. Wasabaye~\cite{wasabaye} connects trace types to Haskell's type-level strings. A possibly fruitful direction would be to generalize the traces allowed in trace types to accommodate the laziness and rose-tree-based sample space that we use in this paper. 

Further beyond our aims here, quasi-Borel spaces have also found profit in many other areas of probabilistic programming, including program logics (e.g.~\cite{qbs-logic,prob-log-ho-adversarial}) and functional languages for probabilistic network verification (\cite{Plonk}). 

\subsection{Other implementations of synthetic probability theory}
Finally we note two other approaches to metalanguages for categorical probability theory.
The first is the EfProb library~\cite{efprob}, a python library inspired by the effectus theory foundation for probability~\cite{effectus}.
The second is a python/F\# library for exact conditioning over Gaussian-based models~\cite{gaussianinfer}, inspired by categorical constructions over
Markov categories~\cite{dario,fritz}.
These approaches are currently focused on more refined notions of conditional probability, 
in contrast to our approach which is based on the measure-theoretic foundations of general purpose Monte Carlo-based inference.


\section{Summary}
We have presented a metalanguage for lazy probabilistic programming with
two monads (for probability and measure, \S\ref{sec:interface}) and new Metropolis-Hastings-based algorithms (\S\ref{sec:msmh}, \S\ref{sec:single-site}). The methods are based on recent foundations from quasi-Borel spaces and synthetic probability theory (\S\ref{sec:monads}).

The separation into two monads is essential for decidability reasons (Theorem~\ref{thm:iid}, Prop.~\ref{prop:undecidable}), but also yields a useful programming idiom for a variety of infinite-dimensional Bayesian models, including piecewise linear regression (\S\ref{sec:easyexamples:regpiecewiselin}), non-parametric clustering (\S\ref{sec:dp}), non-parametric feature extraction (\S\ref{sec:ibp}), and Gaussian process regression (\S\ref{sec:gp}), and compositions of these. As we have shown (for instance by considering rescaling), laziness allows for compositional programming, avoiding the problem of passing around truncation bounds.

\bibliography{./bibliography}


\end{document}